\documentclass[11pt]{article}



\def\arXiv{1}


\ifx\comments\undefined
  \usepackage[letterpaper,margin=1in,includefoot]{geometry}
  \newcommand{\XXXcomment}[1]{}
  \newcommand{\XXXcommentR}[1]{}
  \newcommand{\XXXcommentG}[1]{}
\else
  \usepackage[letterpaper,margin=.5in,right=2.5in,marginpar=2in,includefoot]{geometry}
  \newcommand{\XXXcomment}[1]{\marginpar{\color{blue}{\footnotesize #1}}}
  \newcommand{\XXXcommentR}[1]{\marginpar{\color{red}{\footnotesize #1}}}
  \newcommand{\XXXcommentG}[1]{\marginpar{\color{green}{\footnotesize #1}}}
\fi


\ifx\arXiv\undefined
  \usepackage[pdftex,
              bookmarks=true,
              bookmarksnumbered=true,
              pdfborder={0 0 0},
              plainpages=false,
              unicode,
              pdfpagelabels]{hyperref}
\fi

\usepackage{amssymb,amsthm,latexsym,amsmath,mathrsfs,stmaryrd,graphicx,fancyhdr,paralist,longtable}

\usepackage{bussproofs}

\usepackage{linguex}

\ifx\arXiv\undefined
  \usepackage[sort]{natbib}
\fi

\usepackage{float}
\floatstyle{boxed}
\restylefloat{figure}
\restylefloat{table}

\usepackage{caption,float}
\captionsetup{font=footnotesize,
              labelfont=bf,
              labelsep=period}


\usepackage{tikz}
\usetikzlibrary{trees,arrows,automata,shapes,decorations,decorations.pathmorphing}

\newcommand{\mystretch}{\renewcommand{\arraystretch}{1.5}}
\newcommand{\normalstretch}{\renewcommand{\arraystretch}{1}}



\newcommand{\imp}{\to}                         

\newcommand{\may}[1]{\langle{#1}\rangle}       

\newcommand{\sem}[1]{\llbracket{#1}\rrbracket} 

\newcommand{\col}{\,{:}\,}                     

\newcommand{\RBB}{{\mathsf{RBB}}}              
\newcommand{\QRBB}{{\mathsf{QRBB}}}            

\newcommand{\JL}{{\mathsf{JL}}}                
\newcommand{\JTB}{{\mathsf{JTB}}}              
\newcommand{\NIL}{{\mathsf{NIL}}}              

\theoremstyle{definition} 
\newtheorem{definition}{Definition}[section]
\newtheorem{theorem}[definition]{Theorem}

\newtheorem{lemma}[definition]{Lemma}
\newtheorem{corollary}[definition]{Corollary}


\newcommand{\ourtitle}{Knowledge, Justification, and Adequate Reasons}

\newcommand{\paulENS}{Paul \'Egr\'e}

\newcommand{\paulENSinstitute}{CNRS / ENS / EHESS\\ \small{PSL University}}

\newcommand{\paulMIT}{Paul Marty}

\newcommand{\paulMITinstitute}{ZAS Berlin}

\newcommand{\bryan}{Bryan Renne}

\newcommand{\bryanInstitute}{University of Saskatchewan}

\newcommand{\bryanFunding}{Bryan Renne was funded by an Innovational
  Research Incentives Scheme Veni grant from the Netherlands
  Organisation for Scientific Research (NWO) and was employed by ILLC Amsterdam at the start of this project.}
  
\newcommand{\paulENSFunding}{This is a substantially revised and expanded version of our earlier paper titled ``Knowledge, Justification and Reason-Based Belief'', a short version of which appeared in the \emph{Proceedings of the Amsterdam Colloquium 2015}, ed. by T. Brochhagen, F. Roelofsen and N. Theiler (pp. 100-108).
Paul \'Egr{\'e} was funded 
  by the ANR
  project ``Trivalent Logics and Natural Language Meaning'' (ANR-14-CE30-0010) and by the ANR program FrontCog ANR-17-EURE-0017 for research conducted at the Department of Cognitive Studies of ENS.}
  
  \newcommand{\martyFunding}{Paul Marty was funded by the MIT Linguistics Department when this project started.}

\ifx\arXiv\undefined
\hypersetup{ 
  pdfauthor={\paulENS{}, \paulMIT{}, \bryan{}}, 
  pdftitle={\ourtitle{}}
}
\fi

\title{\ourtitle{}\thanks{\paulENSFunding{} \bryanFunding{} \martyFunding}}
\renewcommand\footnotemark{}

\author{
  \paulENS{}\\
  \small\paulENSinstitute{}
  \and
  \paulMIT{}\\
  \small\paulMITinstitute{}
  \and
  \bryan{}\\
  \small\bryanInstitute{}
}

\date{}

\pagestyle{fancy}

\fancyhf{}
\fancyfoot[c]{\thepage}

\begin{document}
\maketitle
\thispagestyle{fancy}

\begin{abstract}
  \noindent Is knowledge definable as justified true belief (``JTB'')?
  We argue that one can legitimately answer positively or negatively,
  depending on whether or not one's true belief is justified by what
  we call \emph{adequate} reasons.  To facilitate our argument we
  introduce a simple propositional logic of \emph{reason-based belief}, and
  give an axiomatic characterization of the notion of adequacy for
  reasons.  We show that this logic is sufficiently flexible to
  accommodate various useful features, including quantification over
  reasons.  We use our framework to contrast two notions of JTB: one
  internalist, the other externalist. We argue that Gettier cases
  essentially challenge the internalist notion but not the externalist
  one. Our approach commits us to a form of infallibilism about
  knowledge, but it also leaves us with a puzzle, namely whether
  knowledge involves the possession of only adequate reasons, or
  leaves room for some inadequate reasons. We favor the latter
  position, which reflects a milder and more realistic version of
  infallibilism.
  \end{abstract}

{\small \noindent {\bf Keywords:} Knowledge; Belief; Justification Logics; Adequate Reasons; Infallibilism; Externalism}

\section{Introduction}

Can the ordinary concept of knowledge be defined in terms of justified
true belief (``JTB'')? Since Gettier's paper \cite{Get63:A}, the
answer to this question is widely considered negative. Gettier
produced two cases intended to show that a belief can be true,
justified, and yet fall short of knowledge. The first concerns Smith,
an applicant for a job who has ``strong evidence'' that Jones is the
man who will get the job and also that Jones has ten coins in his
pocket. Unknown to Smith, it turns out that Smith himself has ten
coins in pocket and is actually the one selected for the job. Smith's
belief that ``the man who will get the job has ten coins in his
pocket'' is therefore true and justified, but it seems inappropriate
to say that this belief constitutes knowledge. The second case is one
in which Smith believes a false proposition $p$ based on persuasive
evidence for $p$ and infers from $p$ some true proposition $p \lor q$
by picking the true disjunct $q$ at random. Here too, Smith
justifiably believes $p \lor q$, but it seems incorrect to say that he
knows $p\lor q$.

 Our point of departure in this paper is the following: even though we
agree with the force of Gettier's examples, we share with others (in particular \cite{Dre71:CR,Sos74:how, Sos79:presupp,Chis77:TOK,Gol79:WJB,Tur12:S}) the intuition that those examples do not necessarily invalidate every analysis of knowledge in terms of justified true belief, depending on how the notion of justification is understood. 
Indeed, what
Gettier's examples show is that an agent can have an
internal justification for believing a proposition that is \emph{plausible}, without that
justification being properly \emph{adequate} to the truth of the
proposition in question. But if so, Gettier cases must only show that
knowledge is not identical with JTB under an internalist conception of
justification. The examples do not thereby rule out the existence of a
more externalist notion of justification capable of sustaining the
equation between knowledge and JTB. Define knowledge as true belief
with an \emph{adequate} justification, and it appears Gettier cases no
longer have a bite. Our leading intuition in that regard is shared with the earlier analyses of knowledge by Chisholm in \cite{Chis77:TOK} and Sosa in \cite{Sos74:how, Sos79:presupp}, who define knowledge as the possession of a \emph{non-defective} justification, and by Dretske in \cite{Dre71:CR}, who defines knowledge as the possession of a \emph{conclusive} reason.

Admittedly, a definition of knowledge along those lines might not provide a noncircular or
reductive analysis (see \cite{Wil00:Book}): notions of ``non-defectiveness'', ``conclusiveness'', or indeed ``adequacy''
may ultimately have to be understood in ways that presuppose a prior
grasp of the concept of knowledge. For example, if an adequate
justification were to mean ``a justification that is suitable to make
the belief count as knowledge,'' then it would appear that we define
knowledge in terms of itself. We agree with this objection, but the
notion of adequacy may also turn out to not wholly depend on epistemic
notions. Adequacy, for instance, may be at least partly characterizable in terms
of truth-making, and the truth-making relation need not refer to prior
epistemic notions. Or consider the relation between a fully formalized
axiomatic proof and a mathematical statement derived in that proof:
the ``adequacy'' of the proof as a vehicle for mathematical truth is a
purely syntactic notion, with no epistemological concepts
presupposed. These examples suggest to us that room remains for a
fruitful investigation of the concepts of knowledge, belief, and
justification that acknowledges the distinction between adequate and
inadequate reasons.

The gist of our account lies in the distinction
between reasons that \emph{(merely) support} belief in a proposition
and reasons that are not only supportive but are also what we call
\emph{adequate}. Our first goal in this paper is to give an axiomatic characterization of both concepts, and to use them to clarify the duality found in the concept of justified true belief. Our characterization treats adequate reasons basically as externalist noninferential justifications, following Fumerton's typology in \cite{Fum02:TOJ}. They are noninferential in the sense that we do not require agents to be able to justify their plausibility by a further reason, and externalist in that adequacy is not necessarily a property an agent can ascertain. Like Chisholm and Sosa in their respective treatments of nondefective justification, or Dretske in his treatment of conclusive reasons, we moreover treat adequate reasons as being infallible, in the sense that they can only support true propositions. The latter property is not a definition of adequacy, however, but only a central property, as we shall explain.

A second goal we have on that basis is more technical: it consists, in
the wake of work done in Justification Logic
\cite{Art08:RSL,ArtFit11:SEP}, in giving an explicit treatment of
reasons in epistemic logics but with specific emphasis on the notion
of an adequate reason.\footnote{\cite{Art08:RSL,ArtFit11:SEP} uses a
  formal framework to track what goes wrong with specific lines of
  reasoning in the examples of Gettier, Goldman, and Kripke; and
  \cite{BalRenSme12:LNCS,BalRenSme14:APAL} uses a related framework
  that has additional features from Belief Revision Theory to reason
  about the examples of Lehrer and Gettier. Our work here is
  different.  First, while inspired by the Justification Logic
  approach to reasoning about justifications, our setting is in many
  respects simplified but at the same time includes certain novel
  features (see \S\ref{section:closure} for details). Second, our task
  here is different than that in \cite{Art08:RSL,ArtFit11:SEP}: beyond
  providing a formal diagnosis of what goes wrong in certain
  Gettier-type examples, we define different kinds of JTB and discuss
  their relative susceptibility to such examples. This has some
  similarities with the work in
  \cite{BalRenSme12:LNCS,BalRenSme14:APAL}; however, our logics are
  much simpler and we use them as part of a different analysis.}  Our
third goal is more philosophical. As already suggested, our account
commits us to a version of infallibilism about knowledge, for an
adequate reason supports only true propositions. However, our beliefs
for a proposition often are grounded in two kinds of reasons, some of
them adequate, and others inadequate. The question we are interested
in concerns whether knowledge should be defined in terms of the
possession of only adequate reasons, or whether it tolerates the
inclusion of inadequate reasons. This is what we call \emph{the
  problem of mixed reasons}. We address this problem and defend the
view that knowledge should be made compatible with the possession of
mixed reasons.

In order to articulate the distinction between two kinds of justification, in \S\ref{section:rsa} we first present
the basic concepts of our account of knowledge and justification, namely the concepts of reason, support, and of adequacy. We then introduce the \emph{Logic of Reason-Based Belief} in \S\ref{section:logics}.
This logic provides an explicit representation of reasons to believe a
proposition.  We show that the logic is sufficiently flexible to
accommodate quantifiers over reasons, limited or full closure of
reason-based belief under implication, and an optional requirement
that all beliefs be reason-based.  In \S\ref{section:jtb}, we put this
logic to work in the analysis of Gettier cases: first to tease apart
two notions of justification, one internalist and the other
externalist; and second to study the susceptibility of internal JTB
and of external JTB to Gettier-type examples. Finally, in \S\ref{section:mixed} we close the paper with
a discussion of the problem of mixed reasons and with the discussion of an objection to our account, regarding whether adequacy is even a necessary condition for the ascription of knowledge. Technical notions and results that are not required
for an understanding of the main text are relegated to an appendix.

\section{Reasons, Support, and Adequacy}\label{section:rsa}

We distinguish between two kinds of reasons: those that merely
``support'' a proposition by inclining an agent to believe in it, and
those that not only support a proposition but are also themselves
``adequate.'' Our analysis of Gettier cases hinges on a precise
understanding of this key distinction. In this section we first introduce the three basic concepts that we rely on in our epistemology beside the concept of belief, namely reasons, support, and adequacy. Regarding belief, suffice it to say that we treat belief as an explicit endorsement by an agent of either a proposition, or a reason providing evidence for a proposition. We say more about our characterization of belief in \S\ref{section:logics}, where we further clarify the relation between beliefs and reasons.

\subsection{Reasons} 

The first concept we need to clarify is the concept of a
\emph{reason}. As emphasized by Armstrong in \cite{Arm1973:belief},
``talk of a man's reason may be to refer to a (certain sort of)
belief-state of his, or it may be to refer to the proposition
believed" (pg. 79). We handle reasons primarily as state-specific
objects, and only derivatively as propositions. For us, a reason is
some evidence on the basis of which we come to believe a
proposition. A reason therefore does not have the type of a
proposition, although it can be associated to a proposition in a
systematic manner.\footnote{We are indebted to an anonymous reviewer
  and to J.~Dutant for urging us to clarify that matter. In a previous
  version of our work, we left open the possibility for reason symbols
  to directly denote propositions. Note that throughout the paper, and
  unless when it creates confusion, we use the word ``proposition''
  both for syntactic objects (closed sentences of our language), and
  for the semantic objects they denote (viewed as sets of possible
  worlds).} For example, my hearing voices outside might be a reason
to believe that there are people outside. Semantically, we shall
represent reasons by accessibility relations, thereby treating them as
akin to belief-states. Given a reason $r$ and a world $w$, we write
$r(w)$ to represent the proposition determined by the reason $r$ in
$w$ (that is the set of accessible worlds in virtue of the relation
corresponding to $r$). So, each reason can be associated with a
proposition at a world, but reasons \emph{per se} are not
propositions.\footnote{Note also that given a sentence $p$, $p$ is
  supposed to express the same proposition relative to every world. In
  our framework a reason $r$ may express different propositions $r(w)$
  depending on the world $w$.} Reasons are related to the sort of
answers produced by rational agents when they are asked, ``Why do you
believe this?'' or ``How do you know that?''. If I am asked why I
believe that Napoleon I is dead by now, I would answer that it is
because I read that he was born in the eighteenth century, and he died
in the nineteenth century. If I am asked why I believe that $2+2=4$, I
would respond that my calculations confirm that, or that I accept a
certain axiomatic system that I see as consistently deriving that
fact. So reasons for us are not bare propositions, like the
proposition ``Napoleon died in the 19th century'' or the proposition
``$2+1=3$'', or even bare arguments, but we take them to be related to evidential experience of some sort.

\subsection{Support} 

The second concept we deal with is what we call \emph{support} between
a reason and a proposition. Syntactically, we treat the relation of
support between a reason and a proposition as primitive: we will write
``$r\col \varphi$''. Model-theoretically, we represent the support
relation between a reason $r$ and a proposition $p$ by the fact that
$r(w)$ entails $p$. This does not mean that we intend the relation of
support to model deductive support. Rather, we do intend it to capture
a very fundamental form of inductive support. That is, we view each
reason as a basic Humean experience of some kind, creating an
inclination to believe a proposition. If I hear voices outside the
house, that is a supporting reason for the proposition that there are
people outside the house. In principle, that might also be a reason to
think there are loudspeakers broadcasting voices outside the house
(with no actual person around). In our approach, however, we do not
allow for reasons to support contradictory propositions. If $r$ is a
supporting reason for $\varphi$, then we will take $r$ to be a
supporting reason for any proposition entailed by
$\varphi$. Consequently, a reason for thinking that there are
loudspeakers broadcasting voices outside the house with no one around
will have to be different from a reason to think that there are actual
people outside the house.

A delicate issue is whether the support relation ought to be treated
as subjective (internal to an agent's beliefs) or as objective
(external to the agent's beliefs). That is, could an agent believe
that $r$ supports $\varphi$ without $r$ actually supporting $\varphi$?
Conversely, could $r$ support $\varphi$ without the agent believing
that $r$ supports $\varphi$? Regarding the latter, in our system any
reason supports every logical truth, but we see it as possible for an
agent not to believe that, in particular because an agent may not
automatically see the truth of every logical truth or be aware of
every reason. Hence, a reason may support a proposition without an
agent believing that.

The question whether an agent may believe a reason to support a
proposition without that reason supporting the proposition is more
delicate. One way in which this could happen is if reasons can be
treated as mere propositions. For example, an agent who accepts a
fallacious mathematical argument may believe that true premises
\emph{deductively support} a false conclusion. In that case one may
say that the agent incorrectly believes the premises to deductively
support the conclusion. Again we find it better not to let
propositions be reasons \emph{per se}, but to keep to reasons being
evidence of a sort, related to some experience. Take an agent who
miscalculates the result of some mathematical equation. He accepts the
premise that a bat and a ball cost \$1.10, that the bat costs \$1 more
than the ball, and wrongly computes from that that the ball costs
\$0.10 (it actually costs \$0.05).\footnote{This example is from S.~Frederick and D.~Kahneman (\cite{kahneman2002representativeness}). A variant is discussed
by Sorensen in \cite{Sor15:fugu}.} On an objective reading of reasons, letting
$p_1$ and $p_2$ stand for the two premises of the argument, and $q$
for the false conclusion that the ball costs \$0.10, we would write:
$B((p_{1}\wedge p_{2})\col q)$ to express that the agent believes the
premises to deductively support the conclusion. But we find more
appropriate to write instead that
$B(r:(p_{1} \wedge p_{2} \rightarrow q))$, where $r$ stands for the
calculation made by the agent, to represent the fact that the agent
sees his calculations to support a certain deductive relation between
premises and conclusions. As a result, we treat the relation of
deductive support between bare propositions by means of the usual
logical resources available in our system, and not in terms of the
colon operator.

Because we see reasons as evidence-based propositions, we might wish
to endorse the success principle
$B(r\col \varphi) \rightarrow r\col \varphi$. That is, the agent
believing some kind of experience to support a proposition would
suffice to make the experience in question a supporting reason for
$\varphi$. A rational agent may therefore not necessarily be aware of
all his or her reasons, but could not be mistaken about reasons that
he or she explicitly endorses. One may object that even rational
agents may be deluded about their reasons. For example, an agent might
be mistaken about her own experiences. She has a memory of an
encounter with a famous actor, but that encounter never happened, and
the memory is no real experience. When asked to justify why she
believes this actor is blond, she gives as a reason that she
remembered seeing him to be blond during that encounter. Arguably, the
reason does not support the proposition in that case, because the
reason is not grounded in any true experience. Such cases invite us to
guard against the aforementioned success principle. Because of that,
we will propose a weakening of that principle, intended to secure more
generality, and to not rule out such cases.\medskip

\subsection{Adequacy} 

What about \emph{adequacy}? Adequacy is by far the most central
concept in our approach. There are various ways in which the notion
can be thought of. One option is to think of an adequate reason as a
\emph{reliable} reason (see \cite{Gol79:WJB}). On that approach, an adequate
reason is a reason that produces true beliefs most of the time. Like
other critics of reliabilism, we do not view this characterization as
strong enough. A stronger characterization might be in terms of
Dretske's notion of \emph{conclusive} reason. According to Dretske, $r$ is a
conclusive reason for $p$ if and only if $r$ would not be the case if
$p$ were not the case (\cite{Dre71:CR}). Another approach is in terms
of Chisholm's or Sosa's respective conceptions of \emph{non-defectiveness}
\cite{Chis77:TOK,Sos74:how}, whereby a justification is non-defective
provided it is not the basis of any false proposition. Consistently with those views, our use of the term ``adequate'' is also related to Spinoza's understanding in \emph{Ethics}, where the notions of adequate idea (\emph{adaequata idea}) and adequate knowledge (\emph{adaequata cognitio}) are meant to imply truth.\footnote{See \cite{spinoza1677ethica} Part II, in particular Def. 4, Prop. 34 and Prop. 41.}

We draw inspiration from the latter set of approaches but an important
caveat about our approach is that we do not propose to give an
explicit definition of adequacy in terms of necessary and sufficient
conditions. Instead, we propose to characterize adequacy in terms of
two necessary conditions, and in terms of the interaction with the
notions of support and belief we have in our ontology. The first
central condition we impose on a reason for it to be adequate is very
close to Chisholm's, Sosa's, or indeed Spinoza's respective conceptions. We say that a
reason is adequate only if the propositions it supports are true
(axiom (A) below). But we also see adequacy as putting a requirement
on the support relation. In particular, we consider that if an agent
wrongly believes a reason $r$ to support a proposition $\varphi$ (that
is, believes so although $r$ actually fails to support $\varphi$),
then $r$ cannot be adequate (axiom (AS) below). So, an adequate reason
is such that it secures not only the propositions that it actually
supports, but also such that it secures the correctness of the belief
in the associated support relation. This second requirement basically
corresponds to the weakening of the success condition evoked before,
that $B(r\col \varphi)$ should always entail $r\col \varphi$. It says
that this is so provided $r$ is adequate.

Let us make four important remarks on our treatment of
adequacy:
\begin{enumerate}
\item Our central requirement on adequacy ensures that if $\varphi$ is
  not true, and $r$ supports $\varphi$, then $r$ cannot be
  adequate. This property of adequacy therefore comes close to
  Dretske's, except that we rule out the possibility that a reason
  might be adequate evidence for one proposition and inadequate for
  another (we discuss that aspect in \S\ref{section:mixed}). That
  first requirement implies that reasons that do not validly support a
  given conclusion are inadequate.

\item In our system $B(r\col \varphi)$ does not entail that $Br$,
  which says that an agent does not automatically deem adequate any
  reason she has in support of a proposition. For example, in the
  M\"uller-Lyer illusion, an agent who trusts her perceptual
  experience that two lines differ in length has a reason in support
  of the proposition that the lines are not equal. But she can have a
  different and overriding reason, her measuring of the lines (reason
  $s$), telling her that the lines do not differ in length. We would
  represent the case as:
  \[
  B(r\col p) \wedge B(s\col \neg p) \wedge Bs \wedge \neg Br\enspace.
  \]

\item We do not identify the adequacy of a reason with the validity or
  even the soundness and validity of an argument. Consider the
  following example of what Sorensen \cite{Sor15:fugu} calls
  ``paralemmatic reasoning''. Our agent has 10 cents in his pocket
  (premise 0). He infers from that, and from the information that the
  bat and ball cost \$1.10 (premise 1) and the bat costs \$1 more than
  the ball (premise 2), that he has enough money to buy the ball
  (conclusion). Let us assume he infers the conclusion really because
  he infers that the ball costs 10 cents. Let us represent by
  $p_{0}, p_{1}, p_{2}$ the premises of that argument, by $q$ the
  correct conclusion that the agent has enough money to buy the ball,
  and by $p$ the incorrect lemma that the ball costs 10 cents. Let us
  call $r$ the agent's calculations. The agent believes that
  $r\col (p_{0} \land p_{1} \land p_{2} \rightarrow q)$. The argument
  $p_{0} \land p_{1} \land p_{2} \rightarrow q$ has sound premises,
  and the conclusion validly follows. In effect, however, what $r$
  supports in the agent's mind is a more specific argument, namely the
  argument that $p_{1} \land p_{2} \rightarrow p$, combined with the
  argument that $p \wedge p_{0}\rightarrow q$. Because of our adequacy
  constraint, $r$ cannot be an adequate reason here, because $r$
  supports an incorrect argument. Sorensen in \cite{Sor15:fugu} writes
  that ``paralemmas are precise counterexamples to bare evidentialism
  -- the view that we know a proposition simply by virtue of having
  adequate evidence for it.'' By adequate evidence we take Sorensen to
  refer to some \emph{intrinsic} property of the evidence, holding
  irrespective of the agent's beliefs. This is not our
  characterization of adequacy, precisely because for us a reason can
  support a logically perfect argument without being adequate. We are
  not bare evidentialists as a consequence.

\item Finally, we do not view the two conditions we impose on
  adequacy, namely (A) and (AS), as sufficient conditions for a reason
  being adequate, but only as necessary conditions. For us, and as our
  semantics will show, adequacy is fundamentally a property of a
  reason and a world. Or put otherwise, the adequacy of a reason
  represents the property of a world being a \emph{good case} for the
  reason in question (see \cite{Wil00:Book}). The same reason, with
  the same intrinsic properties, could turn out to be adequate in a
  good case and inadequate in a different case that is bad. Being a
  good case is not a notion we think we can explicitly define and
  exhaustively characterize. Intuitively, being a good case (relative
  to a reason) means being a case at which the reason is used in a way
  that is not lucky, that is safe, or that has the right fit, whatever
  those notions might mean. We consider our two axioms on adequacy to
  capture part of that intuition, but some of the cases we discuss
  (particularly Goldman cases, see below) are arguably cases in which
  both of our conditions are fulfilled, but which we do not want to
  characterize as adequate in the externalist sense we think is
  relevant.
\end{enumerate}

Our axiomatic treatment of adequacy therefore implies that the
adequacy of a reason does not lead to any false conclusion, and it
also implies that the agent is not deluded about the experience
reported in the reason. But to rehearse our main point, those are only
two necessary conditions on adequacy. We view adequacy as an even
stronger property of a reason and a world, namely as the property for
a world to secure that the reason is not merely internally but also
externally warranted.

\section{A Simple Logic of Reason-Based Belief}
\label{section:logics}

Having laid out the basic ingredients in our ontology, in this section
we present a simple logic of reason-based belief. We first present our
system axiomatically and then give its underlying model theory. We
then present various extensions of the basic system, in particular to
allow for quantification over reasons, which will be needed in our
analysis of justified true belief in the next section.

\subsection{Reason-Based Belief}
\label{section:RBB}

Fix nonempty sets $P$ of propositional letters and $R$ of reason
symbols (also called ``reasons'').  $F$ is the set of formulas
$\varphi$ defined by the following grammar:
\begin{eqnarray*}
  \varphi &::=&
  p \mid \lnot\varphi \mid (\varphi\lor\varphi) \mid
  (r\col\varphi) \mid r \mid B\varphi
  \\
  &&
  \text{\footnotesize
    $p\in P$, $r\in R$}
\end{eqnarray*}
Other Boolean connectives are defined as abbreviations.
\begin{itemize}
\item $r\col\varphi$ says, ``$r$ supports $\varphi$''.

\item $r$ says, ``$r$ is an adequate reason.'' 

  For the sake of clarity, we might have introduced a unary operator
  $A$ in the language, and written $Ar$ to express ``$r$ is an
  adequate reason''. We choose to write $r$ instead of $Ar$ to save
  symbols. Our logic will guarantee that every proposition supported
  by an adequate reason is true (i.e., we will have
  $r\col\varphi\to(r\to\varphi)$ for all formulas $\varphi$). Note
  that, as we define it, adequacy is not relativized to a specific
  proposition. This makes our treatment of adequacy different from
  Dretske's treatment of conclusiveness. For Dretske, a reason is not
  conclusive \emph{per se} but conclusive \emph{for} a proposition
  (\cite{Dre71:CR}). In our setting, when a reason is adequate, it
  makes every proposition it supports true. Hence no reason can be
  such that it is adequate relative to one proposition it supports,
  and inadequate relative to another. This feature of our account may
  eventually prove too strong (we return to this issue in the last
  section of the paper), but our proposal is to think of adequacy as a
  fundamental property of truth-conduciveness.

\item $B\varphi$ says, ``the agent believes $\varphi$.''

  The formula $Br$ is therefore read, ``the agent believes $r$ is an
  adequate reason.'' Sometimes it will be convenient to say that ``the
  agent accepts reason $r$'' to mean that $Br$ holds. So believing
  reason $r$ is adequate and accepting reason $r$ will be considered
  to mean the same thing.
\end{itemize}
To reduce the number of parentheses while ensuring unique readability
of formulas, we adopt the convention that the colon operator binds
more strongly than any Boolean connective.  For example,
\begin{eqnarray*}
  r\col\varphi\to\psi
  &\text{denotes}&
  (r\col\varphi)\to\psi
  \quad
  \bigl[\text{and not }
  r\col(\varphi\to\psi) \bigr]\enspace.
\end{eqnarray*}

The theory $\RBB$ of \emph{Reason-Based Belief} appears in
Table~\ref{table:RBB}. We write $\vdash\varphi$ to mean that $\varphi$
is derivable in $\RBB$.  The negation is written $\nvdash\varphi$.

\begin{table}[ht]
  \begin{center}
    \textsc{Axiom Schemes}\\[.4em]
    \renewcommand{\arraystretch}{1.3}
    \begin{tabular}[t]{cl}
      (CL) &
             Axiom Schemes of Classical Propositional Logic
      \\
      (RK) &
             $r\col(\varphi\to\psi)\to
             (r\col\varphi\to r\col\psi)$
      \\
      (A) &
            $r\col\varphi\to(r\to\varphi)$
      \\
      (BRK) & 
              $B(r\col (\varphi\to\psi)) \to
              (B(r\col \varphi)\to B(r\col\psi))$
      \\
      (BA) & 
             $B(r \col \varphi) \rightarrow (Br \rightarrow B\varphi)$
      \\
      (AS) & 
             $B(r\col \varphi) \to (r \rightarrow (r\col \varphi))$
      \\
      (D) &
      $B\varphi\to\lnot B\lnot\varphi$
      
    \end{tabular}
    \renewcommand{\arraystretch}{1.0}
    \\[1em]
    \textsc{Rules}\vspace{-.5em}
    \[
    \frac{\varphi\imp\psi \quad \varphi}{\psi}
    \;\text{\footnotesize(MP)}
    \qquad
    \frac{\varphi}{r\col\varphi}
    \;\text{\footnotesize(RN)}
    \qquad
    \frac{\varphi\leftrightarrow\psi}{B\varphi\leftrightarrow B\psi}
    \;\text{\footnotesize(E)}
    \]
  \end{center}
  \caption{The theory $\RBB$}
  \label{table:RBB}
\end{table}

Regarding the axioms and rules of $\RBB$ from Table~\ref{table:RBB},
(CL) and (MP) say that $\RBB$ is an extension of classical
propositional logic. (RK) says that reasons are closed under material
implication, and (RN) says that reasons support all derivable
formulas. (A) says that if $r$ supports $\varphi$ and $r$ is an
adequate reason, then $\varphi$ is true.  (BRK) says that if the agent
believes $r$ supports a conditional, and believes $r$ supports the
antecedent, then the agent believes $r$ supports the consequent.  (BA)
says that if the agent believes $r$ supports $\varphi$ and the agent
believes that $r$ is an adequate reason, then the agent believes
$\varphi$. (AS) says that if the agent believes $r$ to support a
proposition, then $r$ does not support the proposition unless $r$ is
adequate. (D) says that the agent's beliefs are consistent: she cannot
have contradictory beliefs (i.e., believe both $\varphi$ and
$\lnot\varphi$ for some $\varphi$). Finally, (E) says that the agent's
beliefs do not distinguish between provably equivalent formulas.

As for mnemonics, (CL) is ``Classical Logic,'' (MP) is ``Modus
Ponens,'' (RK) is Kripke's axiom $\mathsf{K}$ of modal logic (used
here for reasons), (RN) is ``Reason Necessitation,'' (A) is
``Adequacy,'' (BRK) is ``the Belief version of RK,'' (BA) is ``the
Belief version of adequacy,'' (AS) is ``Adequate Support,'' (D) is a
well-known axiom from modal logic \cite{Chellas:ml}, and (E) is a
well-known rule from minimal modal logic \cite{Chellas:ml}.

\subsection{Semantics for $\RBB$}
\label{section:RBB-semantics}

We now present a semantics for our system. The semantics is based on
two main components: a neighborhood semantics for belief, intended to
make belief as weak as possible, and a Kripke semantics for reasons,
this time intended to capture the closure properties on
reasons.\footnote{For other systems dealing with belief in terms of a
  neighborhood semantics, see \cite{BenPac11:evidence} and
  \cite{Dut10:PHD,Dut15:Method}.} We justify both desiderata
in the next section.

We construct the models of $\RBB$ in a couple of stages. We first
begin with \emph{pre-models}, which are structures $M=(W,[\cdot],N,V)$
having:
\begin{itemize}
\item a nonempty set $W$ of ``possible worlds,''

\item a function $[\cdot]:R\to\wp(W\times W)$ mapping each reason
  $r\in R$ to a binary relation $[r]\subseteq W\times W$ on the set of
  possible worlds,

\item a ``neighborhood function'' $N:W\to\wp(\wp(W))$ mapping each
  world $w$ to a collection $N(w)$ of sets of worlds
  (``propositions'') that the agent believes at $w$,

\item a propositional valuation $V:W\to\wp(P)$ mapping each world $w$
  to the set $V(w)$ of propositional letters that are true at $w$.
\end{itemize}
To indicate that the components $W$, $[\cdot]$, $N$, and $V$ come from
pre-model $M$, we may write $W^M$, $[\cdot]^M$, $N^M$, and $V^M$
(respectively).  Letting $w\in W$ be a world, $r\in R$ be a reason,
and $X\subseteq W$ be a set of reasons (i.e., a ``proposition''), we
introduce the following abbreviations:
\begin{itemize}
\item $r(w) := \{v\in W\mid (w,v)\in[r]\}$ is the set of all worlds
  that are $r$-accessible from $w$ (i.e., accessible by the relation
  $[r]$).

\item $r^\circ := \{ v\in W \mid (v,v)\in[r]\}$ is the set of all
  worlds that are $r$-accessible from themselves.  These are the
  worlds at which the reason $r$ is said to be \emph{reflexive}. As we
  will see shortly, a reason that is reflexive at a world will be
  adequate at that world. So reflexivity and adequacy are equivalent
  notions, and hence we may conflate the two, which ought not cause
  confusion.
\end{itemize}
To indicate the sets $r(w)$ and $r^\circ$ arise from worlds in
pre-model $M$, we may write $r^M(w)$ and $r^{M\circ}$ (respectively).
Though not mentioned above, we do require that all pre-models satisfy
the following property:
\begin{description}
\item[(pr)] For each $x\in P\cap R$, we have $x\in V(w)$ if and only
  if $w\in x(w)$.
\end{description}
This says that for propositional letters that are also reasons, the
truth assignment given to $x$ by the valuation $V$ agrees with the
reflexivity of $x$.  This ensures that there is no ambiguity in the
assignment of truth to propositional letters that are also
reasons.\footnote{By definition, the nonempty sets $P$ and $R$ are not
  necessarily disjoint.  Therefore, our language allows for the
  possibility that there are objects that are both propositional
  letters and reasons.  The choice as to which possibility to realize
  is up to the user, who may decide to take $P\cap R=\emptyset$ or not
  as per her preference.}

A \emph{pointed pre-model} is a pair $(M,w)$ consisting of a pre-model
$M$ and a world $w$ in $M$. We write $M,w\models\varphi$ to say that
$\varphi$ is true at the pointed pre-model $(M,w)$, and we write
$M,w\not\models\varphi$ for the negation. We define the satisfaction
relation $\models$ and the set
\[
\sem{\chi}_M:=\{v\in W\mid M,v\models\chi\}
\]
of worlds in the pre-model $M$ at which the formula $\chi$ is
satisfied as follows.
\begin{itemize}
\item $M,w\models p$ means that $p\in V(w)$.

\item $M,w\models\lnot\varphi$ means that $M,w\not\models\varphi$.

\item $M,w\models\varphi\lor\psi$ means that $M,w\models\varphi$ or
  $M,w\models\psi$.

\item $M,w\models r\col\varphi$ means that
  $r(w)\subseteq\sem{\varphi}_M$.

\item $M,w\models r$ means that $w\in r(w)$.

\item $M,w\models B\varphi$ means that $\sem{\varphi}_M\in N(w)$.
\end{itemize}
We note that $\models$ is well-defined: for each $x\in P\cap R$, we
have $x\in V(w)$ if and only $w\in x(w)$ by (pr), and therefore
$M,w\models x$ is well-defined.  

We say that a pre-model $M=(W,[\cdot],N,v)$ is a \emph{model} if and
only if $M$ satisfies each of the following additional properties:
\begin{description}
\item[(brk)] If $\sem{r\col(\varphi\to\psi)}_M\in N(w)$ and
  $\sem{r\col\varphi}_M\in N(w)$, then $\sem{r\col\psi}_M\in N(w)$.

  This says that if the agent believes $r$ supports the implication
  $\varphi\to\psi$ and its antecedent $\varphi$, then she believes
  $r$ supports the consequent $\psi$ as well.

\item[(ba)] If $\sem{r\col\varphi}_M\in N(w)$ and $r^\circ\in N(w)$,
  then $\sem{\varphi}_M\in N(w)$.

  This says that if the agent believes $r$ supports $\varphi$ and she
  believes $r$ is reflexive, then she also believes $\varphi$.

\item[(as)] If $\sem{r\col\varphi}_M\in N(w)$ and $w\in r(w)$, then
  $r(w)\subseteq \sem{\varphi}_M$.

  This says that if the agent believes $r$ supports $\varphi$ and $r$
  is reflexive, then $r$ does in fact support $\varphi$.

\item[(d)] $X\in N(w)$ implies $W-X\notin N(w)$.

  This says that if the agent believes $X$ at world $w$, then she does
  not believe the complement $W-X$ at world $w$.
\end{description}
(brk), (ba), (as), and (d) are the model-theoretic analogs of the
axioms (BRK), (BA), (AS), and (D), respectively. A \emph{pointed
  model} is a pair $(M,w)$ consisting of a model $M$ and a world $w$
in $M$.

Given a pre-model $M$, to say that $\varphi$ is \emph{valid in $M$},
written $M\models\varphi$, means that $\sem{\varphi}_M=W$.  To say
that $\varphi$ is \emph{valid}, written $\models\varphi$, means that
$M\models\varphi$ for every model $M$. That is, validity is taken over
the class of models (and not the larger class of pre-models).  It is
shown in Theorem~\ref{theorem:RBB-determinacy} that $\RBB$ is sound
and complete for this semantics: we have $\vdash\varphi$ if and only
if $\models\varphi$. Unless explicitly noted otherwise, our focus in
what follows will generally be on the concept of model (and not the
concept of pre-model).

The following terminology will be useful for what follows: given a
reason $r$ and a pointed model $(M,w)$ representing the key features
of a particular situation of reason-based belief, to say
\begin{itemize}
\item ``$r$ is adequate at $(M,w)$'' means $M,w\models r$;

\item ``$r$ is veridical for $\varphi$ at $(M,w)$'' means $M,w\models
  r\col\varphi\to\varphi$; and

\item ``$r$ is veridical at $(M,w)$'' means $M,w\models
  r\col\varphi\to\varphi$ for each formula $\varphi$.
\end{itemize}
In using the above terminology, we may omit mention of $(M,w)$ if it
should be clear from context which pointed model is meant.  It follows
by the semantics that every adequate reason is
veridical;\footnote{Proof: suppose $r$ is adequate at $(M,w)$. This
  means $M,w\models r$, which, by the definition of satisfaction,
  means that $w\in r(w)$. Now take an arbitrary $\varphi$. To show
  that $M,w\models r\col\varphi\to\varphi$, assume
  $M,w\models r\col\varphi$. By the definition of satisfaction, this
  assumption means $r(w)\subseteq\sem{\varphi}_M$. Since $w\in r(w)$,
  it follows that $w\in\sem{\varphi}_M$; that is, $M,w\models\varphi$.
  So $r$ is veridical at $(M,w)$.}  however, a reason may be veridical
for a proposition without being thereby adequate. Note that being
veridical is only a necessary condition for a reason to be
adequate. This means we do not take veridicality to provide an
explicit definition of adequacy, but only to put a constraint on what
it is for a reason to be adequate.

\subsection{Weak Belief But Strong Reasons}

$\RBB$ posits a weak notion of belief.  In particular, as reflected in the semantics,
beliefs are not necessarily closed under material implication.  That
is, it is consistent to have
\[
B(\varphi\to\psi)\land B\varphi \land \lnot B\psi\enspace,
\]
which says that the agent believes an implication and the antecedent
of the implication but not the consequent.

Reasons, on the other hand, are strong.  First, as just seen, they
encompass all derivable statements by (RN).  Second, they are closed
under implication (and hence under (MP)) by (RK).  Third, if adequate,
then (A) says that they are \emph{veridical\/}: everything they
support is true.  Reasons therefore support many assertions
(infinitely many, in fact, because each reason supports each of the
infinitely many theorems of $\RBB$ by (RN)).  This puts more
requirements on reasons (i.e., they must do more things), which makes
them stronger.

Reasons are governed by what is essentially the normal multi-modal
logic $\mathsf{KT}$ (with one modal operator ``$\,r\col$'' for each
reason $r$), except that the $\mathsf{T}$ axiom
$r\col\varphi\to\varphi$ (sometimes called ``veridicality'') is not
guaranteed to hold unless, according to (A), we make the additional
assumption that $r$ is adequate.  This way of having a multi-modal
logic like $\mathsf{KT}$ but with the ``modal operator'' $r$ itself a
formula (whose truth implies veridicality) is, to our knowledge,
new.\footnote{The usual way of writing our formula $r\col\varphi$
  would be $\Box_r\varphi$. So $\RBB$ may be viewed as a multi-modal
  logic with an extra formula $r$ for each $r\in R$, a
  $\mathsf{K}$-modality ``$\,r\col$'' (our variant of $\Box_r$) for
  each $r\in R$ that respects the reflexivity scheme $\mathsf{T}$ if
  $r$ holds (as per (A)) and interacts according to (AS) with an
  $\mathsf{ED}$-modality $B$ that is governed by (BRK) and (BA).  Note
  that (AS) provides some interesting interaction between the various
  modal operators. According to our intended semantics
  (\S\ref{section:RBB-semantics}), we interpret modal operators using
  a possible worlds semantics (with a neighborhood function for $B$),
  and the intended interpretation of the formula $r$ is that the
  binary accessibility relation corresponding to the modal operator
  ``$\,r\col$'' is reflexive.  Certain hybrid logics \cite{Bra11:SEP}
  can express reflexivity of modal operators: the formula
  ${\downarrow}x.\lnot\Box_r\lnot x$ says that the accessibility
  relation corresponding to $\Box_r$ is reflexive.  However, hybrid
  logics generally include additional features permitting greater
  expressivity than we need.}

We have chosen a theory of weak belief but strong reasons in order to
keep things as simple as possible but still address some of the major
trends in the epistemological study of knowledge as justified true
belief (``JTB'').  In all of the examples from epistemology we
consider in this paper, we need some way to track an agent's logical
inferences and some way to link these inferences to what the agent
believes.  Our theory $\RBB$ is a rather minimalistic way of doing
just this: reasons are used to handle the relevant inferencing, the
agent can ``accept'' a reason (or not) by believing it to be adequate
(or not), the agent comes to believe things supported by reasons she
accepts, and the agent's beliefs are always consistent.  This way we
can encode inferencing using a reason, indicate whether the agent
accepts this inferencing, and thereby infer whether she believes some
statement based on a reason.  We also allow the possibility that she
believes something without a reason, by which we mean that the set
\[
\{B\varphi\}\cup\{B(r\col\varphi)\to\lnot Br\mid r\in R\}
\]
is consistent with our theory. This assumption is metaphysically
disputable, however, and we may provide for every belief to be
accompanied by a reason if we wish.

\subsection{Consistency of Reasons}

One interesting theorem of $\RBB$ is the principle
\begin{equation*}\tag{RC}
  \vdash(Br\land Bs)\to( B(r\col\varphi)\to\lnot B(s\col\lnot\varphi))
  \label{eq:RC}
\end{equation*}
of \emph{reason consistency}. This principle says that if an agent
believes reasons $r$ and $s$ are adequate, then she cannot believe
that $r$ and $s$ support contradictory assertions. Intuitively, the
derivability of \eqref{eq:RC} follows because (BA) requires that an
agent who accepts a reason and believes that reason to support a
formula must also believe that formula, and (D) requires that an agent
not have contradictory beliefs.  Notice that if we take $r=s$ in
\eqref{eq:RC}, then we obtain a statement provably equivalent to
\begin{equation*}\tag{IC}
  \vdash Br\to(B(r\col\varphi)\to\lnot B(r\col\lnot\varphi))\enspace,
  \label{eq:IC}
\end{equation*}
which says that a reason believed to be adequate is believed to be
internally consistent.

If the agent does not believe $r$ is a adequate reason, then she can
believe that $r$ is internally inconsistent.  Put another way, the
formula
\[
\lnot Br \land B(r\col\varphi) \land B(r\col\lnot\varphi)
\]
is consistent with our theory.  Similarly, if the agent believes $r$
is an adequate reason but does not believe $s$ is an adequate reason,
then our theory does not rule out the possibility that she believes
$r$ and $s$ are inconsistent.  That is,
\[
Br\land \lnot Bs \land B(r\col\varphi)\land B(s\col\lnot\varphi)
\]
is also consistent with our theory.

Finally, since the theory $\RBB$ is consistent (and hence does not
prove both $\varphi$ and $\lnot\varphi$ for some formula
$\varphi$),\footnote{Consistency of $\RBB$ follows by soundness
  (Theorem~\ref{theorem:RBB-determinacy}).}  any adequate reason is
internally consistent.  That is,
\begin{equation*}\tag{AIC}
  \vdash r\col\varphi\to(r\to\lnot (r\col\lnot\varphi))\enspace,
  \label{eq:AIC}
\end{equation*}
which says that a reason $r$ that supports $\varphi$ and is adequate
cannot also support $\lnot\varphi$.  It follows from \eqref{eq:AIC}
that any internally inconsistent reason is not adequate.

\subsection{Logical Closure and Combination of Reasons}
\label{section:closure}

If $\psi$ is a logical consequence of $\varphi$, meaning
$\vdash\varphi\to\psi$, then our theory says that $r$ is a reason to
believe the consequent $\psi$ whenever $r$ supports the
antecedent $\varphi$.  That is,
\begin{equation*}\tag{RCLC}
  \vdash \varphi\to\psi
  \quad\text{implies}\quad
  \vdash r\col\varphi\to r\col\psi\enspace.
  \label{eq:RCLC}
\end{equation*}
The proof: from $\varphi\to\psi$, we obtain $r\col(\varphi\to\psi)$ by
(RN).  This is the antecedent of an instance of (RK), so the
consequent $r\col\varphi\to r\col\psi$ of this instance is derivable
by an application of (MP). This completes the proof.  In examining
this proof, we see that \eqref{eq:RCLC} is a consequence of the
stronger logical principle encompassed by our axiom
\begin{equation*}\tag{RK}
  r\col(\varphi\to\psi)\to(r\col\varphi\to r\col\psi)\enspace,
\end{equation*}
which says that reasons are closed under material implication.

The principle \eqref{eq:RCLC} says that reasons are closed under
logical consequence.  It is unclear whether this is a desirable
principle. For example, it may make more sense to say that if $\psi$
is a logical consequence of $\varphi$ and $r$ supports
$\varphi$, then it is not $r$ itself that supports the
consequence $\psi$.  Instead, what is wanted is some more complicated
reason $r'$ that not only references $r$ but also provides some reason
$s$ as to why $\psi$ obtains from $\varphi$.  That is, we might seek a
principle like this:
\begin{equation*}\tag{App}
  s\col(\varphi\to\psi)\to(r\col\varphi\to(s\cdot r)\col\psi)
  \enspace.
  \label{eq:App}
\end{equation*}
This is the principle of \emph{Application} from Justification Logic
\cite{ArtFit11:SEP}.  It says: if $s$ supports the
implication $\varphi\to\psi$ and $r$ supports the
antecedent $\varphi$, then a new object $s\cdot r$ obtained by
combining $s$ and $r$ supports the consequent $\psi$.
In essence, the more complex reason $s\cdot r$ keeps track of
everything we would need to check to see that $\psi$ indeed obtains:
the initial reason $r$ for the antecedent $\varphi$ and a reason $s$
for the implication $\varphi\to\psi$.  Further, the syntactic
structure of $s\cdot r$, with $s$ to the left and $r$ to the right,
tells us what kind of a reason we have: based on the form of
\eqref{eq:App}, it is suggested that $s$ is some implication, $r$ is
the antecedent of that implication, and $s\cdot r$ is a reason for the
consequent.  In essence, we are describing specific witnesses for an
instance of the rule (MP) of Modus Ponens:
\[
\frac{\varphi\to\psi\quad\varphi}{\psi}\;\text{\footnotesize(MP)}
\quad\text{is encoded by}\quad
\frac{s\quad r}{s\cdot r}\enspace.
\]

\eqref{eq:App} is a more nuanced version of (RK): if we have
$r\col(\varphi\to\psi)$ and $r\col\varphi$, then we do not obtain
$r\col\psi$ straightaway using \eqref{eq:App}.  Instead, we must
construct the reason $r\cdot r$ in support of $\psi$.  The single
instance of ``$\,\cdot\,$'' in the syntactic structure of the latter
reason reflects our use of one derivational step (i.e., one instance
of (MP)) to obtain $\psi$.

To do away with \eqref{eq:RCLC}, we must do away with (RK) and modify
(RN).  In particular, let $R_0$ be a nonempty set of ``basic''
reasons, define $R$ to be the smallest extension of $R_0$ satisfying
the property that $s,t\in R$ implies $s\cdot t\in R$, replace scheme
(RK) by scheme \eqref{eq:App}, and restrict (RN) by requiring that
$r\in R_0$. (It is assumed that all other schemes and rules can use
reasons coming from the full set $R$.) Call the resulting theory
$\RBB{+}\text{\eqref{eq:App}}$.  In $\RBB{+}\text{\eqref{eq:App}}$, it
is consistent to have
\begin{equation}\label{eq:notRCLC}
  r\col(\varphi\to\psi)\land r\col\varphi\land\lnot (r\col\psi)
  \enspace,
\end{equation}
which says that $r$ is not closed under implication. As a result, it
can be shown that \eqref{eq:RCLC} no longer obtains.  But note that
(RK) and \eqref{eq:RCLC} do not fail in $\RBB{+}\text{\eqref{eq:App}}$
because logical or materially implied consequences of assertions are
no longer ``accessible'' by some reason.  Indeed, in the situation
\eqref{eq:notRCLC} under the theory $\RBB{+}\text{\eqref{eq:App}}$,
the logical consequence $\psi$ of $\varphi$ is still ``accessible'' by
the reason $r\cdot r$.  However, this reason $r\cdot r$ is more
``complex'' than the original reason $r$ (in terms of the number of
instances of the (MP)-encoding Application operator ``\,$\cdot$\,''
that appear inside it).  In general, distant consequences that would
require many repetitions of \eqref{eq:App} are still accessible; it is
just that the reasons that access these consequences may be very
``complex.''

Justification Logic (JL) \cite{ArtFit11:SEP} is the study of logics of
reason-based belief (with reasons thought of as ``justifications'').
Defining $\JL_0$ to be the fragment of $\RBB{+}\text{\eqref{eq:App}}$
obtained by omitting all belief formulas and belief axioms from the
theory, Justification Logics may be thought of as extensions of
$\JL_0$.\footnote{Actually, Justification Logics are extensions of a
  fragment of $\RBB{+}\text{\eqref{eq:App}}$ that places further
  restrictions on the rule (RN), but we set aside further discussion
  of this issue here.}  Many JLs permit other kinds of combinations of
reasons than what we saw with \eqref{eq:App}.  Our logic $\RBB$ is
closely related to the JL tradition, though we conspicuously omit
\eqref{eq:App}, retain (RK) (and thereby endorse \eqref{eq:RCLC}),
leave (RN) without the additional restriction, and maintain a set $R$
of primitive reasons that cannot be combined to form more complex
reasons.  In so doing, we lose the ability to have the syntactic
structure of reasons reflect the structure of derivations in the
theory, and thereby forgo a more nuanced tracking of the interaction
between logical consequence and the complexity of reasons.  We accept
these consequences in the interest of developing a system that is
simple and yet still of use to the formal epistemologist.
Nevertheless, we recognize that a reader may be interested to see a
thorough study of more sophisticated extensions of our theory that
allow for the combination of reasons along the lines of \eqref{eq:App}
(and perhaps for other features as well).  We advise such a reader to
consult the JL literature directly \cite{ArtFit11:SEP}.

Since our theory does not allow the agent to combine reasons in the
sense of \eqref{eq:App} and beliefs are not closed under implication,
it is consistent for us to have a situation wherein the agent has the
requisite information to draw a belief but simply does not draw
it. For example, assuming $s\neq r$, it is consistent to have
\begin{equation}
  Bs\land Br \land B(s\col(\varphi\to\psi))\land B(r\col\varphi) \land \lnot
  B\psi\enspace,
  \label{eq:noRCL}
\end{equation}
which says that the agent believes $s$ is adequate and believes $s$
supports an implication (and hence believes the implication by (BA)),
believes $r$ is adequate and believes $r$ supports the antecedent of
the implication (and hence believes the antecedent by (BA)), and yet
the agent does not believe the consequent (even though the believes
the implication and its antecedent).  The problem is that her beliefs
are not closed under (MP). This is so despite the fact that, by
\eqref{eq:RCLC} and (RB), beliefs coming from \emph{the same reason}
are closed under (MP):
\[
\vdash (Bt\land B(t\col(\varphi\to\psi))\land B(t\col\varphi))\to
B\psi\enspace.
\]
So long as the implication and its antecedent come from
\emph{separate} beliefs as in \eqref{eq:noRCL}, the agent need not
believe the consequent. 

The consistency of \eqref{eq:noRCL} is a consequence of our design: we
use reason operators to encode inferencing, and we use belief
operators to encode the particular inferencing and the individual
assertions that the agent accepts (in terms of her affirmed beliefs).
As such, the situation \eqref{eq:noRCL} is one in which the agent has
not yet performed sufficient inferencing to accept the conclusion
$\psi$, even though she is very close (after all, she has done enough
to accept the implication $\varphi\to\psi$ and its antecedent
$\varphi$).  In essence, this lack of closure allows us to place one
kind of constraint on the agent's inferencing powers.  If desired, one
can place even more severe constraints as in \cite{BalRenSme14:APAL};
however, this seems to require more syntax and additional axioms.  One
can also go the other way and lift these constraints entirely; in
\S\ref{section:sigma} we will suggest one natural way to do this in
our setting.  But for now we retain what we hope is a ``happy medium''
in the form our theory $\RBB$.

\subsection{Implication-Closed and Purely Reason-Based Beliefs}
\label{section:sigma}

We saw in \eqref{eq:noRCL} that reason-based beliefs in $\RBB$ need
not be closed under implication if the source reasons are different.
If we would like to ensure that reason-based beliefs are always closed
under implication, even if the beliefs come from separate reasons,
then a simple solution is to introduce a ``master reason'' $\sigma$
that encodes the combined information of all reasons the agent
accepts.  This requires us to expand our reason set $R$ to include a
new symbol $\sigma$ not already present and then add the following
additional schemes to $\RBB$:
\begin{eqnarray*}
  & \text{\bf(MA)} & \sigma\to(Br\to r) \\
  & \text{\bf(MB)} & B\sigma \\
  & \text{\bf(MR)} & B(r\col\varphi)\to(Br\to B(\sigma\col\varphi))
\end{eqnarray*}
(MA) says that every accepted reason is adequate if the master reason
is adequate, (MB) says that the agent always accepts the master
reason, and (MR) says that anything the agent believes is supported by
an accepted reason the agent will also believe to be supported by the
master reason.  Adding these principles makes it so that $\sigma$ is
the sum of the agent's evidence.  Calling $\RBB_\sigma$ the theory
obtained from $\RBB$ by expanding the set $R$ of reasons to include a
new symbol $\sigma$ and adding (MA), (MB), and (MR), it follows that
\begin{equation*}\tag{RCL2}
  \RBB_\sigma
  \vdash (Bs\land Br\land B(s\col(\varphi\to\psi)) \land B(r\col\varphi))
  \to
  B\psi\enspace.
  \label{eq:RCL2}
\end{equation*}
Indeed, if the agent accepts $s$, believes $s$ supports an
implication, accepts $r$, and believes $r$ supports the antecedent of
the implication, then it follows by (MR) that the agent believes
$\sigma$ supports the implication and its antecedent.  Applying (BRK),
it follows that the agent believes $\sigma$ supports the
consequent. But the agent also accepts $\sigma$ by (MB), so it follows
by an application of (BA) that the agent believes the consequent.  It
is in this sense that the agent ``combines'' the information conveyed
by reasons the agent accepts into the master reason $\sigma$.

While we have shown that the $\RBB_\sigma$-agent may combine the
information from two reasons to derive her beliefs, there is no need
to restrict the number of reasons to two.  Indeed, according to (MB),
the agent implicitly combines into $\sigma$ the information from
\emph{every} reason she believes to be adequate, no matter how many of
these there may be.

In $\RBB_\sigma$, the master reason $\sigma$ serves as a witness for
an existential quantifier over the believed support of accepted
reasons.\footnote{Said precisely: $\sigma$ is a Skolem constant for
  the existential quantifier over the believed support of accepted
  reasons. See \cite{vDal94:Book} or any book on mathematical logic
  for details.}  In particular, (MR) tells us that if there exists an
accepted reason $r$ that the agent believes supports $\varphi$, then
the agent believes that $\sigma$ supports $\varphi$.  Hence by (MB),
if there exists such an $r$, then the agent believes $\sigma$ supports
$\varphi$ and is itself accepted.  If we were to add quantifiers to
the language (something we do later in \S\ref{section:QRBB}), we could
express this as:
\[
(\exists r)(Br\land B(r\col\varphi))\to(B\sigma\land B(\sigma\col\varphi))
\enspace.
\]
That is, if there exists an accepted reason that the agent believes
supports $\varphi$, then $\sigma$ is a witness to the existential
quantifier.

Both $\RBB_\sigma$ and the basic theory $\RBB$ allow for the
possibility that the agent believes a formula $\varphi$ without any
supporting reasons (i.e., she does not believe to be adequate any
reason that she believes supports $\varphi$). This is the same as
saying that the set
\[
\{B\varphi\} \cup \{ B(r\col\varphi)\to\lnot Br \mid r\in R\}
\]
is consistent with both $\RBB_\sigma$ and $\RBB$.  If this situation
is undesirable, then a simple remedy is to extend the theory
$\RBB_\sigma$ by adding a principle that says all beliefs are believed
to be supported by the master reason:
\begin{eqnarray*}
  & \text{\bf(MT)} & B\varphi\to B(\sigma\col\varphi) 
\end{eqnarray*}
With (MT) in place, the agent believes only those things she believes
are supported by $\sigma$ and hence, by (MR), she believes only those
things she believes are supported by some reason.  In short, every
belief is ``reason-based.''  Defining $\RBB_\sigma^+$ to be the theory
obtained from $\RBB_\sigma$ by adding (MT), it follows by (MT), (RB),
and (MB) that
\[
\RBB_\sigma^+\vdash B\varphi \leftrightarrow
B(\sigma\col\varphi)\enspace,
\]
which says that the agent believes something just in case it is
supported by the master reason. But then belief can be conflated with
the master reason.  As a result, we have by \eqref{eq:RCLC} that the
beliefs of the $\RBB_\sigma^+$-agent are always closed under (MP).
Belief in $\RBB_\sigma^+$ is therefore governed by the normal modal
logic $\mathsf{KD}$.

Semantics for $\RBB_\sigma$ and for $\RBB_\sigma^+$ may be found in
\S\ref{section:sigma-semantics}. It is shown in
Theorem~\ref{theorem:RBBsigma-determinacy} that each of $\RBB_\sigma$
and $\RBB_\sigma^+$ is sound and complete for its semantics.

\subsection{Quantification Over Reasons}
\label{section:QRBB}

We have observed that the theory $\RBB$ does not require that every
belief arise from a reason: it is consistent with $\RBB$ for the agent
to believe $\varphi$ and yet have no accepted reason $r$ she believes
supports $\varphi$.  If we were to introduce quantifiers over reasons
into our language, then we could express this situation by saying that
the following formula is consistent:
\[
B\varphi \land (\forall r)(B(r\col\varphi)\to\lnot Br)\enspace.
\]
Another example: we might like to say that $r$ is the unique accepted
reason the agent believes supports $\varphi$:
\[
B(r\col\varphi) \land Br \land (\forall s)( (s\neq r \land B(s\col\varphi))
\to \lnot Bs)\enspace.
\]
To allow such expressions as formulas, we extend our set of formulas
$F$ to the larger set $F^\forall$ consisting of all formulas $\varphi$
that may be formed by the following grammar:
\begin{eqnarray*}
  \varphi &::=&
  p \mid \lnot\varphi \mid (\varphi\lor\varphi) \mid
  (r\col\varphi) \mid r \mid B\varphi
  \mid r=r \mid (\forall r)\varphi
  \\
  &&
  \text{\footnotesize
    $p\in P$, $r\in R$}
\end{eqnarray*}
We adopt usual Boolean connective abbreviations along with the
following:
\begin{eqnarray*}
  r\neq s &:=& 
  \lnot(r=s) 
  \\
  (\exists r)\varphi &:=& 
  \lnot(\forall r)\lnot\varphi 
  \\
  (\forall r\neq s)\varphi &:=&
  (\forall r)(r\neq s\to\varphi)
  \\
  (\exists r\neq s)\varphi &:=&
  (\exists r)(r\neq s\land\varphi)
\end{eqnarray*}

Note that in this language, an element $r\in R$ can act both as a
reason (as in the formula $r\col p$) and as a quantifier variable (as
in the formula $(\forall r)(r\col p)$).  Therefore, reasons may appear
either bound or free in formulas, with the notion of \emph{bound} and
\emph{free} defined in the usual way.  For reasons $s$ and $r$ and a
formula $\varphi$, we say that \emph{$s$ is free for $r$ in $\varphi$}
to mean that $r$ has no free occurrence in $\varphi$ within the scope
of a quantifier $(\forall s)$.  Put another way, if $s$ is free for
$r$ in $\varphi$, then in the formula
\[
\varphi[s/r]
\]
obtained by substituting all free occurrences of $r$ in $\varphi$ by
$s$, no newly replaced occurrence is bound.  Examples: $s$ is free for
$r$ in $(\forall t)(t\neq r)$ but not in $(\forall s)(s\neq r)$.

The theory $\QRBB$ of \emph{Quantified Reason-Based Belief} is defined
by adding to the axiomatization of $\RBB$ the axioms and rules in
Table~\ref{table:QRBB}.  (UD), (UI), and (Gen) are standard principles
of first-order quantification.  (EP) and (EN) say that two reasons are
considered to be the same if and only if they are syntactically
identical.

\begin{table}[ht]
  \begin{center}
    \textsc{Additional Axiom Schemes}\\[.4em]
    \renewcommand{\arraystretch}{1.3}
    \begin{tabular}[t]{rl}
      (UD) &
      $(\forall r)(\varphi\to\psi)\to
      (\varphi\to(\forall r)\psi)$, where $r$ is not free in $\varphi$
      \\
      (UI) &
      $(\forall r)\varphi\to\varphi[s/r]$, where
      $s$ is free for $r$ in $\varphi$
      \\
      (EP) &
      $r=r$
      \\
      (EN) &
      $\lnot(r=s)$, where $r$ and $s$ are syntactically different
    \end{tabular}
    \renewcommand{\arraystretch}{1.0}
    \\[1em]
    \textsc{Additional Rule}\vspace{-.5em}
    \[
    \frac{\varphi}{(\forall r)\varphi}
    \;\text{\footnotesize(Gen)}
    \]
  \end{center}
  \caption{The theory $\QRBB$ consists of $\RBB$ augmented with the above axioms and rule}
  \label{table:QRBB}
\end{table}

It is shown in Corollary~\ref{corollary:extension} that for each
$\varphi\in F$ not containing quantifiers, we have
$\QRBB\vdash\varphi$ if and only if $\RBB\vdash\varphi$.  It is
therefore unproblematic for us to simply write $\vdash\varphi$ to say
that $\varphi$ is provable.

We can extend $\QRBB$ to the theory $\QRBB_\sigma$ obtained by
extending $R$ to include a new master reason $\sigma$ and adding the
schemes (MA), (MB), and (MR) for $\sigma$.  We can further extend
$\QRBB_\sigma$ to the theory $\QRBB_\sigma^+$ obtained by adding the
additional scheme (MT) to guarantee all beliefs are reason-based.

Semantics for $\QRBB$, for $\QRBB_\sigma$, and for $\QRBB_\sigma^+$
may be found in \S\ref{section:QRBB-semantics}. It is shown in
Theorems~\ref{theorem:QRBB-determinacy} and
\ref{theorem:QRBBsigma-determinacy} that each of $\QRBB$,
$\QRBB_\sigma$, and $\QRBB_\sigma^+$ is sound and complete for its
semantics. However, for the completeness results, there is one caveat:
our proofs require that the set $R$ of reasons be at least countably
infinite.

\section{Justified True Belief and Knowledge}
\label{section:jtb}

We use our logical framework to tease apart two notions of justified
true belief (henceforth ``JTB'').  The first is an internalist notion,
which Gettier showed was insufficient for knowledge \cite{Get63:A}.
The second is an externalist notion that we argue is immune to Gettier
scenarios. More generally, we show that our framework can distinguish
three ``types'' of reasons: those that are non-veridical (and hence
inadequate), those that are veridical for the proposition they
support, but inadequate, and those that are adequate (and hence
veridical).  Gettier's second case has reasons of the first type:
non-veridical.  The ``fake barn county'' case has reasons of the
second type: veridical for the proposition they support, but
inadequate. Other cases (such as ``normal barn county'', or ``good
cases'' more broadly, see \cite{Wil00:Book}) have reasons of the third
type: adequate.

\subsection{Two Notions of Justification}

In our theory, there are (at least) two natural ways to define JTB:
\begin{itemize}
\item $\JTB^e_r(\varphi):=B(r\col\varphi)\land Br\land r$, and

\item $\JTB^i_r(\varphi):=B(r\col\varphi)\land Br\land\varphi$.
\end{itemize}
Both imply that the agent has a true belief that $\varphi$.  However,
$\JTB^e_r(\varphi)$ suggests that the agent has a true belief
justified by an adequate reason, whereas $\JTB^i_r(\varphi)$ suggests
that the agent only has a true belief justified by a \emph{prima
  facie} reason (that may not be adequate).  $\JTB^e_r$ is thus
externalist, while $\JTB^i_r$ is internalist.

Gettier's achievement was to deny that $\JTB^i_r(\varphi)$ is the same
as knowledge of $\varphi$.  Thus, if we assume that
\begin{equation*}\tag{G2}
  B(r\col p) \land Br \land B(r\col (p \rightarrow p \vee q)) \land (\lnot p\land q)\enspace,
  \label{eq:G2}
\end{equation*}
then we have Gettier's second case. This is the case where the agent
named Smith has a reason to believe $p$ (``Jones owns a Ford'') and
``realizes'' that $p\lor q$ (``Jones owns a Ford or Brown is in
Barcelona'') follows from $p$ on that basis; however, unknown to
Smith, $p$ is false and $q$ is true, hence $p \vee q$ is premised on a
false lemma. Let us call $r$ the reason Smith has to believe $p$. We
can safely assume that $r$ does indeed support $p$ in that case
($r\col p$) (i.e., the past evidence adduced by the agent does
correspond to a real experience of his) and we leave this premise
implicit in \eqref{eq:G2} above, since delusion is not the problem in
this case.  Since $p$ is false, and $r\col p$ by assumption, $r$
cannot be adequate, by (A). However, since $r$ supports $p$, we have
by \eqref{eq:RCLC} that $r$ supports $p\lor q$.  Smith is said by
Gettier to realize that $p$ entails $p\vee q$ on the basis of his
reason, and by (BRK) it follows that $B(r\col(p\vee q))$. Moreover,
Smith has no reason supporting $q$ that she believes is adequate
(indeed, in the scenario, ``Brown is in Barcelona'' is consciously
picked at random by Smith). So we are in a situation where Smith has
an internally justified true belief that $p$ ($\JTB^i_r p$), and also
an internally justified true belief that $p \vee q$
($\JTB^i_r(p\lor q)$), but he fails to have an externally justified
true belief of either proposition ($\neg \JTB^e_r p$, and
$\neg \JTB^e_r(p\lor q))$.

In contrast, if we assume that $\JTB^e_r(p)$, which is
\begin{equation*}\tag{G2$'$}
  B(r\col p)\land Br \land r\enspace,
  \label{eq:G2prime}
\end{equation*}
this time $r$ is adequate. Since $r$ supports $p$, it also supports
$p\lor q$ by \eqref{eq:RCLC}.  By (A) both $p$ and $p\lor q$ are true.
Since the agent believes $r$ is an adequate reason supporting $p$ (and
therefore also supporting $p\lor q$), she believes both $p$ and
$p\lor q$, and in this case her belief is based on an adequate (and
hence veridical) reason.

In general, it is easy to see that $\JTB^e_r(\varphi)$ satisfies:
\begin{itemize}
\item $\vdash \JTB^e_r(\varphi\to\psi)\to(\JTB^e_r(\varphi) \to
  \JTB^e_r(\psi))$,

  which says that external JTB based on a reason $r$ is closed under
  implication;

\item $\vdash\JTB^e_r(\varphi)\to \varphi$,

  which says that external JTB is veridical; and

\item $\vdash\JTB^e_r(\varphi)\to(r\col\psi\to\psi)$,
  
  which says that if an agent has an external JTB based on reason $r$,
  then $r$ cannot support any false assertions (so-called ``false
  lemmas'').
\end{itemize}
To compare with internal JTB, one can show that $\JTB^i_r(\varphi)$ satisfies:
\begin{itemize}
\item $\vdash \JTB^i_r(\varphi\to\psi)\to(\JTB^i_r(\varphi) \to
  \JTB^i_r(\psi))$,

  which says that internal JTB based on a reason $r$ is closed under
  implication;

\item $\vdash\JTB^i_r(\varphi)\to \varphi$,

  which says that internal JTB is veridical; and

\item $\nvdash\JTB^i_r(\varphi)\to(r\col\psi\to\psi)$,
  
  which says that if an agent has an internal JTB based on reason $r$,
  then $r$ might support false assertions (so-called ``false
  lemmas'').
\end{itemize}
The differences between $\JTB^e_r$ and $\JTB^i_r$ are in the last
property. So we see that the main difference between external and
internal JTB is in the reliability of the reason on which the JTB is
based.  

Using our quantified language, we adopt the following abbreviations:
\[
\JTB^e(\varphi):=(\exists r)\JTB^e_r(\varphi)
\quad\text{and}\quad
\JTB^i(\varphi):=(\exists r)\JTB^i_r(\varphi)
\enspace.
\]
$\JTB^e(\varphi)$ says that the agent has an external JTB for
$\varphi$ (based on some reason), and $\JTB^i(\varphi)$ says the same
but for internal JTB. 

\subsection{Is Knowledge $\JTB^e$?}

$\JTB^i$ falls prey to Gettier's examples because the supporting
reason need not be veridical (i.e., it admits ``false
lemmas''). $\JTB^e$, however, requires an adequate supporting reason,
and hence this reason is necessarily veridical (i.e., it admits ``no
false lemmas'').  This suggests we examine the equation
\begin{equation*}\tag{KJTBe}
  K\varphi := \JTB^e(\varphi)\enspace,
  \label{eq:knowledge}
\end{equation*}
which defines knowledge as external JTB.  What should we think of this
equation? 


Consider the ``fake barn county'' situation \cite{Gol76:JP}:
the agent is in a county that has numerous fake barns that look
exactly like real barns.  Not knowing she is in this county, she sees
what she thinks is a barn and concludes that it is indeed a barn.  It
turns out she is correct because, by chance, she happens to be looking
at the only real barn in the entire county.  Obviously, she has an
internal JTB that she sees a barn, though most philosophers argue that
she does not know she sees a barn.\footnote{Intuitions about knowledge
  ascription in fake barn cases are notoriously less stable among
  philosophers than they are in the original Gettier cases (see
  \cite{Kra02:LP,Tur12:S,egre2017knowledge}). Here we are considering a
  situation in which the belief seems simply ``too lucky'' to count as
  knowledge.} But does she have an external JTB in this case? One
reason to answer affirmatively: the agent's reason is veridical,
unlike in Gettier's original examples.

However, veridicality does not imply adequacy. Take $r$ and $p$ so
that
\begin{eqnarray*}
  r & \text{is read,} &
  \text{``the agent sees what looks to her like a barn,'' and}
  \\
  p & \text{is read,} &
  \text{``what the agent sees is a barn.''}
\end{eqnarray*}
Our agent is in the situation:
\begin{equation*}\tag{Barn}
  B(r\col p)\land Br \land p\enspace.
  \label{eq:Barn}
\end{equation*}

\noindent That is, the agent believes that her seeing what looks like
a barn supports the assertion that what she sees is a barn, she
believes $r$ is adequate to guarantee the truth of what it supports,
and the agent does actually see a barn. But is $r$ in fact adequate?
If we say it is, then we run into the following problem: had the agent
picked a different barn-looking structure that turned out to be a
fake, we would have
\begin{equation*}\tag{Barn$'$}
  B(r\col p)\land Br\land\lnot p\enspace,
  \label{eq:Barn'}
\end{equation*}
from which it would follow by (AS) and the assumed adequacy of $r$ that $p$ holds 
(i.e., we have $(r\col p)\land p$), but this
contradicts the assumed adequacy of $r$ (since in fact $\neg p$). This suggests to us that $r$ is not necessarily
adequate; that is, each of $\eqref{eq:Barn}\land r$ and
$\eqref{eq:Barn}\land\lnot r$ is consistent with our intuitions about
the ``fake barn county'' example.  Conclusion: the agent need not have
external JTB in this case.

We take it that the ``fake barn county'' example seeks to challenge
the agent's acumen in determining when it is safe to reason according
to the principle
\begin{equation*}\tag{WSWG}
  \text{(what I see looks like an $X$)}
  \to
  \text{(what I see is an $X$)}\enspace,
  \label{eq:WSWG}
\end{equation*}
which has the colloquial reading ``what I see is what I
get.''\footnote{This diagnosis was suggested to the third author by
  Alexandru Baltag (private communication).}  Since adequacy implies
veridicality, one could use our notion of adequacy to indicate that
the agent's use of \eqref{eq:WSWG} is licensed.  In particular, if we
assume that $\eqref{eq:Barn}\land r$, then we might construe this as a
case in which the agent is in ``normal barn county'' (where there are
no fake barns) and so her use of \eqref{eq:WSWG} is licensed: $r$ is
an externally valid reason for the agent to infer that she sees a
barn, so the agent knows that she sees a barn.  In contrast, if we
assume that $\eqref{eq:Barn}\land\lnot r$, then we might construe this
as a case in which the agent is back in ``fake barn county'' and not
licensed to draw the conclusion: $r$ is not an externally valid reason
for her to infer that she sees a barn, so she does not know that she
sees a barn.

So let us assume that our definition \eqref{eq:knowledge} of knowledge
as external JTB is correct. Is it then possible to define knowledge
(i.e., external JTB) in terms of internal JTB plus some other
condition?  Indeed it is:
\begin{equation*}
  \vdash \JTB^e(\varphi)  
  \leftrightarrow (\exists r)(r \land \JTB^i_r(\varphi))
\end{equation*}
In words: to have external justification it is necessary and
sufficient to have an adequate justification that serves as the basis
for an internal JTB. Zagzebski's criticism of a JTB-based analysis of
knowledge \cite{Zag94:PQ} might apply here: we either must sever the
link between truth and justification (thereby going so far as to
concede that there is knowledge in Gettier cases) or else assert that
``there is no degree of independence at all between truth and
justification'' (in order to avoid Gettier problems).  Zagzebski's
position is that neither horn of her dilemma is satisfactory, and so
the proper way to avoid the dilemma is to reject the possibility of
analyzing knowledge in terms of JTB plus some extra component (i.e.,
reject the ``knowledge is $\text{JTB}+x$'' approach all together); see
also \cite{Wil00:Book}.  We accept that our approach is close to
endorsing the second horn of Zagzebski's dilemma. However, by
distinguishing adequacy from veridicality, we can still maintain a
notion of independence between truth and justification. In particular,
\emph{pace} Zagzebski, our semantic analysis distinguishes between
``adequate belief'' (i.e., $\JTB^e$) and ``lucky true belief'' (i.e.,
$\JTB^i$).

\section{Infallibilism and the Problem of Mixed Reasons}\label{section:mixed}


Our analysis commits us to an infallibilist view of knowledge. Dutant
in \cite{Dut10:PHD,Dut15:Method} defines \emph{method infallibilism}
as for an agent to know that $p$ if the agent believes that $p$ on the basis
of a method that could only yield true beliefs. Our notion of
knowledge in terms of external JTB achieves the same result: for an
agent to know $p$ is to believe $p$ on the basis of an adequate reason
$r$, hence to believe $p$ on the basis of a reason that could only
support true propositions.\footnote{See Neta's \cite{Neta11:refutation}, and Dutant's
  \cite{Dut10:PHD,Dut15:Method} for an in-depth discussion of
  infallibilism. We leave a comparison of our approach with Neta's and Dutant's respective approaches
  for another occasion.}  In this section we propose to discuss two
specific issues concerning our definition of knowledge in terms of
$\JTB^e$. Both issues raise the problem of whether a definition of
knowledge as $\JTB^e$ might be either too weak, or too strong,
depending on the case.

\subsection{Knowledge from mixed reasons?}

Let us start with the worry that our account might be too weak. The
$\JTB^e$ analysis of knowledge raises the issue of the force of the
quantifier on the right side of the equivalence. To appreciate the
problem, it is worth reminding ourselves of one of the first responses
to Gettier's examples: the so-called ``no false lemmas'' (hereafter
``NFL'') requirement (see \cite{Cla63:A,Har70:KRC,Sos74:how}). The NFL
requirement is meant to rule out situations like Gettier's second
case, wherein the agent starts from a mistaken belief that $p$ to
obtain a correct belief that $p\lor q$.  Thinking of the reasoning
sequence of beliefs $\may{p,p\lor q}$ as a ``proof,'' the initial
``lemma'' (i.e., assumption) $p$ is false, but then a perfectly
legitimate inference step to a logical consequence $p\lor q$ ends up
on a formula that just so happens to be true.

In our framework, the obvious counterpart to the NFL requirement is
the ``no inadequate lemmas'' (henceforth ``NIL'') requirement:
\[
  \NIL(\varphi) := (\forall s)(\JTB^i_s(\varphi)\to s)\enspace.
  \label{eq:NIL}
\]
This says that every reason that supports an internal JTB of $\varphi$
is adequate.  Since adequate reasons support only true formulas (by
axiom scheme (A)), the NIL requirement guarantees that no false
``lemma'' (i.e., formula) intrudes on a reason justifying a potential
internal JTB of $\varphi$.  This gives rise to the following notion of
JTB with no inadequate lemmas:
\[
\JTB{+}\NIL(\varphi) := \JTB^i(\varphi)\land\NIL(\varphi)\enspace.
\]
This notion of JTB is logically stronger than external JTB:
$\JTB^e(\varphi)$ only requires that there be one adequate reason
supporting an internal JTB of $\varphi$, whereas $\JTB{+}\NIL$
requires adequacy of \emph{every} reason supporting an internal JTB of
$\varphi$.  Thus
\[
\vdash\JTB{+}\NIL(\varphi)\to\JTB^e(\varphi)
\]
but not the other way around.

These considerations raise a potential worry for the $\JTB^e$ analysis
of knowledge: what happens when the agent rests her beliefs in a
proposition (such as $p\lor q$) on multiple sources?  For
example,\footnote{The example was suggested to the first author by
  Timothy Williamson (private communication).  See \cite{egre2017knowledge}.}
suppose our agent, who has excellent eyesight, sees someone in the
distance but cannot quite make out who it is.  Nevertheless, based on
what she can see (represented by reason $s$), she correctly believes
that the person in the distance is either Tweedle Dee or Tweedle Dum
(represented respectively by $p\lor q$).  Further, she has another
reason $r$ to believe that the person is Tweedle Dee (i.e., $p$).  For
example, a friend might have told her that Tweedle Dum is on vacation
in some faraway country. Now, unknown to our agent, the person in the
distance is actually Tweedle Dum. Put formally:

\begin{align*}
  B(s\col(p\lor q))\land \lnot B(s\col p) \land \lnot B(s\col q)
  \land B(r\col p) \;\land 
  \notag{}
  \\
  \tag{TDTD}
  Bs \land Br \land  
  s \land \lnot r\land (\lnot p\land q)
  \enspace.
  \label{eq:TDTD}
\end{align*}
That is, the agent believes $s$ supports the disjunction (that the
person is Dee or Dum) but no disjunct, she believes $r$ supports the
claim it is Dee, she believes $s$ and $r$ to be adequate, $s$ is
adequate (by hypothesis, because the agent's eyesight is excellent,
and it could not possibly be someone other than Dee or Dum), $r$ is
inadequate (since the friend's information is not reliable), and the
person is actually not Dee but Dum.  Now suppose we add to
\eqref{eq:TDTD} the assumption
\begin{equation*}\tag{NoR}
  (\forall t)((t\neq s \land t\neq r)\to \lnot Bt)
  \label{eq:NoR}
\end{equation*}
that the agent believes no other reasons to be adequate. It can be
shown that
\[
\begin{array}{ll}
  \vdash[\eqref{eq:TDTD}\land\eqref{eq:NoR}]\to\JTB^e(p\lor q)
  & \text{but }
  \\[.5em]
  \nvdash[\eqref{eq:TDTD}\land\eqref{eq:NoR}]\to\JTB{+}\NIL(p\lor q)
  \enspace.
\end{array}
\]
In words: the agent has external JTB that $p\lor q$ (because the
adequate reason $s$ supports the disjunction); however, she does not
have JTB with NIL of $p\lor q$ (because the inadequate reason $r$
supports the disjunction). But is it a mistake to equate knowledge
with $\JTB^e$ instead of with $\JTB{+}\NIL$?

One reaction is to deny that there is knowledge when the universal
condition is not satisfied. For an example supporting this reaction,
suppose the agent proves that a certain Mersenne number $m=2^n-1$ is
prime.  Later, she bolsters her belief in the primality of $m$ by
coming to believe (incorrectly) that all Mersenne numbers are prime
(i.e., all numbers of the form $2^k-1$ are prime, which is false).
Can the agent still be said to know that $m$ is prime?  On one
account, it seems not.  Such situations of \emph{mixed reasons}, where
an agent has both adequate and inadequate reasons supporting the same
proposition, arguably occur often in everyday life.

We are inclined to the opposite view: in a situation of
mixed reasons, the agent can still have knowledge.  Returning to the
primality example, if the agent learns that not all Mersenne numbers
are prime, then she will still believe that $m$ is prime on the basis
of her adequate ``backup'' reason (that she proved $m$ is prime).  So
she could still be said to know that $m$ is prime.\footnote{These
  ideas are related to the defeasibility theory of knowledge
  \cite{LehPax69:JP,Leh00:Book}.} The Dee/Dum case is arguably
similar: if the agent were to learn that $r$ is unreliable, then she
would still have an external JTB of the disjunction based on the
``backup'' reason $s$.

Perhaps the most difficult challenge to the claim that
\eqref{eq:knowledge} is correct even in the case of mixed reasons
comes when the quantity of inadequate reasons vastly exceeds the
quantity of adequate reasons.  For example, suppose our agent has an
adequate reason $s$ (based on an assertion in some recent official
document) that one of the 20 members of the faculty of department $D$
is a logician; further, suppose she has inadequate reasons
$r_1,\dots,r_{19}$ (based on a mistaken understanding of which
specialties imply competence in logic) that the first 19 names listed
on the department $D$ faculty roster are logicians.  We might be
hard-pressed to say that our agent knows that department $D$ has a
logician on staff.

Perhaps this suggests that the agent in a case of mixed reasons can
only be said to know the proposition if she also knows that her
reasons are adequate.  We resist this move, basically because we
accept that an agent may have an adequate reason without necessarily
knowing that that reason is adequate (more on this below in the
conclusion).  Therefore, if we assume for the sake of argument that
our agent values all reasons equally, then a tiny island of adequacy
within an ocean of inadequacy is sufficient for the defender of
mixed-reason knowledge.  This grants that the agent's reasons are in
some sense confused or that an agent who has only adequate reasons
(and hence satisfies $\JTB{+}\NIL$) seems to ``know better'' than the
agent with mixed reasons.  But if one agent ``knows better,'' it does
not follow that the other does not know at
all.

In our view, an account of knowledge that would not allow for mixed
reasons would in fact be too demanding. We consider it a virtue of our
account that it allows for mixed reasons, precisely because we think
it gives us a more realistic picture of the process of acquiring and
managing reasons. We say that an agent knows a proposition if he
believes that proposition based on at least one adequate reason. But
knowing is more than passively believing propositions on the basis of
reasons. It obviously also involves comparing and relating
reasons. Consider an agent who believes the true proposition $p$ on
the basis of $r$ and $r'$, but who eventually realizes that $r'$
supports a false proposition. The agent ought to revise her belief in
$r'$, and also to reconsider her reasons for $p$. Hence, while our
account of knowledge commits us to what Dutant calls method
infallibilism, it does make room for errors and revisions in how
reasons are acquired. It contains, in that sense, a measure of
fallibilism.

\subsection{Knowledge from inadequate reasons?}

Although our account of knowledge allows for mixed reasons, a converse
objection is that it may turn out to be too strong relative to
ordinary knowledge ascriptions. The problem in this case is even more
radical, and concerns whether one can have knowledge on the basis of a
single, inadequate reason.

Here is an example:\footnote{We are indebted to Elia Zardini for
  raising the objection, and for the first example.} based on his
seeing Jones driving a Ford in the past (let us call that reason $r$),
Smith comes to wrongly believe that Jones owns a Ford. Let us modify
the scenario and suppose that Jones does in fact own a car (say, a
Mazda). Based on his seeing Jones drive a Ford in the past, Smith also
comes to believe that Jones owns a car. Could it not happen,
intuitively, that although Smith fails to have knowledge on the basis
of $r$ that Jones owns a Ford ($p$), he nevertheless has knowledge on
the basis of $r$ that Smith owns a car ($q$)? In our system, this is
not possible, for by (A), if $r$ is adequate for $q$, then $r$ must be
adequate for any other proposition that $r$ supports, hence for $p$ as
well. Dretske's treatment of conclusive reasons would be able to
address this problem: a reason $r$ can be conclusive for $q$ without
being conclusive for $p$, even when $p$ entails $q$. Our approach does
not have this flexibility.\footnote{\label{fn}One option we do not
  explore here: change adequacy from a unary property on reasons to a
  binary property on reasons and propositions. Thus instead of having
  ``$r$'' for adequacy of reason $r$ with respect to all propositions
  it supports, we would have ``$A(r,p)$'' for adequacy of $r$ with
  respect to proposition $p$. We would have to adjust other axioms:
  (A) would be $r\col\varphi\to(A(r,\varphi)\to\varphi)$, (BA) would
  be $B(r\col\varphi)\to(B(A(r,\varphi))\to B\varphi)$, and (AS) would
  be $B(r\col\varphi)\to(A(r,\varphi)\to(r\col\varphi))$. This would
  also require a change to the semantics that would make us much more
  in-line with the syntactic dependencies connecting reasons with
  specific formulas, which is familiar from Justification Logic
  \cite{ArtFit11:SEP}. This approach would add flexibility at the cost
  of simplicity.}

The question, more generally, is whether the same reason can
adequately justify one to believe $q$ without adequately justifying
one to believe a stronger proposition $p$. Similar cases have been
discussed by Warfield \cite{War05:KFF}, Fitelson
\cite{Fit10:strengthening}, and Sorensen
\cite{Sor15:fugu}. Warfield argues that I may know that I am not
late for the meeting if I believe that it is currently 2:58pm, when in
fact it is 2:56pm, assuming the meeting is at 7pm. On our account, my
reason to believe that it is currently less than 7pm is inadequate,
simply because it also supports the false proposition that it is
2:58pm. This is a case in which I have $\JTB^i$ that it is less than
7pm, without having $\JTB^e$ that it is less than 7pm. For anyone
whose intuition is that I do hold knowledge that it is is less than
7pm on the basis of my observing ``2:58pm'' on the watch, our equation
between knowledge and $\JTB^e$ is too strong in this case.

One option in the face of such examples is to bite the bullet and to
resist the intuition that I know I am not late for the meeting, or
that I know that Jones owns a car. But we think this is not the right
response. My evidence ``2:58pm'' is obviously wrong regarding the
actual time, but still close enough to the actual time to be
relevantly used. The case would be different, it seems to us, if the
agent's watch indicated 6pm when it is 2:56pm, or even 9am. For the
latter cases, our intuition is that I merely have a luckily true
belief. More generally, we think the problem concerns how much
approximation is tolerated in forming beliefs based on one's
evidence. If, when I see ``2.58pm'' ($r$) on my watch, I form the
belief ``it is around 2:58pm'' ($p$), and from that proposition I
infer ``it is less than 7pm'' ($q$), then my reason $r$ now is
veridical for both $p$ and $q$. A way out, therefore, might be to
relativize the adequacy of a reason to the selection of an appropriate
domain of propositions supported by that reason.

This nevertheless puts pressure on us to clarify the relation of
support between a reason and a proposition. In our statement of the
axiom (A), we include no restriction on the support relation. We think
it is better to be normative, and not to include any such restriction
in the definition of knowledge in terms of $\JTB^e$. On the other
hand, we are ready to accept that in actual ascriptions of knowledge,
the definition of adequate evidence is relativized to various domains
of propositions. Consider the Warfield example again: this is not a
perfect case of knowledge. This still counts as evidence that comes
close to adequate, though not perfectly adequate. But it is adequate
given the relevant domain of propositions. Our account, therefore,
does not rule out the familiar mechanisms of contextualization at play
in ordinary knowledge ascriptions, despite being fundamentally more
normative.

\section{Conclusion}

Is knowledge the same thing as JTB? We wrote this paper based on a
persistent feeling that both answers are defensible. For the negative:
Gettier's examples show that plausible reasons may be inadequate.  For
the positive: a JTB based on an adequate reason seems to rule out the
possibility of Gettier cases and can arguably be construed as a form
of knowledge.

We have shown that our framework is sufficient to address reason-based
belief and that it can be applied to important notions in
epistemology.  However, we have neglected to provide a further
analysis of ``adequacy of a reason'' into more primitive concepts.
While this notion was used as a primitive in this paper, an in-depth
study of this notion may be required in a full philosophical analysis
of the concept of knowledge.  Regardless, we think that our three-part
hierarchy of reasons (non-veridical, veridical for a proposition but
inadequate, and adequate) is itself sufficiently fruitful to
legitimate our approach. Our account, as we have explained,
fundamentally commits us to a form of infallibilism in the definition
of knowledge. But our treatment of mixed reasons also makes room for
the possibility of errors, since inadequate reasons typically coexist
with adequate reasons. We can therefore distinguish two levels in our
account of knowledge: the level of atomic reasons (and of their
support to various propositions), and the level of the network of
reasons that an agent needs to compare and manage. A discussion of
that second level lies beyond the scope of this paper, but it deserves
to be considered, because our externalist account of the notion of
adequacy remains compatible with a more internalist perspective on
knowledge.

A related question on which we propose to end is the following: how
does an agent know whether a reason is adequate?  According to
\eqref{eq:knowledge}, the agent knows $p$ if and only if there exists
an adequate reason $r$ that the agent believes is adequate and
supports $p$. Therefore, the agent knows $r$ is adequate if and only
if there exists a reason $s$ that she believes is adequate and
supports $r$ (i.e., $s\land Bs\land B(s\col r)$).\footnote{Note that
  we may have $s=r$.  In particular, it is consistent with our theory
  for reasons to be self-supporting (i.e., $r:r$).  It is also
  consistent with our theory for reasons to be non--self-supporting
  (i.e., $\lnot (r:r)$).  Since our theory permits either option, it
  is up to the user of our theory to choose which way to go as per her
  preference. We also note that it is consistent for $r$ to be
  self-rejecting (i.e., $r\col\lnot r$), and it is consistent for $r$
  to be non--self-rejecting (i.e., $\lnot (r\col\lnot r)$).}  Our
framework therefore admits the possibility that the agent may know $p$
based on an adequate reason $r$ without knowing that $r$ is itself
adequate.  In this, our framework supports the main contention of an
externalist account of knowledge: one may know $p$ without knowing
that one knows $p$ (see \cite{Dre71:CR,Wil00:Book}).\footnote{The
  internalist who objects to this need not despair: though we do not
  do so here, it is possible to extend our framework so that knowledge
  is internalizable; see \cite{ArtFit11:SEP} (and the ``proof
  checker'' operator) for details.}  We think this is right for the
externalist, though we emphasize that our theory is in principle
neutral regarding the existence of reasons justifying the adequacy of
other reasons.  

\section*{Acknowledgments}

The authors thank two anonymous reviewers of the RSL for detailed and helpful comments, as well as Wes Holliday and Sean Walsh for their editorial work. They also thank the following people for comments and discussion:
Amir Anvari,
Igor Douven,
Julien Dutant,
Rohan French,
Chenwei Shi,
Meghdad Ghari,
Andreas Herzig,
Clayton Littlejohn,
Anne Meylan,
Jennifer Nagel,
Ram Neta,
Jean-Baptiste Rauzy, 
Roy Sorensen,
Daniele Sgaravatti,
Benjamin Spector,
Giuseppe Spolaore,
Sylvia Wenmackers,
Timothy Williamson, and
Elia Zardini.

We also thank Joelle Proust, Igor Douven, Fabien Mikol, and Giuliano
Torrengo who organized the respective events where our paper was
presented (the IJN seminar on epistemic norms, the SND epistemology
colloquium, the Paris IV workshop on the Limits of Knowing, and the
Latin Meeting in Analytic Philosophy in Milan). We also thank the
various audience members present at these events for their questions.

\appendix
\section{Technical Results}

\subsection{Semantics for $\RBB_\sigma$ and $\RBB^+_\sigma$}
\label{section:sigma-semantics}

The models for $\RBB$ can be construed as models for $\RBB_\sigma$ if
we require the following additional properties:
\begin{description}
\item[(ma)] $w\in \sigma(w)$ and $r^\circ\in N(w)$ together imply that
  $w\in r(w)$.

  which says that if $\sigma$ is reflexive, then each reason $r$
  believed to be reflexive is in fact reflexive;

\item[(mb)] $\sigma^\circ\in N(w)$,

  which says that the agent believes $\sigma$ is reflexive; and

\item[(mr)] $\sem{r\col\varphi}\in N(w)$ and $r^\circ\in N(w)$
  together imply that $\sem{\sigma\col\varphi}\in N(w)$,

  which says that the agent believes $\sigma$ supports $\varphi$
  whenever she believes $r$ supports $\varphi$ and she believes $r$ is
  reflexive.
\end{description}
We write the satisfaction relation $\models_\sigma$ to indicate that
we restrict to models satisfying (ma), (mb), and (mr). By
Theorem~\ref{theorem:RBBsigma-determinacy}, $\RBB_\sigma$ is sound and
complete for the class of models satisfying (ma), (mb), and (mr).
Models for the theory $\RBB_\sigma^+$ must satisfy (ma), (mb), (mr),
and the following property:
\begin{description}
\item[(mt)] $\sem{\varphi}\in N(w)$ implies
  $\sem{\sigma\col\varphi}\in N(w)$,

  which says that the agent believes $\sigma$ supports $\varphi$
  whenever she believes $\varphi$.
\end{description}
We write $\models_\sigma^+$ to indicate that we restrict to models
satisfying (ma), (mb), (mr), and (mt).  By
Theorem~\ref{theorem:RBBsigma-determinacy}, $\RBB_\sigma^+$ is sound
and complete for the class of models satisfying (ma), (mb), (mr), and
(mt).

\subsection{Semantics for $\QRBB$, $\QRBB_\sigma$, and $\QRBB_\sigma^+$}
\label{section:QRBB-semantics}

The models for $\RBB$ can be used as models for $\QRBB$ as well.  All
that we must do is add the following satisfaction principles:
\begin{itemize}
\item $M,w\models r=s$ means that $r=s$.

\item $M,w\models(\forall r)\varphi$ means that
  $M,w\models\varphi[s/r]$ for each $s$ free for $r$ in $\varphi$.
\end{itemize}
It is shown in Theorem~\ref{theorem:QRBB-determinacy} that $\QRBB$ is
sound and complete for this semantics: for each theory we have
$\QRBB\vdash\varphi$ if and only if $\models\varphi$. However, there
is one caveat: our proof of the completeness result requires that the
set $R$ of reasons be at least countably infinite.

Additional semantic restrictions must be imposed to ensure that
$\QRBB$ models also work for $\QRBB_\sigma$ or for $\QRBB_\sigma^+$;
see \S\ref{section:sigma-semantics} for details.  Soundness and
completeness for $\QRBB_\sigma$ and for $\QRBB_\sigma^+$ follows by
Theorem~\ref{theorem:QRBBsigma-determinacy}, with the same caveat for
completeness as for $\QRBB$.

\subsection{$\RBB$ Soundness and Completeness}

We now prove the following theorem.

\begin{theorem}[$\RBB$ Soundness and Completeness]
  \label{theorem:RBB-determinacy}
  For each $\varphi\in F$:
  \[
  \RBB\vdash\varphi \quad\text{iff}\quad \models\varphi\enspace.
  \]
\end{theorem}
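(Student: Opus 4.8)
The plan is to prove the two implications separately: soundness ($\vdash\varphi\Rightarrow\models\varphi$) by a routine induction on derivations, and completeness ($\models\varphi\Rightarrow\vdash\varphi$) contrapositively via a canonical-model construction.

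For soundness I would induct on the length of an $\RBB$-derivation, checking that each axiom scheme is valid and that each rule preserves validity. (CL) and (MP) are classical; (RK) and (RN) are the usual $\mathsf{K}$-axiom and necessitation facts for the relational modality ``$r\col$'' under the clause that $M,w\models r\col\varphi$ iff $r(w)\subseteq\sem\varphi_M$; (A) is immediate set theory, since $r(w)\subseteq\sem\varphi_M$ together with $w\in r(w)$ forces $w\in\sem\varphi_M$; (RB) is exactly the model condition (rb), once one notes that $\sem{r}_M=r^\circ$; (D) is the model condition (d); and (E) falls out of the neighborhood clause for $B$. I would also remark that property (pr) is precisely what makes $\models$ well defined on letters belonging to $P\cap R$.

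For completeness I would, given $\nvdash\varphi$, use Lindenbaum's lemma to extend $\{\lnot\varphi\}$ to a maximal $\RBB$-consistent set $\Gamma_0$ and build a model in which $\varphi$ is false at a point built from $\Gamma_0$. The delicate point is axiom (A): it makes veridicality of $r$ conditional on the truth of $r$, so the worlds cannot simply be the maximal consistent sets with the standard accessibility relation, since that relation is automatically ``reflexive'' at every $\Gamma$ with $\{\chi\mid r\col\chi\in\Gamma\}\subseteq\Gamma$, and such a $\Gamma$ need not contain $r$ (recall that $\{\lnot r\}\cup\{r\col\varphi\to\varphi\mid\varphi\in F\}$ is consistent). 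To repair this I would take worlds to be pairs $(\Gamma,i)$ with $\Gamma$ maximal consistent and $i\in\{0,1\}$ a ``copy'' index, define $[r]$ between copies of distinct maximal consistent sets by the usual condition, and between copies of a single $\Gamma$ require $((\Gamma,i),(\Gamma,j))\in[r]$ iff either $i\neq j$ and $\{\chi\mid r\col\chi\in\Gamma\}\subseteq\Gamma$, or $i=j$ and $r\in\Gamma$. Axiom (A) then yields $(\Gamma,i)\in r((\Gamma,i))$ iff $r\in\Gamma$ --- which is what the truth lemma needs for the formula $r$ --- while the spare copy ensures the ``existence lemma'' for ``$r\col$'' still produces a genuine witnessing successor even when $\{\chi\mid r\col\chi\in\Gamma\}\cup\{\lnot\psi\}$ happens to have only $\Gamma$ as an extension. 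Property (pr) is then immediate. For the neighborhood function, the naive choice $N(\Gamma,i)=\{\hat\psi\mid B\psi\in\Gamma\}$ (with $\hat\psi$ the proof set of $\psi$) validates (E) via the standard fact that equal proof sets arise only from provably equivalent formulas, and (d) via (D); but it need not satisfy the model condition (rb), since $r((\Gamma,i))$ need not be a proof set. So I would enlarge $N(\Gamma,i)$ to contain, additionally, every superset of $r((\Gamma,i))$ for which $Br\in\Gamma$. One then checks that (rb) now holds by construction (noting $r^\circ=\hat r$, so $r^\circ\in N(\Gamma,i)$ already forces $Br\in\Gamma$); that (d) survives the enlargement, using the $\RBB$-theorems (RCLC) and (RC) to forbid a set and its complement from both entering $N(\Gamma,i)$; and that no unwanted belief formula becomes true, since $\hat\psi\supseteq r((\Gamma,i))$ with $Br\in\Gamma$ implies $\{\chi\mid r\col\chi\in\Gamma\}\vdash\psi$, hence $r\col\psi\in\Gamma$ by (RK) and (RN), hence $B\psi\in\Gamma$ by (RB).

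With the model constructed, I would prove the Truth Lemma $M,(\Gamma,i)\models\psi$ iff $\psi\in\Gamma$ (equivalently $\sem\psi_M=\hat\psi$) by induction on $\psi$: the Boolean cases are trivial, the formula-$r$ case is the reflexivity bookkeeping built into $[r]$, the ``$r\col$'' case splits into the trivial inclusion and the existence-lemma inclusion above, and the ``$B$'' case is read off the enlarged $N$ together with the proof-set/equivalence lemma and the computation just mentioned. Evaluating at $(\Gamma_0,0)$, where $\lnot\varphi\in\Gamma_0$, gives $M,(\Gamma_0,0)\not\models\varphi$, hence $\not\models\varphi$, which is the contrapositive of completeness. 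I expect the main obstacle to be exactly the conditional character of axiom (A): arranging a canonical accessibility relation whose reflexivity at each world mirrors the truth of $r$ there without breaking the standard existence-lemma argument for ``$r\col$'' --- this is what forces the copying device --- and then re-establishing that the model condition (rb) can still be met, which is why the neighborhood function has to be enlarged beyond its usual proof-set definition.
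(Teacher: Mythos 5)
Your proposal is correct and follows essentially the same route as the paper's own proof: soundness by induction on derivations, and completeness via a canonical model whose worlds are two indexed copies of each maximal consistent set, with $[r]$ reflexive at $(\Gamma,i)$ exactly when $r\in\Gamma$ and with the spare copy securing the existence lemma for ``$r\col$''. Your neighborhood function (proof sets plus all supersets of $r(\Gamma_i)$ for accepted $r$) coincides with the paper's (proof sets plus only the non--formula-definable such supersets), since, as you note, any definable superset already yields $B\psi\in\Gamma$ via (RN), (RK), and (RB).
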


Soundness is by induction on the length of derivation. The arguments
for (CL), (MP), (RK), (D), (RN), and (E) are straighforward.  We
consider the remaining cases.
\begin{itemize}
\item Validity of (A): $\models r\col\varphi\to(r\to\varphi)$.
  
  $M,w\models r\col\varphi$ and $M,w\models r$ together imply that
  $r(w)\subseteq\sem{\varphi}_M$ and $w\in r(w)$.  But then
  $M,w\models\varphi$.

\item Validity of (BRK):
  $\models B(r\col (\varphi\to\psi)) \to (B(r\col \varphi)\to B(r\col\psi))$.

  Suppose $M,w\models B(r\col (\varphi\to\psi))$ and
  $M,w\models B(r\col \varphi)$. This means
  \[
  \sem{r\col(\varphi\to\psi)}_M\in N(w) \text{ and }
  \sem{r\col\varphi}_M\in N(w)\enspace.
  \]
  Applying (brk), it follows that $\sem{r\col\psi}_M\in N(w)$, which
  is what it means to have $M,w\models r\col\psi$.

\item Validity of (BA):
  $\models B(r \col \varphi) \rightarrow (Br \rightarrow B\varphi)$.

  Assume $M,w\models B(r\col\varphi)$ and $M,w\models Br$. It follows
  that $\sem{r\col\varphi}_M\in N(w)$ and $r^\circ\in N(w)$. Applying
  (ba), we obtain $\sem{\varphi}_M\in N(w)$. But this is what it means
  to have $M,w\models B\varphi$.

\item Validity of (AS):
  $\models B(r\col \varphi) \to (r \rightarrow (r\col \varphi))$.

  Assume $M,w\models B(r\col\varphi)$ and $w\in r(w)$. Hence
  $\sem{r\col\varphi}_M\in N(w)$ and $w\in r(w)$. Applying (as), we
  obtain $r(w)\subseteq\sem{\varphi}_M$. That is,
  $M,w\models r\col\varphi$.
\end{itemize}
So $\RBB$ is sound.

For completeness, we prove that $\RBB\nvdash\theta$ implies there
exists a pointed model $(M_c,\Gamma^\theta_1)$ satisfying
$M_c,\Gamma^\theta_1\not\models\theta$.  We use a canonical model
construction to build the model $M_c=(W,[\cdot],N,V)$ as follows.
First, to say that a set $S$ of formulas is \emph{consistent} means
that for no finite $S'\subseteq S$ do we have
$\RBB\vdash(\bigwedge S')\to\bot$, where $\bot$ is a fixed
contradiction such as $p\land\lnot p$.  To say a set of formulas is
\emph{maximal consistent} means that it is consistent and adding any
formula not already present will result in a set that is inconsistent
(i.e., not consistent). Let $M$ bet the set of all maximal consistent
sets of formulas.  By a standard Lindenbaum construction, it follows
that $\{\lnot\theta\}$ can be extended to some $\Gamma^\theta\in M$
and therefore $M$ is not empty.  We define $W:=M\times\{1,2\}$ and
will write $(\Gamma,i)\in W$ in the abbreviated form $\Gamma_i$. Since
$M$ is nonempty, $W$ is nonempty. For each reason $r\in R$ and
$\Gamma\in M$, define the set
\[
\Gamma^r := \{\varphi\in F \mid r\col\varphi\in\Gamma\}
\]
of $r$-supported formulas in $\Gamma$.  We then define $[r]$ by
setting
\[
r(\Gamma_i):=\{\Delta_j\in W \mid \Gamma^r\subseteq\Delta \text{ \&
} (\lnot r\in\Gamma\Rightarrow \Delta_j\neq\Gamma_i) \}\enspace.
\]
This way, a world $\Delta_j$ is $r$-accessible from $\Gamma_i$ iff
$\Delta_j$ contains all formulas $\varphi$ that are $r$-supported at
$\Gamma_i$ (as per membership of $r\col\varphi$ in $\Gamma$), unless
of course $\Delta_j=\Gamma_i$ and reflexivity is forbidden by
$\lnot r\in\Gamma$.  For each formula $\varphi\in F$, define
\[
W(\varphi):=\{\Gamma_i\in W\mid\varphi\in\Gamma\}
\]
to be the set of worlds defined by the formula $\varphi$. Then let
\[
N^+ := \{ X\subseteq W \mid \forall\varphi\in F:X\neq W(\varphi)\}
\]
be the set of worlds not definable by any formula.  For each
$\Gamma_i\in W$, we define
\[
N^+(\Gamma_i):=\{X\in N^+\mid \exists Br\in\Gamma:
r(\Gamma_i)\subseteq X \text{ \& } 
\forall\theta\in\Gamma^r(B(r\col\theta)\in\Gamma)\}\enspace.
\]
Intuitively, this is the set of non--formula-definable neighborhoods
$X$ for which there is a reason the agent accepts that supports $X$
and the agent believes this reason supports all the formulas it
actually does support.  The neighborhood function $N$ is then defined
by
\[
N(\Gamma_i) := \{X\subseteq W\mid \exists B\varphi\in\Gamma:
X=W(\varphi)\}\cup N^+(\Gamma_i)\enspace.
\]
Therefore, an agent believes a neighborhood $X$ iff the agent believes
some formula $\varphi$ that defines $X$ or, if $X$ is
non--formula-definable, there is an accepted reason supporting $X$ and
the agent believes the reason supports all the formulas it actually
does support.  Finally, we define the valuation by
\[
V(\Gamma_i):=\{p\in P\mid p\in\Gamma\}\enspace.
\]
This defines $M_c$. To check that $M_c$ is indeed a pre-model, we must
verify that $M_c$ satisfies the property (pr). We prove this now.
\begin{itemize}
\item (pr): if $x\in P\cap R$, then $x\in V(\Gamma_i)$ if and only if
  $\Gamma_i\in x(\Gamma_i)$.

  So assume $x\in V(\Gamma_i)$.  This means $x\in\Gamma$.  But then we
  have $\lnot x\notin\Gamma$ by the consistency of $\Gamma$.  Further,
  since $x\in\Gamma$, it follows by (A) and the maximal consistency of
  $\Gamma$ that $\Gamma^x\subseteq\Gamma$.  But $\lnot x\notin\Gamma$
  and $\Gamma^x\subseteq\Gamma$ together imply
  $\Gamma_i\in x(\Gamma_i)$, which completes the argument for this
  direction.  

  For the converse direction, if $\Gamma_i\in x(\Gamma_i)$, then it
  follows by the definition of $x(\Gamma_i)$ that
  $\lnot x\notin\Gamma$.  So $x\in\Gamma$ by the maximal consistency
  of $\Gamma$.  But then we have $x\in V(\Gamma_i)$ by the definition
  of $V$.
\end{itemize}
So $M_c$ is indeed a pre-model.

We prove the \emph{Truth Lemma\/}: for each formula $\varphi\in F$ and
world $\Gamma_i\in W$, we have $\varphi\in\Gamma$ iff
$M_c,\Gamma_i\models\varphi$.  The proof is by induction on the
construction of formulas, and the arguments for the base and Boolean
inductive step cases are standard, so we only consider the remaining
non-Boolean inductive step cases.
\begin{itemize}
\item Inductive step: $r\in\Gamma$ iff $M_c,\Gamma_i\models r$.
  
  If $r\in\Gamma$, then it follows by (A) and maximal consistency
  that $\Gamma^r\subseteq\Gamma$ and therefore that $\Gamma_i\in
  r(\Gamma_i)$.  But this is what it means to have
  $M_c,\Gamma_i\models r$.

  Conversely, if $M_c,\Gamma_i\models r$, then we have $\Gamma_i\in
  r(\Gamma_i)$.  By the definition of $N(\Gamma_i)$, we have $\lnot
  r\notin\Gamma$ and therefore $r\in\Gamma$ by maximal consistency.

\item Inductive step: $r\col\varphi\in\Gamma$ iff $M_c,\Gamma_i\models
  r\col\varphi$.

  If $r\col\varphi\in\Gamma$, then we have $r(\Gamma_i)\subseteq
  W(\varphi)$.  By the induction hypothesis,
  $r(\Gamma_i)\subseteq\sem{\varphi}_{M_c}$.  But this is what it means
  to have $M_c,w\models r\col\varphi$.

  Conversely, if $M_c,w\models r\col\varphi$, then we have
  $r(\Gamma_i)\subseteq\sem{\varphi}_{M_c}$ and hence
  $r(\Gamma_i)\subseteq W(\varphi)$ by the induction hypothesis.
  Toward a contradiction, assume $\lnot r\col\varphi\in\Gamma$.  It
  follows that $\Gamma^r\cup\{\lnot\varphi\}$ is consistent, for
  otherwise there would exist a finite $S\subseteq\Gamma^r$ such that
  $\vdash(\bigwedge S)\to\varphi$, hence $\vdash(\bigwedge_{\chi\in
    S}r\col\chi)\to r\col\varphi$ by modal reasoning, and therefore
  $r\col\varphi\in\Gamma$, which is impossible because it would follow
  by the assumption $\lnot r\col\varphi\in\Gamma$ that $\Gamma$ is
  inconsistent.  So we may extend the consistent set
  $\Gamma^r\cup\{\lnot\varphi\}$ to some $\Delta\in M$.  Taking $j\neq
  i$, it follows that $\Delta_j\notin W(\varphi)$ and $\Delta_j\in
  r(\Gamma_i)$, which contradicts $r(\Gamma_i)\subseteq W(\varphi)$.
  So our assumption that $\lnot r\col\varphi\in\Gamma$ is incorrect;
  what we actually have is that $\lnot r\col\varphi\notin\Gamma$ and
  therefore that $r\col\varphi\in\Gamma$ by maximal consistency.

\item Inductive step: $B\varphi\in\Gamma$ iff $M_c,\Gamma_i\models
  B\varphi$.

  If $B\varphi\in\Gamma$, then it follows by the definition of
  $N(\Gamma_i)$ that $W(\varphi)\in N(\Gamma_i)$.  By the induction
  hypothesis, $W(\varphi)=\sem{\varphi}_{M_c}$, and hence
  $\sem{\varphi}_{M_c}\in N(\Gamma_i)$.  But this is what it means to
  have $M_c,\Gamma_i\models B\varphi$.

  Conversely, if $M_c,w\models B\varphi$, then we have
  $\sem{\varphi}_{M_c}\in N(\Gamma_i)$ and therefore that
  $W(\varphi)\in N(\Gamma_i)$ by the induction hypothesis. Since
  $W(\varphi)\notin N^+$, it follows that there exists
  $B\psi\in\Gamma$ such that $W(\psi)=W(\varphi)$.  From this we have
  that $\vdash\psi\leftrightarrow\varphi$, for otherwise
  $\{\psi,\lnot\varphi\}$ could be extended to $\Delta\in M$
  satisfying $\Delta_i\in W(\psi)-W(\varphi)$ or
  $\{\lnot\psi,\varphi\}$ could be extended to $\Omega\in M$
  satisfying $\Omega_i\in W(\varphi)-W(\psi)$, but both contradict
  $W(\varphi)=W(\psi)$.  Applying (E), we have $\vdash
  B\psi\leftrightarrow B\varphi$ and therefore it follows by
  maximal consistency that $B\varphi\in\Gamma$.
\end{itemize}
This completes the proof of the Truth Lemma. 

We prove the following \emph{Consistency Lemma\/}: for each $r\in R$
and $\Gamma_i\in W$, if $Br\in\Gamma$ and
$\forall\theta\in\Gamma^r(B(r\col\theta)\in\Gamma)$, then $\Gamma^r$
is consistent.  Proceeding, assume $Br\in\Gamma$ and
$\forall\theta\in\Gamma^r(B(r\col\theta)\in\Gamma)$. Since
$Br\in\Gamma$, we have $\lnot B\lnot r\in\Gamma$ by (D) and maximal
consistency.  Toward a contradiction, suppose $\Gamma^r$ is not
consistent.  Then there exists a finite $S\subseteq\Gamma^r$ such that
$\vdash(\bigwedge S)\to\bot$.  Hence
$\vdash(\bigwedge_{\chi\in S}r\col\chi)\to r\col\bot$ by modal
reasoning. Applying maximal consistency and the fact that
$S\subseteq\Gamma^r$, we obtain $r\col\bot\in\Gamma$.  By maximal
consistency and the fact that $\vdash r\col\bot\to r\col\varphi$ for
any $\varphi$, we obtain $r\col\lnot r\in\Gamma$ and hence
$\lnot r\in\Gamma^r$.  Applying the assumption
$\forall\theta\in\Gamma^r(B(r\col\theta)\in\Gamma)$, it follows that
$B(r\col\lnot r)\in\Gamma$. Applying this and the assumption that
$Br\in\Gamma$, it follows by (BA) that $B\lnot r\in\Gamma$.  Since
$\lnot B\lnot r\in\Gamma$, it follows that the maximal consistent set
$\Gamma$ is not consistent, a contradiction.  Conclusion: $\Gamma^r$
is consistent. This completes the proof of the Consistency Lemma.

We now prove that $M_c$ is a model (and not just a pre-model); that
is, we prove that $M_c$ satisfies the properties (brk), (ba), (as),
and (d).
\begin{itemize} 
\item $M_c$ satisfies (brk): if
  $\sem{r\col(\varphi\to\psi)}_{M_c}\in N(\Gamma_i)$ and
  $\sem{r\col\varphi}_{M_c}\in N(\Gamma_i)$, then
  $\sem{r\col\psi}_{M_c}\in N(\Gamma_i)$.

  Assume $\sem{r\col(\varphi\to\psi)}_{M_c}\in N(\Gamma_i)$ and
  $\sem{r\col\varphi}_{M_c}\in N(\Gamma_i)$. By the Truth Lemma, it
  follows that $W(r\col(\varphi\to\psi))\in N(\Gamma_i)$ and
  $W(r\col\varphi)\in N(\Gamma_i)$. Since neither
  $W(r\col(\varphi\to\psi))$ nor $W(r\col\varphi)$ is a member of
  $N^+$, it follows by the definition of $N(\Gamma_i)$ that there
  exists $B\theta_1\in\Gamma$ and $B\theta_2\in\Gamma$ such that
  $W(\theta_1)=W(r\col(\varphi\to\psi))$ and
  $W(\theta_2)=W(r\col\varphi)$. As in the proof of the last inductive
  step of the Truth Lemma, it follows using (E) that
  $\vdash B\theta_1\leftrightarrow B(r\col(\varphi\to\psi))$ and that
  $\vdash B\theta_2\leftrightarrow B(r\col\varphi)$. Hence we have
  that $B(r\col(\varphi\to\psi))\in\Gamma$ and
  $B(r\col\varphi)\in\Gamma$ by maximal consistency.  Applying (BRK)
  and maximal consistency, we obtain $B(r\col\psi)\in\Gamma$. Applying
  the definition of $N(\Gamma_i)$, it follows that
  $W(r\col\psi)\in N(\Gamma_i)$.  Since we have that
  $W(r\col\psi)=\sem{r\col\psi}_{M_c}$ by the Truth Lemma, we conclude
  that $\sem{r\col\psi}_{M_c}\in N(\Gamma_i)$.

\item $M_c$ satisfies (ba): if
  $\sem{r\col\varphi}_{M_c}\in N(\Gamma_i)$ and
  $r^\circ\in N(\Gamma_i)$, then $\sem{\varphi}_{M_c}\in N(\Gamma_i)$.

  Assume $\sem{r\col\varphi}_{M_c}\in N(\Gamma_i)$ and
  $r^\circ\in N(\Gamma_i)$. Similar to our argument in the first part
  of the above proof for (brk), it follows from
  $\sem{r\col\varphi}_{M_c}\in N(\Gamma_i)$ that
  $B(r\col\varphi)\in\Gamma$. Since we have by the definition of
  $r^\circ$ that $r^\circ=W(r)$, it follows from
  $r^\circ\in N(\Gamma_i)$ that $W(r)\in N(\Gamma_i)$. Again by an
  argument similar to the first part of the above proof for (brk), we
  obtain from $W(r)\in N(\Gamma_i)$ that $Br\in\Gamma$. But then it
  follows from $B(r\col\varphi)\in\Gamma$ and $Br\in\Gamma$ by (BA)
  and maximal consistency that $B\varphi\in\Gamma$.  Applying the
  definition of $N(\Gamma_i)$, we obtain $W(\varphi)\in N(\Gamma_i)$.
  Applying the Truth Lemma, it follows that
  $\sem{\varphi}_{M_c}\in N(\Gamma_i)$.

\item $M_c$ satisfies (as): if
  $\sem{r\col\varphi}_{M_c}\in N(\Gamma_i)$ and
  $\Gamma_i\in r(\Gamma_i)$, then
  $r(\Gamma_i)\subseteq\sem{\varphi}_{M_c}$.

  Assume $\sem{r\col\varphi}_{M_c}\in N(\Gamma_i)$ and
  $\Gamma_i\in r(\Gamma_i)$. Similar to our argument in the first part
  of the above proof for (brk), it follows from
  $\sem{r\col\varphi}_{M_c}\in N(\Gamma_i)$ that
  $B(r\col\varphi)\in\Gamma$.  Now from $\Gamma_i\in r(\Gamma_i)$, it
  follows by the definition of $N(\Gamma_i)$ that
  \[
  \lnot r\in\Gamma\Rightarrow \Gamma_i\neq\Gamma_i\enspace.
  \]
  Hence $\lnot r\notin\Gamma$. But then it follows by maximal
  consistency that $r\in\Gamma$.

  So from $B(r\col\varphi)\in\Gamma$ and $r\in\Gamma$ we obtain by
  (AS) and maximal consistency that $r\col\varphi\in\Gamma$.  Hence
  $\varphi\in\Gamma^r$. To complete the argument, we wish to show that
  $r(\Gamma_i)\subseteq \sem{\varphi}_{M_c}$. So let us take an
  arbitrary $\Delta_j\in r(\Gamma_i)$ and prove that
  $\Delta_j\in\sem{\varphi}_{M_c}$. By the Truth Lemma, it suffices to
  prove that $\Delta_j\in W(\varphi)$. Proceeding, since
  $\Delta_j\in r(\Gamma_i)$, it follows by the definition of
  $r(\Gamma_i)$ that $\Gamma^r\subseteq\Delta$. But then we have by
  $\varphi\in\Gamma^r$ that $\varphi\in\Delta$ and hence
  $\Delta_j\in W(\varphi)$.

\item $M_c$ satisfies (d): if $X\in N(\Gamma_i)$, then
  $X':=W-X\notin N(\Gamma_i)$.  There are two cases to consider.

  First case for (d): assume $X\in N(\Gamma_i)-N^+(\Gamma_i)$.  It
  follows that there exists $B\varphi\in\Gamma$ such that
  $X=W(\varphi)$.  By (D) and the maximal consistency of $\Gamma$, we
  have $\lnot B\lnot\varphi\in\Gamma$.  Toward a contradiction, assume
  $X'\in N(\Gamma_i)$.  Since $X=W(\varphi)$, we have
  $X'=W(\lnot\varphi)$ by maximal consistency and therefore that
  $X'\notin N^+$.  Hence $X'\in N(\Gamma_i)-N^+(\Gamma_i)$, which
  means there exists $B\psi\in\Gamma$ such that $X'=W(\psi)$.  It
  follows that $\vdash\psi\leftrightarrow\lnot\varphi$, since
  otherwise $\{\psi,\varphi\}$ could be extended to some $\Delta\in M$
  satisfying $\Delta_j\in W(\psi)=X'$ and $\Delta_j\in W(\varphi)=X$
  or $\{\lnot\psi,\lnot\varphi\}$ could be extended to some
  $\Omega\in M$ satisfying $\Omega_k\in W-W(\psi)=X$ and
  $\Omega_k\in W-W(\varphi)=X'$, but both situations are impossible
  because $X'\cap X=\emptyset$. Hence
  $\vdash \psi\leftrightarrow\lnot\varphi$.  Applying (E), we obtain
  $\vdash B\psi\leftrightarrow B\lnot\varphi$ and therefore that
  $B\lnot\varphi\in\Gamma$, contradicting the consistency of $\Gamma$.
  Conclusion: $X'\notin N(\Gamma_i)$.

  Second case for (d): assume $X\in N^+(\Gamma_i)$.  This means there
  exists $Br\in\Gamma$ such that $r(\Gamma_i)\subseteq X$ and
  $\forall\theta\in\Gamma^r(B(r\col\theta)\in\Gamma)$.  Since
  $X\in N^+$, it follows that $X'\in N^+$ as well.  So, toward a
  contradiction, assume $X'\in N^+(\Gamma_i)$. Then we have
  $Br'\in\Gamma$ such that $r'(\Gamma_i)\subseteq X'$ and
  $\forall\theta\in\Gamma^{r'}(B(r'\col\theta)\in\Gamma)$, and hence
  $r'(\Gamma_i)\cap r(\Gamma_i)=\emptyset$.  If
  $\Gamma^r\cup\Gamma^{r'}$ were consistent, then we could extend this
  set to some $\Delta\in M$.  Taking $j\neq i$, it would follow that
  $\Gamma^r\subseteq\Delta$ and $\Gamma^{r'}\subseteq\Delta$ and that
  $\Delta_j\neq\Gamma_i$.  Hence we would have that
  $\Delta_j\in r'(\Gamma_i)\cap r(\Gamma_i)=\emptyset$, an
  impossibility.  So $\Gamma^r\cup\Gamma^{r'}$ is not consistent.
  Applying the Consistency Lemma and the fact we have
  $\{Br,Br'\}\subseteq\Gamma$ with
  $\forall\theta\in\Gamma^r(B(r\col\theta)\in\Gamma)$ and
  $\forall\theta\in\Gamma^{r'}(B(r'\col\theta)\in\Gamma)$, each of
  $\Gamma^r$ and $\Gamma^{r'}$ is consistent, so it follows from the
  inconsistency of $\Gamma^r\cup\Gamma^{r'}$ that there exists a
  finite nonempty subset $S$ of one of the two sets $\Gamma^r$ and
  $\Gamma^{r'}$ such that for some formula $\varphi$ that is a member
  of the other of these two sets we have
  $\vdash(\bigwedge S)\to\lnot\varphi$. Let us assume
  $S\subseteq\Gamma^r$ and $\varphi\in\Gamma^{r'}$; the argument for
  the other possibility, where $S\subseteq\Gamma^{r'}$ and
  $\varphi\in\Gamma^r$, is similar.  Proceeding, we have
  $\vdash(\bigwedge_{\chi\in S}r\col\chi)\to r\col\lnot\varphi$ by
  modal reasoning.  Since $S\subseteq\Gamma^r$, it follows by maximal
  consistency that $r\col\lnot\varphi\in\Gamma$ and hence
  $\lnot\varphi\in\Gamma^r$. As we have
  $\forall\theta\in\Gamma^r(B(r\col\theta)\in\Gamma)$, it follows that
  $B(r\col\lnot\varphi)\in\Gamma$. Since we also have that
  $Br\in\Gamma$, it follows by (BA) and maximal consistency that
  $B\lnot\varphi\in\Gamma$. But $\varphi\in\Gamma^{r'}$, from which it
  follows by $\forall\theta\in\Gamma^{r'}(B(r'\col\theta)\in\Gamma)$
  that $B(r'\col\varphi)\in\Gamma$.  Since we also have that
  $Br'\in\Gamma$, it follows by (BA) and maximal consistency that
  $B\varphi\in\Gamma$ and therefore that
  $\lnot B\lnot\varphi\in\Gamma$ by (D) and maximal consistency.  But
  then we have shown that $\lnot B\lnot\varphi\in\Gamma$ and
  $B\lnot\varphi\in\Gamma$, which implies $\Gamma$ is inconsistent, a
  contradiction.  Conclusion: $X'\notin N^+(\Gamma_i)$.
\end{itemize}
So $M_c$ is indeed a model, and therefore $(M_c,\Gamma^\theta_1)$ is a
pointed model. Thus, since $\lnot\theta\in\Gamma^\theta$, it follows
by the Truth Lemma that $M_c,\Gamma^\theta_1\not\models\theta$.
Completeness follows.

\subsection{$\RBB_\sigma$ and $\RBB_\sigma^+$ Soundness and
  Completeness}

Recalling the semantics for $\RBB_\sigma$ and for $\RBB_\sigma^+$ from
\S\ref{section:sigma-semantics}, we prove the following theorem.

\begin{theorem}[$\RBB_\sigma$ and $\RBB_\sigma^+$ Soundness and
  Completeness]
  \label{theorem:RBBsigma-determinacy}
  Assume $R$ contains the symbol $\sigma$.  For each $\varphi\in F$:
  \begin{eqnarray*}
    \RBB_\sigma\vdash\varphi &\text{iff}& \models_\sigma\varphi 
    \quad\text{and}
    \\
    \RBB_\sigma^+\vdash\varphi &\text{iff}& \models_\sigma^+\varphi 
    \quad\text{.}
  \end{eqnarray*}
\end{theorem}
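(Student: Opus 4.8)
The plan is to bootstrap off the canonical-model construction already carried out for Theorem~\ref{theorem:RBB-determinacy}. Observe that $\RBB_\sigma$ is just $\RBB$ together with the schemes (MA), (MB), (MR), that $\RBB_\sigma^+$ adds (MT) on top, and that the corresponding model classes are the $\RBB$-models additionally satisfying (ma), (mb), (mr) (and (mt)); so both soundness and completeness reduce to handling the new schemes and the new semantic conditions, everything else being inherited. For soundness, it suffices to extend the induction of Theorem~\ref{theorem:RBB-determinacy} by one validity check per new scheme on the appropriate restricted class; each check is immediate by unfolding the satisfaction clauses at a world $w$ and reading the implication off the matching semantic condition --- (MA) from (ma), (MB) from (mb), (MR) from (mr), (MT) from (mt) --- using that $\sem{r}_M=r^\circ$, so that $M,w\models Br$ iff $r^\circ\in N(w)$.

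For completeness I would reuse the model $M_c=(W,[\cdot],N,V)$ verbatim, except that ``consistent'' now refers to $\RBB_\sigma$ (resp.\ $\RBB_\sigma^+$) and $[\cdot]$ in particular supplies the relation $[\sigma]$ for the symbol $\sigma\in R$. The Consistency Lemma, the verifications of (pr), (d), (rb), and the Truth Lemma all go through word for word, since those arguments invoke only axioms and rules of $\RBB$, all of which survive. What is new is checking that $M_c$ satisfies (ma), (mb), (mr) (and (mt)). Property (mb) is immediate: every maximal consistent $\Gamma$ contains the (MB)-instance $B\sigma$, so $W(\sigma)=\sigma^\circ\in N(\Gamma_i)$. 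For (ma), if $\Gamma_i\in\sigma(\Gamma_i)$ then the side condition in the definition of $\sigma(\Gamma_i)$ forces $\lnot\sigma\notin\Gamma$, i.e.\ $\sigma\in\Gamma$; and if $r^\circ\in N(\Gamma_i)$ then exactly as in the (rb) verification $r^\circ=W(r)\notin N^+$, so some $B\varphi\in\Gamma$ has $W(\varphi)=W(r)$, whence $\vdash\varphi\leftrightarrow r$ and $Br\in\Gamma$ by (E); now the (MA)-instance $\sigma\to(Br\to r)$ gives $r\in\Gamma$, so $\Gamma^r\subseteq\Gamma$ by (A), i.e.\ $\Gamma_i\in r(\Gamma_i)$. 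For (mr), from $r^\circ\in N(\Gamma_i)$ we again get $Br\in\Gamma$, so the (MR)-instances yield $\Gamma^r\subseteq\Gamma^\sigma$, and it suffices to prove $\sigma(\Gamma_i)\subseteq r(\Gamma_i)$: given $\Delta_j\in\sigma(\Gamma_i)$ we have $\Gamma^\sigma\subseteq\Delta$, hence $\Gamma^r\subseteq\Delta$, and if $\lnot\sigma\in\Gamma$ the definition of $\sigma(\Gamma_i)$ forces $\Delta_j\neq\Gamma_i$ so that $\Delta_j\in r(\Gamma_i)$, while if $\sigma\in\Gamma$ then $r\in\Gamma$ by (MA) and $Br\in\Gamma$, so $\lnot r\notin\Gamma$ and the reflexivity side condition for $r$ is vacuous, again giving $\Delta_j\in r(\Gamma_i)$. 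Finally (mt): for $X\in N(\Gamma_i)$, either $X=W(\varphi)$ with $B\varphi\in\Gamma$, whence $\sigma\col\varphi\in\Gamma$ by (MT) and $\sigma(\Gamma_i)\subseteq W(\varphi)=X$; or $X\in N^+(\Gamma_i)$ witnessed by some $Br\in\Gamma$ with $r(\Gamma_i)\subseteq X$, and the (mr) argument then gives $\sigma(\Gamma_i)\subseteq r(\Gamma_i)\subseteq X$. With the Truth Lemma re-established and $M_c$ now in the appropriate model class, completeness follows exactly as before: if $\theta$ is unprovable then $\lnot\theta\in\Gamma^\theta$ and $M_c,\Gamma^\theta_1\not\models\theta$.

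The step I expect to be the main obstacle is the verification of (mr) (and, for the $+$ case, (mt)) in the canonical model, because of the reflexivity side conditions built into the definition of the accessibility relations: one cannot freely transport $r$-supported formulas into $\sigma$-accessibility and conclude $\sigma(\Gamma_i)\subseteq r(\Gamma_i)$ without separately arguing that, whenever $\Gamma_i$ is itself $\sigma$-accessible from $\Gamma_i$, the reason $r$ is adequate at $\Gamma$ so that its side condition is discharged --- and this is precisely the role played by (MA). Everything else is routine given Theorem~\ref{theorem:RBB-determinacy}.
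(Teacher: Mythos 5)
Your proposal is correct and follows essentially the same route as the paper: soundness by checking each of (MA), (MB), (MR), (MT) against its matching semantic condition, and completeness by reusing the canonical model $M_c$ and verifying (ma), (mb), (mr), (mt), with (mr) handled via $\sigma(\Gamma_i)\subseteq r(\Gamma_i)$ and (MA) used to discharge the reflexivity side condition exactly as in the paper (your case split on $\sigma\in\Gamma$ versus $\lnot\sigma\in\Gamma$ is just the mirror image of the paper's split on $\Delta_j=\Gamma_i$ versus $\Delta_j\neq\Gamma_i$). The only nitpick is that in your (ma) verification you should note explicitly that $r\in\Gamma$ gives $\lnot r\notin\Gamma$, which is what lets $\Gamma^r\subseteq\Gamma$ yield $\Gamma_i\in r(\Gamma_i)$ despite the side condition in the definition of $r(\Gamma_i)$ --- a detail you do supply in the (mr) case.
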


Soundness for $\RBB_\sigma$ and for $\RBB_\sigma^+$ are as for $\RBB$
(Theorem~\ref{theorem:RBB-determinacy}) except that we must check
soundness of the additional axioms.  We consider each in turn.
\begin{itemize}
\item Validity of (MA): $\models_\sigma \sigma\to(Br\to r)$ and
  $\models_\sigma^+ \sigma\to(Br\to r)$.

  Assume $M,w\models_\sigma \sigma$ and $M,w\models_\sigma Br$.  This
  means $w\in\sigma(w)$ and $r^\circ\in N(w)$, from which it follows
  by (ma) that $w\in r(w)$.  But then $M,w\models_\sigma r$.  The
  argument for the satisfaction operator $\models_\sigma^+$ is the
  same.

\item Validity of (MB): $\models_\sigma B\sigma$ and $\models_\sigma^+
  B\sigma$.

  Given $(M,w)$, we have $\sigma^\circ\in N(w)$ by (mb). So
  $M,w\models_\sigma B\sigma$ and $M,w\models_\sigma^+ B\sigma$.

\item Validity of (MR):
  \begin{tabular}[t]{l}
    $\models_\sigma B(r\col\varphi)\to(Br\to B(\sigma\col\varphi))$ and
    \\
    $\models_\sigma^+ B(r\col\varphi)\to(Br\to B(\sigma\col\varphi))$.
  \end{tabular}

  Assume $M,w\models_\sigma B(r\col\varphi)$ and
  $M,w\models_\sigma Br$.  Then $\sem{r\col\varphi}_M\in N(w)$ and
  $r^\circ\in N(w)$.  Applying (mr), it follows that
  $\sem{\sigma\col\varphi}_M\in N(w)$.  But then
  $M,w\models_\sigma B(\sigma\col\varphi)$.  The argument for the
  satisfaction operator $\models_\sigma^+$ is the same.

\item Validity of (MT):
  $\models_\sigma^+ B\varphi\to B(\sigma\col\varphi)$.

  Assume $M,w\models_\sigma^+ B\varphi$.  This means
  $\sem{\varphi}_M\in N(w)$.  Applying (mt), we have
  $\sem{\sigma\col\varphi}_M\in N(w)$.  But this is what it means to
  have $M,w\models_\sigma^+ B(\sigma\col\varphi)$.
\end{itemize}
So $\RBB_\sigma$ and $\RBB_\sigma^+$ are sound.  Completeness for
$\RBB_\sigma$ and for $\RBB_\sigma^+$ is as for $\RBB$
(Theorem~\ref{theorem:RBB-determinacy}) except that provability is
taken with respect to the theory in question and we must show that
$M_c$ satisfies the additional properties required of models of this
theory (\S\ref{section:sigma-semantics}).
\begin{itemize}
\item $M_c$ satisfies (ma) for $\RBB_\sigma$ and for $\RBB_\sigma^+$:
  $\Gamma_i\in\sigma(\Gamma_i)$ and $r^\circ\in N(\Gamma_i)$ together
  imply that $\Gamma_i\in r(\Gamma_i)$.

  Assume $\Gamma_i\in\sigma(\Gamma_i)$ and $r^\circ\in N(w)$.  As in
  the proof that $M_c$ satisfies (ba) from
  Theorem~\ref{theorem:RBB-determinacy}, it follows from
  $r^\circ\in N(\Gamma_i)$ that $Br\in\Gamma$.  Applying the
  definition of $\sigma(\Gamma_i)$ to our assumption
  $\Gamma_i\in\sigma(\Gamma_i)$, it follows that
  $\lnot\sigma\notin\Gamma$ and therefore $\sigma\in\Gamma$ by maximal
  consistency.  Since $\sigma\in\Gamma$ and $Br\in\Gamma$, we have by
  (MA) and maximal consistency that $r\in\Gamma$.  But then
  $\Gamma^r\subseteq\Gamma$ by (A) and maximal consistency. Since it
  follows from $r\in\Gamma$ by maximal consistency that
  $\lnot r\notin\Gamma$ and we have shown that
  $\Gamma^r\subseteq\Gamma$, it follows by the definition of
  $r(\Gamma_i)$ that $\Gamma_i\in r(\Gamma_i)$.

\item $M_c$ satisfies (mb) for $\RBB_\sigma$ and for $\RBB_\sigma^+$:
  $\sigma^\circ\in N(\Gamma_i)$.

  We have $B\sigma\in\Gamma$ by (MB).  Hence
  $\sigma^\circ=W(\sigma)\in N(\Gamma_i)$.

\item $M_c$ satisfies (mr) for $\RBB_\sigma$ and for $\RBB_\sigma^+$:
  $\sem{r\col\varphi}_{M_c}\in N(\Gamma_i)$ and
  $r^\circ\in N(\Gamma_i)$ together imply that
  $\sem{\sigma\col\varphi}_{M_c}\in N(\Gamma_i)$.

  Assume $\sem{r\col\varphi}_{M_c}\in N(\Gamma_i)$ and
  $r^\circ\in N(\Gamma_i)$.  As in the proof that $M_c$ satisfies (ba)
  from Theorem~\ref{theorem:RBB-determinacy}, it follows that
  $B(r\col\varphi)\in\Gamma$ and $Br\in\Gamma$. Applying (MR) and
  maximal consistency, we obtain $B(\sigma\col\varphi)\in\Gamma$.  By
  the definition of $N(\Gamma_i)$, it follows that
  $W(\sigma\col\varphi)\in N(\Gamma_i)$. Applying the Truth Lemma, we
  obtain $\sem{\sigma\col\varphi}_{M_c}\in N(\Gamma_i)$.

\item $M_c$ satisfies (mt) for $\RBB_\sigma^+$:
  $\sem{\varphi}_{M_c}\in N(\Gamma_i)$ implies
  $\sem{\sigma\col\varphi}_{M_c}\in N(\Gamma_i)$.

  Assume $\sem{\varphi}_{M_c}\in N(\Gamma_i)$. Applying the Truth
  Lemma, we obtain $W(\varphi)\in N(\Gamma_i)$. We therefore have
  $B\varphi\in\Gamma$ by the definition of $N(\Gamma_i)$.  Applying
  (MT) and maximal consistency, it follows from $B\varphi\in\Gamma$
  that $B(\sigma\col\varphi)\in\Gamma$. Applying the definition of
  $N(\Gamma_i)$, we obtain $W(\sigma\col\varphi)\in N(\Gamma_i)$. We
  therefore conclude that
  $\sem{\sigma\col\varphi}_{M_c}\in N(\Gamma_i)$ by the Truth Lemma.
\end{itemize}

\subsection{Lemmas for $\QRBB$ Completeness}

The results of this section will be used in
\S\ref{section:QRBB-completeness} to prove completeness for $\QRBB$.
All provability in this section is taken with respect to $\QRBB$.

\begin{lemma}
  \label{lemma:rules}
  $\QRBB$ satisfies the following:
  \begin{itemize}
  \item \emph{Distributivity\/}: $\vdash(\forall
    r)(\varphi\to\psi)\to ((\forall r)\varphi\to(\forall r)\psi)$;

  \item the \emph{Distribution Rule\/}: $\vdash\varphi\to\psi$ implies
    $\vdash(\forall r)\varphi\to(\forall r)\psi$;

  \item the \emph{Renaming Rule\/}: if $s$ has no occurrence in
    $\varphi$, then 
    \begin{center}
      $\vdash (\forall r)\varphi\leftrightarrow(\forall s)
      \varphi[s/r]$ \quad{}and\quad
      $\vdash (\exists r)\varphi\leftrightarrow(\exists s)
      \varphi[s/r]$ ;
    \end{center}

  \item the \emph{Equivalence Rule\/}:
    $\vdash\varphi\leftrightarrow\varphi'$ implies
    \begin{center}
      $\vdash(\forall r)\varphi\leftrightarrow(\forall r)\varphi'$
      \quad{}and\quad $\vdash(\exists r)\varphi\leftrightarrow(\exists
      r)\varphi'$ .
    \end{center}
  \end{itemize}
\end{lemma}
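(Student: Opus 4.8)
The plan is to derive all four items from the standard quantifier machinery supplied by (UI), (UD), and (Gen), proving them in the stated order so that later items may invoke earlier ones. All provability is with respect to $\QRBB$.

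For \emph{Distributivity}, I would start from the two instances $(\forall r)(\varphi\to\psi)\to(\varphi\to\psi)$ and $(\forall r)\varphi\to\varphi$ of (UI); these are legitimate because taking $s=r$ trivially satisfies the side condition ``$r$ is free for $r$ in $\chi$'' (a free occurrence of $r$ is never inside the scope of a $(\forall r)$). Combining them by classical reasoning, i.e.\ (CL) together with (MP), yields $[(\forall r)(\varphi\to\psi)\land(\forall r)\varphi]\to\psi$. In the antecedent $r$ occurs only bound, so it is not free there, and applying (Gen) and then (UD) gives $[(\forall r)(\varphi\to\psi)\land(\forall r)\varphi]\to(\forall r)\psi$, which is classically equivalent to the desired formula. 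The \emph{Distribution Rule} is then immediate: from $\vdash\varphi\to\psi$, rule (Gen) gives $\vdash(\forall r)(\varphi\to\psi)$, and Distributivity with (MP) gives $\vdash(\forall r)\varphi\to(\forall r)\psi$.

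For the \emph{Renaming Rule} I would treat the universal biconditional first and obtain the existential one by duality, assuming throughout that $s$ has no occurrence in $\varphi$. Left to right: $s$ is free for $r$ in $\varphi$ (no occurrence of $s$ can be captured), so (UI) gives $(\forall r)\varphi\to\varphi[s/r]$; since $s$ does not occur in $(\forall r)\varphi$ it is not free there, and (Gen) followed by (UD) produces $(\forall r)\varphi\to(\forall s)\varphi[s/r]$. Right to left: one checks that $r$ is free for $s$ in $\varphi[s/r]$ (the free occurrences of $s$ in $\varphi[s/r]$ sit exactly where the free occurrences of $r$ sat in $\varphi$, hence outside the scope of any $(\forall r)$), and that $(\varphi[s/r])[r/s]=\varphi$ (substituting a fresh variable in and then back is the identity on the syntax tree). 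Thus (UI) gives $(\forall s)\varphi[s/r]\to\varphi$, and since $r$ is not free in $(\forall s)\varphi[s/r]$, applying (Gen) and (UD) yields the converse implication. The existential version follows by applying the universal version to $\lnot\varphi$, using that substitution commutes with $\lnot$, that $s$ occurs in $\lnot\varphi$ iff it occurs in $\varphi$, and that $(\exists r)\chi$ abbreviates $\lnot(\forall r)\lnot\chi$.

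Finally, the \emph{Equivalence Rule} follows from the Distribution Rule: from $\vdash\varphi\leftrightarrow\varphi'$ extract both implications by (CL) and (MP), push each through $(\forall r)$ by the Distribution Rule, and recombine classically; the existential case is obtained by applying the universal case to $\lnot\varphi\leftrightarrow\lnot\varphi'$ and negating both sides. I expect no genuine difficulty here: the only delicate point is the bookkeeping of the ``free for'' side conditions in the Renaming Rule, so that is the part I would write out most carefully, while everything else is routine propositional and quantifier reasoning.
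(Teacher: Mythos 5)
Your proposal is correct and follows essentially the same route as the paper: Distributivity from two (UI) instances plus (Gen) and (UD), the Distribution Rule as an easy consequence, Renaming via (UI)/(Gen)/(UD), and the Equivalence Rule by splitting the biconditional and dualizing for $\exists$. If anything you are more thorough than the paper on the Renaming Rule, where the paper writes out only the left-to-right implication of the universal case and leaves the converse and the existential version implicit, while you verify the ``free for'' conditions and the identity $(\varphi[s/r])[r/s]=\varphi$ explicitly.
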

\begin{proof}
  For Distributivity:
  \begin{center}
  \renewcommand{\arraystretch}{1.3}\begin{tabular}{lll} 
    1. &
    $\vdash (\forall r)(\varphi\to\psi)\to(\varphi\to\psi)$
    & (UI)
    \\
    2. &
    $\vdash (\forall r)\varphi\to\varphi$
    & (UI)
    \\
    3. & 
    $\vdash 
    ((\forall r)(\varphi\to\psi)\land(\forall r)\varphi)\to\psi$
    & by 1, 2
    \\
    4. &
    $\vdash (\forall r)
    [((\forall r)(\varphi\to\psi)\land(\forall r)\varphi)\to\psi]$
    & by 3, (Gen)
    \\
    5. &
    $\vdash ((\forall r)(\varphi\to\psi)\land(\forall r)\varphi)\to
    (\forall r)\psi$
    & by 4, (UD), (MP)
    \\
    6. &
    $\vdash(\forall
    r)(\varphi\to\psi)\to ((\forall r)\varphi\to(\forall r)\psi)$
    & by 5
  \end{tabular}
  \end{center}

  For the Distribution Rule: 
  \begin{center}
  \renewcommand{\arraystretch}{1.3}\begin{tabular}{lll} 
    1. &
    $\vdash\varphi\to\psi$
    & assumption
    \\
    2. &
    $\vdash(\forall r)\varphi\to\varphi$
    & (UI)
    \\
    3. &
    $\vdash(\forall r)\varphi\to\psi$
    & by 1, 2
    \\
    4. &
    $\vdash(\forall r)\varphi\to(\forall r)\psi$
    & by 3, (UD), (MP)
  \end{tabular}
  \end{center}

  For the Renaming Rule: if $s$ has no occurrence in $\varphi$, then
  \begin{center}
  \renewcommand{\arraystretch}{1.3}\begin{tabular}{lll} 
    1. &
    $\vdash(\forall r)\varphi\to\varphi[s/r]$
    & (UI)
    \\
    2. &
    $\vdash(\forall s)((\forall r)\varphi\to\varphi[s/r])$
    & by 1, (Gen)
    \\
    3. &
    $\vdash(\forall r)\varphi\to(\forall s)\varphi[s/r]$
    & by 2, (UD), (MP)
    \\
    4. &
    $\vdash(\forall s)\varphi[s/r]\to\varphi[s/r][r/s]$
    & (UI)
    \\
    5. &
    $\vdash(\forall s)\varphi[s/r]\to\varphi$
    & by 4, $\varphi[s/r][r/s]=\varphi$
    \\
    6. &
    $\vdash(\forall r)((\forall s)\varphi[s/r]\to\varphi)$
    & by 5, (Gen)
    \\
    7. &
    $\vdash(\forall s)\varphi[s/r]\to(\forall r)\varphi$
    & by 6, (UD), (MP)
    \\
    8. &
    $\vdash(\forall r)\varphi\leftrightarrow(\forall s)\varphi[s/r]$
    & by 3, 7 --- our first result
    \\
    9. &
    $\vdash(\forall r)\lnot\varphi\leftrightarrow(\forall s)(\lnot\varphi)[s/r]$
    & by an argument like 1--8
    \\
    10. &
    $\vdash(\forall r)\lnot\varphi\leftrightarrow(\forall s)\lnot\varphi[s/r]$
    & by 9, $(\lnot\varphi)[s/r]=\lnot(\varphi[s/r])$
    \\
    11. &
    $\vdash\lnot(\forall r)\lnot\varphi\leftrightarrow
          \lnot(\forall s)\lnot\varphi[s/r]$
    & by 10
    \\
    12. &
    $\vdash(\exists r)\varphi\leftrightarrow(\exists s)\varphi[s/r]$
    & by 11
  \end{tabular}
  \end{center}

  For the Equivalence Rule: 
  \begin{center}
  \renewcommand{\arraystretch}{1.3}\begin{tabular}{lll} 
    1. &
    $\vdash\varphi\leftrightarrow\varphi'$
    & assumption
    \\
    2. &
    $\vdash\varphi\to\varphi'$
    & by 1
    \\
    3. &
    $\vdash(\forall r)\varphi\to(\forall r)\varphi'$
    & by 2, Distribution Rule
    \\
    4. &
    $\vdash\varphi'\to\varphi$
    & by 1
    \\
    5. &
    $\vdash(\forall r)\varphi'\to(\forall r)\varphi$
    & by 4, Distribution Rule
    \\
    6. &
    $\vdash(\forall r)\varphi\leftrightarrow(\forall r)\varphi'$
    & by 3, 5 --- our first result
    \\
    7. &
    $\vdash\lnot\varphi\leftrightarrow\lnot\varphi'$
    & by 1
    \\
    8. &
    $\vdash 
    (\forall r)\lnot\varphi\leftrightarrow
    (\forall r)\lnot\varphi'$
    & from 7 by an argument like 1--6
    \\
    9. &
    $\vdash
    \lnot(\forall r)\lnot\varphi\leftrightarrow
    \lnot(\forall r)\lnot\varphi'$
    & by 8
    \\
    10. &
    $\vdash
    (\exists r)\varphi\leftrightarrow(\exists r)\varphi'$
    & by 9
  \end{tabular}
  \end{center}
\end{proof}

\begin{lemma}
  \label{lemma:quantifiers}
  If $r$ is not free in $\psi$, then:
  \begin{enumerate}
  \item \label{lemma:quantifiers:a} $\vdash (\forall
    r)(\varphi\to\psi)\to ((\exists r)\varphi\to\psi)$; and

  \item \label{lemma:quantifiers:b} $\vdash (\exists
    r)(\psi\to\varphi)\to (\psi\to(\exists r)\varphi)$.
  \end{enumerate}
\end{lemma}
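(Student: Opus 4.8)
The plan is to reduce both items to classical propositional reasoning together with the quantifier axioms (UD), (UI), (Gen) and the derived rules collected in Lemma~\ref{lemma:rules}; the only non-routine bookkeeping will concern which reasons occur free where.

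For item~(\ref{lemma:quantifiers:a}), I would first use (CL) to get $\vdash(\varphi\to\psi)\leftrightarrow(\lnot\psi\to\lnot\varphi)$, then apply the Equivalence Rule to lift this under the quantifier, obtaining $\vdash(\forall r)(\varphi\to\psi)\leftrightarrow(\forall r)(\lnot\psi\to\lnot\varphi)$. Since $r$ is not free in $\psi$, it is not free in $\lnot\psi$, so (UD) gives $\vdash(\forall r)(\lnot\psi\to\lnot\varphi)\to(\lnot\psi\to(\forall r)\lnot\varphi)$. Chaining the last two facts yields $\vdash(\forall r)(\varphi\to\psi)\to(\lnot\psi\to(\forall r)\lnot\varphi)$, and one more round of classical contraposition turns this into $\vdash(\forall r)(\varphi\to\psi)\to(\lnot(\forall r)\lnot\varphi\to\psi)$, which is exactly the claim once we unfold the abbreviation $(\exists r)\varphi=\lnot(\forall r)\lnot\varphi$.

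For item~(\ref{lemma:quantifiers:b}), the first observation is that $\vdash\varphi\to(\exists r)\varphi$: instantiating (UI) with $s:=r$ (legitimate since $r$ is trivially free for $r$ in $\lnot\varphi$) gives $\vdash(\forall r)\lnot\varphi\to\lnot\varphi$, and contraposition together with the definition of $\exists$ finishes it. Hence by (CL) $\vdash(\psi\to\varphi)\to(\psi\to(\exists r)\varphi)$. Now $r$ is not free in $\psi\to(\exists r)\varphi$ — it is not free in $\psi$ by hypothesis, and it is bound inside $(\exists r)\varphi$ — so (Gen) gives $\vdash(\forall r)\bigl((\psi\to\varphi)\to(\psi\to(\exists r)\varphi)\bigr)$, and then item~(\ref{lemma:quantifiers:a}), applied with $\psi\to\varphi$ playing the role of ``$\varphi$'' and $\psi\to(\exists r)\varphi$ playing the role of ``$\psi$'', yields $\vdash(\exists r)(\psi\to\varphi)\to(\psi\to(\exists r)\varphi)$.

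I do not expect any real obstacle. The only places that warrant care are the side conditions: checking that (UD) is applicable because $r$ is not free in $\lnot\psi$ in item~(\ref{lemma:quantifiers:a}), and that when reusing item~(\ref{lemma:quantifiers:a}) inside the proof of item~(\ref{lemma:quantifiers:b}) the consequent $\psi\to(\exists r)\varphi$ genuinely has no free occurrence of $r$. Everything else is routine propositional manipulation and appeals to the already-established Equivalence and Distribution Rules.
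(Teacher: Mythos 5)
Your proposal is correct. For Item~1 it coincides with the paper's derivation step for step: contrapose under the quantifier via the Equivalence Rule, distribute $(\forall r)$ over $\lnot\psi\to\lnot\varphi$ using the fact that $r$ is not free in $\lnot\psi$, then contrapose again and unfold $(\exists r)\varphi:=\lnot(\forall r)\lnot\varphi$. (You correctly attribute the distribution step to (UD); the paper's own proof annotates that line as ``(UI),'' which is evidently a mislabel, since the scheme being invoked is exactly $(\forall r)(\lnot\psi\to\lnot\varphi)\to(\lnot\psi\to(\forall r)\lnot\varphi)$ with $r$ not free in the antecedent.) For Item~2 you take a genuinely different route. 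The paper argues directly: it starts from the tautology $\psi\to(\lnot\varphi\to(\psi\land\lnot\varphi))$, pushes it under $(\forall r)$ with the Distribution Rule and Distributivity to get $(\psi\land(\forall r)\lnot\varphi)\to(\forall r)(\psi\land\lnot\varphi)$, and then contraposes and applies the Equivalence Rule to rewrite $\lnot(\psi\land\lnot\varphi)$ as $\psi\to\varphi$. You instead first establish the existential-introduction theorem $\vdash\varphi\to(\exists r)\varphi$ from (UI) with $s:=r$, lift it propositionally to $\vdash(\psi\to\varphi)\to(\psi\to(\exists r)\varphi)$, generalize with (Gen), and discharge the universal by reusing Item~1 with $\psi\to(\exists r)\varphi$ in the role of the quantifier-free consequent. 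The side conditions you flag are exactly the ones that matter, and both hold: $r$ is not free in $\lnot\psi$, and $r$ is not free in $\psi\to(\exists r)\varphi$ since every occurrence of $r$ in $(\exists r)\varphi$ is bound. Your version of Item~2 is shorter and more modular (it reuses Item~1 and a standard quantifier theorem), at the cost of making Item~2 depend on Item~1; the paper's version is self-contained but requires the slightly fiddlier manipulation of conjunctions under $(\forall r)$.
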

\begin{proof}
  For Item~\ref{lemma:quantifiers:a}: if $r$ is not free in $\psi$,
  then
  \begin{center}
  \renewcommand{\arraystretch}{1.3}\begin{tabular}{lll} 
    1. &
    $\vdash(\forall r)(\varphi\to\psi)\to(\forall
    r)(\lnot\psi\to\lnot\varphi)$ & \text{Equivalence Rule, Classical Logic}
    \\
    2. & $\vdash(\forall
    r)(\lnot\psi\to\lnot\varphi)\to(\lnot\psi\to(\forall
    r)\lnot\varphi)$ & \text{(UI)}
    \\
    3. & $\vdash(\forall r)(\varphi\to\psi)\to (\lnot\psi\to(\forall
    r)\lnot\varphi)$ & \text{by 1, 2}
    \\
    4. & $\vdash(\forall r)(\varphi\to\psi)\to (\lnot(\forall
    r)\lnot\varphi\to\psi)$ & \text{from 3}
    \\
    5. & $\vdash (\forall r)(\varphi\to\psi)\to ((\exists
    r)\varphi\to\psi)$ & \text{from 4}
  \end{tabular}
  \end{center}
  For Item~\ref{lemma:quantifiers:b}: if $r$ is not free in $\psi$,
  then
  \begin{center}
  \renewcommand{\arraystretch}{1.3}\begin{tabular}{lll}
    1. &
    $\vdash 
    \psi\to(\lnot\varphi\to(\psi\land\lnot\varphi))$
    & (CL)
    \\
    2. &
    $\vdash
    (\forall r)\psi \to
    (\forall r)(\lnot\varphi\to(\psi\land\lnot\varphi))$
    & by 1, Distribution Rule
    \\
    3. & 
    $\vdash
    (\forall r)\psi \to
    ((\forall r)\lnot\varphi\to(\forall r)(\psi\land\lnot\varphi))$
    & by 2, Distributivity
    \\
    4. &
    $\vdash
    ((\forall r)\psi\land(\forall r)\lnot\varphi) \to
    (\forall r)(\psi\land\lnot\varphi)$
    & by 3
    \\
    5. &
    $\vdash\psi\to\psi$
    & (CL)
    \\
    6. &
    $\vdash(\forall r)(\psi\to\psi)$
    & by 5, (Gen)
    \\
    7. &
    $\vdash\psi\to(\forall r)\psi$
    & by 6, (UD), (MP)
    \\
    8. &
    $\vdash
    (\psi\land(\forall r)\lnot\varphi) \to
    (\forall r)(\psi\land\lnot\varphi)$
    & by 4, 7
    \\
    9. &
    $\vdash
    \lnot(\forall r)(\psi\land\lnot\varphi) \to
    (\psi\to\lnot(\forall r)\lnot\varphi)$
    & by 8
    \\
    10. &
    $\vdash
    \lnot(\forall r)\lnot(\psi\to\varphi) \to
    (\psi\to\lnot(\forall r)\lnot\varphi)$
    & by 9, Equivalence Rule
    \\
    11. &
    $\vdash (\exists
    r)(\psi\to\varphi)\to (\psi\to(\exists r)\varphi)$
    & by 10
  \end{tabular}
  \end{center}
\end{proof}

\subsection{$\QRBB$ Soundness and Completeness}
\label{section:QRBB-completeness}

Recalling the semantics for $\QRBB$ from
\S\ref{section:QRBB-semantics}, we prove the following theorem.

\begin{theorem}[$\QRBB$ Soundness and Completeness]
  \label{theorem:QRBB-determinacy}
  We have:
  \begin{itemize}
  \item $\QRBB$ is sound: $\QRBB\vdash\varphi$ implies
    $\models\varphi$ for each $\varphi\in F^\forall$; and

  \item if $R$ is at least countably infinite, then $\QRBB$ is sound
    and complete: for each $\varphi\in F^\forall$,
    \[
    \QRBB\vdash\varphi \quad\text{iff}\quad \models\varphi\enspace.
    \]
  \end{itemize}
\end{theorem}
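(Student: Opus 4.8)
The plan is to extend the argument of Theorem~\ref{theorem:RBB-determinacy} with the new axioms and rules for soundness, and to enrich its canonical model construction in a Henkin style for completeness.

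For soundness I would argue by induction on derivations, the only new work being (UD), (UI), (EP), (EN), (Gen). Validity of (EP) and (EN) is immediate from the clause $M,w\models r=s\iff r=s$ together with the side conditions. For (UI), $M,w\models(\forall r)\varphi$ unfolds to $M,w\models\varphi[t/r]$ for every $t$ free for $r$ in $\varphi$, in particular for the $s$ in the side condition. For (UD), assuming $M,w\models(\forall r)(\varphi\to\psi)$ and $M,w\models\varphi$: given $s$ free for $r$ in $\psi$, one checks $s$ is then free for $r$ in $\varphi\to\psi$ (using that $r$ is not free in $\varphi$) and that $(\varphi\to\psi)[s/r]=\varphi\to\psi[s/r]$, whence $M,w\models\psi[s/r]$; so $M,w\models(\forall r)\psi$. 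The one case needing a separate observation is (Gen): from $\models\varphi$ we must obtain $\models\varphi[s/r]$ for each $s$ free for $r$ in $\varphi$. This follows from a \emph{relabeling lemma}: if $s$ is free for $r$ in $\varphi$, then $M,w\models\varphi[s/r]$ iff $M',w\models\varphi$, where $M'$ agrees with $M$ except $[r]^{M'}:=[s]^M$ (and, when $r\in P\cap R$, $V$ is adjusted at $r$ so that (pr) still holds); $M'$ is again a model because (d) is unaffected and (rb) for $r$ in $M'$ is just (rb) for $s$ in $M$. Well-definedness of the quantifier clause and of these inductions rests on the fact that $\varphi[s/r]$ has strictly fewer logical symbols than $(\forall r)\varphi$.

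For completeness (with $R$ countably infinite) I would rerun the canonical model construction of Theorem~\ref{theorem:RBB-determinacy}, but take $M^H$ to be the set of \emph{maximal consistent Henkin sets}: maximal consistent $\Gamma$ such that whenever $(\exists r)\varphi\in\Gamma$ there is a reason $s$, free for $r$ in $\varphi$, with $\varphi[s/r]\in\Gamma$. The first task is the Lindenbaum--Henkin extension lemma --- every consistent set extends to a member of $M^H$ --- carried out by the usual stepwise construction that, when handling an existential $(\exists r)\varphi$, throws in a witness $(\exists r)\varphi\to\varphi[c/r]$ for a reason $c$ not occurring among the finitely many formulas committed so far; such $c$ exists since $R$ is infinite. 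Consistency is preserved at this step by the standard argument, here using (Gen), (UD), and the Renaming Rule of Lemma~\ref{lemma:rules} to pass from $\vdash(\bigwedge\Sigma')\to\lnot\varphi[c/r]$ (with $c$ fresh) to $\vdash(\bigwedge\Sigma')\to(\forall r)\lnot\varphi$. One then sets $W:=M^H\times\{1,2\}$ and defines $[\cdot]$, $N$, $V$ by the same formulas as before; properties (pr), (d), (rb) are verified verbatim, as they concern only the quantifier-free machinery. The Truth Lemma is extended to $F^\forall$ by induction on logical complexity: the case $r=s$ is immediate from (EP)/(EN); the case $(\forall r)\varphi$ reduces --- via (UI) for one direction and the Henkin property of the world's set for the other --- to ``$(\forall r)\varphi\in\Gamma$ iff $\varphi[s/r]\in\Gamma$ for every $s$ free for $r$ in $\varphi$,'' which by the induction hypothesis (applicable since each $\varphi[s/r]$ has lower complexity) matches the semantic clause. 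Falsification of $\theta$ at $\Gamma^\theta_1$ then follows as before.

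The step I expect to be the main obstacle is keeping the \emph{whole} canonical model Henkin while still getting the accessibility relations right. The $r\col\varphi$-case of the Truth Lemma requires that whenever $\Gamma^r\cup\{\lnot\varphi\}$ is consistent it be contained in some world of the model --- that is, in some maximal consistent \emph{Henkin} set --- yet $\Gamma^r$ will in general mention every reason symbol (e.g. $s=s\in\Gamma^r$ for all $s$ by (EP) and (RN)), so no witness looks ``fresh'' for that extension. Organizing the construction so that only finitely many reasons are ever committed before any given witness must be chosen --- e.g. by reserving an infinite pool of witness reasons and extending the language if need be --- is precisely where the hypothesis that $R$ is countably infinite is indispensable; without it the construction, and completeness itself, breaks down.
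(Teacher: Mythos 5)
Your soundness argument is essentially the paper's: the quantifier and equality axioms are checked directly from the satisfaction clauses, and (Gen) is handled by the same relabelling device (a model $M'$ with $[r]':=[s]$). Your extra care about adjusting $V$ at $r\in P\cap R$ so that (pr) survives is a detail the paper elides, and is welcome; like the paper, you should also state explicitly that one first renames bound occurrences of $r$ so that the relabelling lemma only has to track free occurrences.

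The completeness half, however, has a genuine gap, and it is exactly the one you flag in your closing paragraph. Your plan builds worlds as maximal consistent Henkin sets over the \emph{fixed} language $L(R)$, choosing each witness ``not occurring among the finitely many formulas committed so far.'' That works for the initial world $\Gamma^\theta$, but it fails precisely where Henkin worlds are needed: in the $r\col\varphi$ case of the Truth Lemma you must extend $\Gamma^r\cup\{\lnot\varphi\}$ to a maximal consistent \emph{Henkin} set, and, as you observe, $\Gamma^r$ already mentions every reason symbol --- by (EP) and (RN) we get $s=s\in\Gamma^r$ for every $s$, and worse, $\Gamma$ may contain $r\col(s\col p)$ for arbitrarily many $s$, so the offending members of $\Gamma^r$ are not all theorems that could be dropped from a finite inconsistency derivation to recover freshness. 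Reserving an infinite pool of witnesses inside $R$ does not escape this, since the reserved symbols occur in $\Gamma^r$ as well. The remaining option you mention, ``extending the language if need be,'' is what the paper actually does, and it is where the real work lies: one adjoins a fresh Henkin constant for each closed $\lnot(\forall r)\varphi$, iterates to obtain $R_\omega$ and $T_\omega$, and proves each stage is a \emph{conservative} extension (Lemma~\ref{lemma:conservativity}), which in turn rests on Lemmas~\ref{lemma:rules}, \ref{lemma:quantifiers}, \ref{lemma:constants}, \ref{lemma:deduction}, and \ref{lemma:fresh-variable}; the worlds of the canonical model are then maximal $L(R_\omega)$-consistent Henkin theories, and conservativity is what licenses the final step from $\QRBB\nvdash\theta$ to $\theta\notin\Gamma^\theta$. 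None of this is carried out in your proposal, so the completeness direction remains unproved; diagnosing the obstacle correctly is not the same as overcoming it.
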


Soundness is by induction on the length of derivation.  Most cases are
addressed in the proof of Theorem~\ref{theorem:RBB-determinacy}.  We
only address the remaining cases.
\begin{itemize}
\item Validity of (UD): $\models (\forall r)(\varphi\to\psi)\to
  (\varphi\to(\forall r)\psi)$, where $r$ is not free in $\varphi$.
  
  Assume $M,w\models(\forall r)(\varphi\to\psi)$ and
  $M,w\models\varphi$. From the former, we have
  $M,w\models(\varphi\to\psi)[s/r]$ for each $s$ free for $r$ in
  $\varphi\to\psi$.  Since $r$ is not free in $\varphi$, it follows
  that $M,w\models\varphi\to\psi[s/r]$ for each $s$ free for $r$ in
  $\psi$.  By our assumption $M,w\models\varphi$, it follows that
  $M,w\models\psi[s/r]$ for each $s$ free for $r$ in $\psi$.  That is,
  $M,w\models(\forall r)\psi$.

\item Validity of (UI): $\models (\forall r)\varphi\to\varphi[s/r]$,
  where $s$ is free for $r$ in $\varphi$.

  By the definition of satisfaction.

\item Validity of (EP) and (EN): $\models r=r$ and
  $\models\lnot(r=s)$, where $r$ and $s$ are different.

  By the definition of satisfaction.

\item (Gen) preserves validity: $\models\varphi$ implies
  $\models(\forall r)\varphi$.

  If $\not\models(\forall r)\varphi$, then there exists $(M,w)$ and
  $s$ free for $r$ in $\varphi$ such that
  $M,w\not\models\varphi[s/r]$.  Given that
  $M=(W,[\cdot],N,V)$, define the model
  $M'=(W,[\cdot]',N,V)$ by setting
  \[
  [t]':=
  \begin{cases}
    [s] & \text{if } t=r, \\
    [t] & \text{otherwise}.
  \end{cases}
  \]
  It follows that $r^{M'}(w)=s^M(w)$ and $t^{M'}(w)=t^M(w)$ for all
  $t\neq r$.  By the usual arguments about the preservation of truth
  of formulas under the renaming of quantified variables and their
  corresponding bound occurrences, we may assume without loss of
  generality that every occurrence of $r$ in $\varphi$ is free. But
  then it is easy to see that we have $M,w\not\models\varphi[s/r]$ iff
  $M',w\not\models\varphi$.  After all, $\varphi$ and $\varphi[s/r]$
  differ only in the occurrences of $r$ in $\varphi$ that are replaced
  by $s$ in $\varphi[s/r]$, and $M'$ interprets all such occurrences
  of $r$ just as $M$ interprets the corresponding occurrences of $s$.
  And all other occurrences of symbols in $\varphi$ are the same as
  they are in $\varphi[s/r]$, they are syntactically different than
  $r$, and $M'$ and $M$ interpret them in the same way.  Conclusion:
  $\not\models\varphi$.
\end{itemize}
So $\QRBB$ is sound.

For completeness, we adapt the standard Henkin-style construction in
\cite[\S3.1]{vDal94:Book} to the present setting. To begin, our
language $F^\forall$ depends on two parameters: a nonempty set $R$ of
reasons and a nonempty set $P$ of propositional letters.  We shall
keep $P$ fixed but consider different options for $R$.  As such, it
will be convenient to write $L(R)$ to denote the set of formulas with
quantifiers that we can form using $R\neq\emptyset$ as our set of
reasons.  By convention in this proof, we restrict all derivation to
be with respect to $\QRBB$.  Also, we shall assume for the remainder
of the argument that $R$ is at least countably infinite.

To say that a set $\Gamma\subseteq L(R)$ is \emph{consistent} means
that for no finite $\Gamma'\subseteq\Gamma$ is it the case that
$\vdash(\bigwedge \Gamma')\to\bot$.  To say that $\Gamma\subseteq
L(R)$ is \emph{maximal $L(R)$-consistent} means that $\Gamma$ is
consistent and adding to $\Gamma$ any formula of $L(R)$ not already
present would produce an inconsistent set.

For the purposes of the present proof, a \emph{theory} in the language
$L(R)$ is a set $T\subseteq L(R)$ of formulas in $L(R)$ satisfying the
following properties:
\begin{itemize}
\item \emph{Closure under theorems\/}: if $\varphi\in L(R)$ and
  $\vdash\varphi$, then $\varphi\in T$; and

\item \emph{Closure under} (MP): if $\varphi\to\psi\in T$ and
  $\varphi\in T$, then $\psi\in T$.
\end{itemize}

Given $\Gamma\subseteq L(R)$, let $\mathbb{T}_R(\Gamma)$ be the set of
all theories in $L(R)$ that contain $\Gamma$.  The intersection of a
collection of theories in $L(R)$ is also a theory in $L(R)$.  Hence
for each $\Gamma\subseteq L(R)$, we may define the theory
\[
T_R(\Gamma) := \bigcap\mathbb{T}_R(\Gamma)
\]
called the \emph{theory in $L(R)$ generated by $\Gamma$}.

An \emph{$L(R)$-proof from $\Gamma$} is a finite nonempty sequence
$\may{\varphi_1,\dots,\varphi_n}$ of formulas in $L(R)$ such that for
each $\varphi_i$ in the sequence, we have one of the following:
$\varphi_i\in\Gamma$, $\QRBB\vdash\varphi_i$, or there exist
$\varphi_j$ and $\varphi_k$ from earlier in the sequence (i.e., $j<i$
and $k<i$) such that $\varphi_i$ follows by (MP) from $\varphi_j$ and
$\varphi_k$ (i.e., $\varphi_k=\varphi_j\to\varphi_i$).  To say that an
$L(R)$-proof from $\Gamma$ is \emph{of $\varphi$}, means that
$\varphi$ is the last formula in the sequence.  We write
$\Gamma\vdash_R\varphi$ to mean that there exists an $L(R)$-proof of
$\varphi$ from $\Gamma$. Notation: in writing the set to the left of
the turnstile $\vdash_R$, we may use a comma to denote set-theoretic
union, we may identify an individual formula with the singleton set
containing the formula in question, and we may omit any set-indicating
notation if the set is empty.  We state without proof the following
results, grouped together under the name \emph{Simple Lemma\/}:
\begin{itemize}
\item $\Gamma\vdash_R\varphi$ iff $\varphi\in T_R(\Gamma)$;

\item $\Gamma\vdash_R\varphi$ iff $T_R(\Gamma)\vdash_R\varphi$;

\item $\Gamma\vdash_R\varphi$ iff there exists a
  $\Gamma'\subseteq\Gamma$ such that $\Gamma'\vdash_R\varphi$;

\item $\Gamma\vdash_R\varphi$ iff there exists a finite
  $\Gamma'\subseteq\Gamma$ such that $\Gamma'\vdash_R\varphi$;

\item if $\Gamma$ is finite, then $\Gamma'\vdash_R\varphi$ iff
  $\bigwedge\Gamma'\vdash_R\varphi$, where
  $\bigwedge\Gamma':=\bigwedge_{\chi\in\Gamma'}\chi$; and

\item $\vdash_R\varphi$ iff $\QRBB\vdash\varphi$.
\end{itemize}
Generally the Simple Lemma will be used only tacitly.

Given a theory $T$ in $L(R)$ and a theory $T'$ in $L(R')$, to say that
$T'$ is an \emph{extension} of $T$ means that $T\subseteq T'$.  To say
that $T'$ is a \emph{conservative extension} of $T$ means that $T'\cap
L(R)=T$.

To say that a theory $T$ in $L(R)$ is \emph{Henkin} means that for
each closed formula (i.e., containing no free variables) of the form
$\lnot(\forall r)\varphi\in L(R)$, there exists a reason
$r_\varphi\in R$ called a \emph{witness (or Henkin constant) for
  $\lnot(\forall r)\varphi$} for which we have
\[
(\lnot(\forall r)\varphi \to \lnot\varphi[r_\varphi/r]) \in T\enspace.
\]
Given a theory $T$ in $L(R)$, let $R^*$ be the set obtained from $R$
by adding for each closed $\lnot(\forall r)\varphi\in L(R)$ a new
reason $r_\varphi$.  To be clear: there is a bijection between the set
of closed formulas $\lnot(\forall r)\varphi\in L(R)$ and the set
$R^*-R$.  We define the set
\[
H(R):= \{\lnot(\forall r)\varphi\to\lnot\varphi[r_\varphi/r] \mid
\lnot(\forall r)\varphi\in L(R) \text{ is closed }\}
\]
of \emph{Henkin axioms in $L(R)$} and let $T^*:=T_{R^*}(T\cup H(R))$
be the theory in $L(R^*)$ generated by $T\cup H(R)$.

\begin{lemma}[Constants]
  \label{lemma:constants}
  Assume $R$ is at least countably infinite and $R\subseteq R'$.  If
  $\Gamma\cup\{\varphi\}\subseteq L(R)$, then
  $\Gamma\vdash_{R'}\varphi$ iff $\Gamma\vdash_R\varphi$.
\end{lemma}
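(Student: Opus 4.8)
The statement is the familiar ``conservativity over new constants'' fact (see \cite[\S3.1]{vDal94:Book}), specialized to the setting where a reason symbol may occur both as a constant (as in $r\col\psi$ or the atom $r$) and as a bound variable (as in $(\forall r)\psi$). The plan is to treat the two implications separately. The direction $\Gamma\vdash_R\varphi\Rightarrow\Gamma\vdash_{R'}\varphi$ is immediate: $L(R)\subseteq L(R')$, so any $L(R)$-proof of $\varphi$ from $\Gamma$ is also an $L(R')$-proof of $\varphi$ from $\Gamma$. For the converse I would relabel away the reasons coming from $R'\setminus R$.

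So assume $\Gamma\vdash_{R'}\varphi$ with $\Gamma\cup\{\varphi\}\subseteq L(R)$. By the Simple Lemma there is a finite $\Gamma_0\subseteq\Gamma$ and an $L(R')$-proof $\may{\varphi_1,\dots,\varphi_m}$ of $\varphi$ from $\Gamma_0$, each line of which is a member of $\Gamma_0$, a $\QRBB$-theorem, or an (MP)-consequence of two earlier lines. Only finitely many reason symbols occur in this sequence; let $t_1,\dots,t_n$ list those that lie in $R'\setminus R$. Since $R$ is countably infinite and only finitely many of its reasons occur in the proof, I can choose pairwise distinct $u_1,\dots,u_n\in R$ none of which occurs anywhere in the proof. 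Let $\rho$ be the renaming sending each $t_i$ to $u_i$ and fixing every other symbol; on the finite set of reasons actually occurring in the proof, $\rho$ is injective and takes values in $R$.

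Next I would apply $\rho$ line by line and verify that $\may{\rho(\varphi_1),\dots,\rho(\varphi_m)}$ is an $L(R)$-proof of $\varphi$ from $\Gamma_0$, which yields $\Gamma\vdash_R\varphi$ by the Simple Lemma. There are three checks. Lines from $\Gamma_0$ are fixed by $\rho$, since $\Gamma_0\subseteq L(R)$ contains no $t_i$; likewise $\rho(\varphi_m)=\rho(\varphi)=\varphi$ because $\varphi\in L(R)$, so the sequence still ends in $\varphi$. Since $\rho$ commutes with $\to$, every (MP) step maps to an (MP) step. Finally, if $\varphi_j$ is a $\QRBB$-theorem then so is $\rho(\varphi_j)$; this I would extract as a \emph{relabeling lemma}: $\QRBB$-theoremhood is preserved under every injective renaming of reason symbols (and $\rho$, being injective on the finitely many reasons occurring in the proof, extends to such a renaming). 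The relabeling lemma is proved by induction on derivations, checking that each axiom scheme and rule of Table~\ref{table:QRBB} is closed under injective renaming: (CL), (RK), (A), (RB), (D), (RN), (E), (Gen) are purely formal, while injectivity is exactly what keeps ``syntactically different'' intact in (EP)/(EN), preserves the side conditions ``$r$ not free in $\varphi$'' of (UD) and ``$s$ free for $r$ in $\varphi$'' of (UI), and makes renaming commute with the metatheoretic substitution $\varphi\mapsto\varphi[s/r]$.

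I expect this last point --- the interplay of $\rho$ with the substitution $[s/r]$ in (UI) --- to be the only delicate step: one must confirm that $\rho(\varphi[s/r])=\rho(\varphi)[\rho(s)/\rho(r)]$ and that $\rho(s)$ is free for $\rho(r)$ in $\rho(\varphi)$. Both hold because an injective renaming is an isomorphism on the relevant syntax, so it identifies no two symbols and never carries a free occurrence under a quantifier that would bind it. If one prefers to sidestep even this, the Renaming Rule of Lemma~\ref{lemma:rules} can be applied first to rename the bound occurrences of the $t_i$ apart, after which $\rho$ acts on essentially constant symbols and the commutation is trivial. Everything remaining is routine bookkeeping.
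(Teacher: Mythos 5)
Your proposal is correct and follows essentially the same route as the paper's proof: both directions are handled identically, and the left-to-right direction is the same finite relabeling of the reasons in $R'\setminus R$ by fresh reasons in $R$, checked line by line against the three kinds of proof steps. Your explicit relabeling lemma (closure of $\QRBB$-theoremhood under injective renaming, with attention to (EN) and the substitution in (UI)) merely spells out what the paper compresses into ``one may verify.''
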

\begin{proof}
  The right-to-left direction is immediate (since $R\subseteq R'$), so
  we prove only the left-to-right direction.  Proceeding, assume
  $\Gamma\cup\{\varphi\}\subseteq L(R)$ and
  $\Gamma\vdash_{R'}\varphi$; that is, there exists an $L(R')$-proof
  $\pi'=\may{\psi_1',\dots,\psi_n'}$ of $\varphi$ from $\Gamma$.  Let
  $r_1',\dots,r_m'$ be a non-repeating list of all reasons in $R'-R$
  that appear in $\pi'$.  Since $R$ is at least countably infinite and
  $\pi'$ is finite, we may choose a non-repeating list $r_1,\dots,r_m$
  of reasons in $R$ that do not appear anywhere in $\pi'$.  Such a
  list exists because $R$ is at least countably infinite. Form
  $\pi:=\may{\psi_1,\dots,\psi_n}$ by defining $\psi_i$ as the formula
  obtained from $\psi_i'$ by replacing all occurrences of
  $r_1',\dots,r_m'$ by $r_1,\dots,r_m$ (respectively).  Since
  $\Gamma\cup\{\varphi\}\subseteq L(R)$, one may verify that $\pi$ is
  an $L(R)$-proof of $\varphi$ from $\Gamma$ (i.e., formulas in
  $\Gamma\subseteq L(R)$ are left unchanged, $\QRBB$-theorems in
  $L(R')$ are mapped to $\QRBB$-theorems in $L(R)$, formulas in
  $L(R')$ obtained via (MP) in $\pi'$ are mapped to formulas in $L(R)$
  obtained via (MP) in $\pi$, and $\varphi\in L(R)$ is left
  unchanged).  Hence $\Gamma\vdash_R\varphi$.
\end{proof}

\begin{lemma}[Deduction]
  \label{lemma:deduction}
  For each $R$, we have:
  \[
  \Gamma\cup\{\varphi\}\vdash_R\psi
  \quad\text{iff}\quad
  \Gamma\vdash_R\varphi\to\psi\enspace.
  \]
\end{lemma}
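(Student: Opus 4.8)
The plan is to prove the standard deduction theorem, exploiting the key structural fact that an $L(R)$-proof from a set of hypotheses uses only the rule (MP) together with hypotheses and $\QRBB$-theorems: the rules (RN), (E), and (Gen) are never applied to hypotheses inside such a proof, since they are absorbed into the notion ``$\QRBB\vdash$'' of theoremhood. This is precisely why the deduction theorem holds here without any side conditions.

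For the right-to-left direction, suppose $\Gamma\vdash_R\varphi\to\psi$, witnessed by some $L(R)$-proof from $\Gamma$. Extending this proof by the formula $\varphi$ (available since $\varphi\in\Gamma\cup\{\varphi\}$) and then by $\psi$ (obtained by (MP) from $\varphi\to\psi$ and $\varphi$) yields an $L(R)$-proof of $\psi$ from $\Gamma\cup\{\varphi\}$, so $\Gamma\cup\{\varphi\}\vdash_R\psi$.

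For the left-to-right direction, suppose $\Gamma\cup\{\varphi\}\vdash_R\psi$, say via the $L(R)$-proof $\may{\chi_1,\dots,\chi_n}$ with $\chi_n=\psi$. I would show by induction on $i$ that $\Gamma\vdash_R\varphi\to\chi_i$. If $\chi_i=\varphi$, then $\varphi\to\chi_i$ is the tautology $\varphi\to\varphi$, hence a $\QRBB$-theorem by (CL), so $\vdash_R\varphi\to\chi_i$ and a fortiori $\Gamma\vdash_R\varphi\to\chi_i$. If $\chi_i\in\Gamma$ or $\QRBB\vdash\chi_i$, then $\Gamma\vdash_R\chi_i$ (using, in the latter case, that $\vdash_R\chi_i$ iff $\QRBB\vdash\chi_i$, from the Simple Lemma); combining this with the (CL)-theorem $\chi_i\to(\varphi\to\chi_i)$ via (MP) gives $\Gamma\vdash_R\varphi\to\chi_i$. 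If $\chi_i$ arises by (MP) from earlier formulas $\chi_j$ and $\chi_k=\chi_j\to\chi_i$, then the induction hypothesis supplies $\Gamma\vdash_R\varphi\to\chi_j$ and $\Gamma\vdash_R\varphi\to(\chi_j\to\chi_i)$; using the propositional tautology $(\varphi\to(\chi_j\to\chi_i))\to((\varphi\to\chi_j)\to(\varphi\to\chi_i))$, which is a $\QRBB$-theorem by (CL), together with two applications of (MP), I obtain $\Gamma\vdash_R\varphi\to\chi_i$. Taking $i=n$ yields $\Gamma\vdash_R\varphi\to\psi$, as required.

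The argument is entirely routine; the only point that requires care is the observation made at the outset, namely that the definition of an $L(R)$-proof from a hypothesis set does not permit (RN), (E), or (Gen) to act on the hypotheses, so no side condition (such as restricting $\varphi$ to be closed, or forbidding application of (Gen) to a variable free in $\varphi$) is needed. I do not expect any genuine obstacle.
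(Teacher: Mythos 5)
Your proof is correct and follows essentially the same route as the paper's: the right-to-left direction by appending $\varphi$ and an (MP) step, and the left-to-right direction by induction on the position in the $L(R)$-proof, using the tautologies $\chi_i\to(\varphi\to\chi_i)$ and $(\varphi\to(\chi_k\to\chi_i))\to((\varphi\to\chi_k)\to(\varphi\to\chi_i))$. Your opening observation about why no side conditions are needed matches the remark the paper makes immediately after the lemma.
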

\begin{proof}
  The right-to-left direction is easy, so we only address the
  left-to-right direction. Proceeding, assume
  $\Gamma\cup\{\varphi\}\vdash_R\psi$, which implies there exists an
  $L(R)$-proof $\may{\chi_1,\dots,\chi_n}$ of $\psi$ from
  $\Gamma\cup\{\varphi\}$. It suffices for us to prove by induction on
  $i\leq n$ that $\Gamma\vdash_R\varphi\to\chi_i$.

  In the base case, $i=1$ and therefore either $\vdash\chi_i$ or
  $\chi_i\in\Gamma\cup\{\varphi\}$.  If $\vdash\chi_i$, then
  $\vdash\varphi\to\chi_i$ and therefore
  $\Gamma\vdash_R\varphi\to\chi_i$.  If
  $\chi_i\in\Gamma\cup\{\varphi\}$, then either $\chi_i\in\Gamma$ or
  $\chi_i=\varphi$.  If $\chi_i=\varphi$, then since
  $\vdash\varphi\to\varphi$ by (CL), we have
  $\Gamma\vdash_R\varphi\to\varphi$.  So suppose $\chi_i\in\Gamma$.
  Hence $\Gamma\vdash_R\chi_i$. Since for
  $\epsilon_i:=\chi_i\to(\varphi\to\chi_i)$ we have $\vdash\epsilon_i$
  by (CL), we have $\Gamma\vdash_R\epsilon_i$ and therefore
  $\Gamma\vdash_R\varphi\to\chi_i$.

  For the induction step ($i>1$), we have that $\vdash\chi_i$, that
  $\chi_i\in\Gamma\cup\{\varphi\}$, or that $\chi_i$ follows by (MP)
  from $\chi_k$ and $\chi_k\to\chi_i$ appearing earlier in the
  $L(R)$-proof.  The argument for the first two possibilities is as in
  the base case.  So assume the third possibility obtains.  By the
  induction hypothesis, we have $\Gamma\vdash_R\varphi\to\chi_k$ and
  $\Gamma\vdash_R\varphi\to(\chi_k\to\chi_i)$ Let $\theta_i$ be the
  classical tautology
  \[
  \theta_i:= (\varphi\to(\chi_k\to\chi_i))\to
  ((\varphi\to\chi_k)\to(\varphi\to\chi_i))\enspace.
  \]
  We have $\vdash\theta_i$ by (CL) and hence $\Gamma\vdash_R\theta_i$.
  But then $\Gamma\vdash_R\varphi\to\chi_i$ by (MP).
\end{proof}

We remark that the version of Lemma~\ref{lemma:deduction} for the
$\QRBB$ consequence relation $\vdash$ does not hold in general.  For
example, we have $r\col\varphi\vdash(\forall r)(r\col\varphi)$ and yet
$\nvdash r\col\varphi\to(\forall r)(r\col\varphi)$.  As another
example, we have $p\vdash r\col p$ and yet $\nvdash p\to r\col p$.
Lemma~\ref{lemma:deduction} does not fail in similar ways because the
consequence relation given by the $R$-specific turnstile $\vdash_R$
gives rise to a notion of proof (i.e., the $L(R)$-proof) that forbids
the direct use of any $\QRBB$ rule other than (MP).

\begin{lemma}[Fresh Variable]
  \label{lemma:fresh-variable}
  If $s\in R$ does not occur in any formula in
  $\Gamma\cup\{\varphi\}\subseteq L(R)$, then, letting
  $\Gamma[s/r]:=\{\chi[s/r]\mid \chi\in\Gamma\}$, we have:
  \begin{enumerate}
  \item \label{lemma:fresh-variable:a} $\vdash\varphi$ iff
    $\vdash\varphi[s/r]$, and

  \item \label{lemma:fresh-variable:b} $\Gamma\vdash_R\varphi$ iff
    $\Gamma[s/r]\vdash_R\varphi[s/r]$.
  \end{enumerate}
\end{lemma}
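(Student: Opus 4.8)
The plan is to peel the statement in two steps: first reduce Item~\ref{lemma:fresh-variable:b} to Item~\ref{lemma:fresh-variable:a}, and then reduce Item~\ref{lemma:fresh-variable:a} to the ``one occurrence-free symbol'' special case $(\star)$: \emph{if $\vdash\psi$ and the reason $b$ has no occurrence in $\psi$, then $\vdash\psi[b/a]$}.

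\textbf{Reducing Item~\ref{lemma:fresh-variable:b} to Item~\ref{lemma:fresh-variable:a}.} Using the Simple Lemma (compactness of $\vdash_R$, and $\vdash_R\chi$ iff $\QRBB\vdash\chi$) and the Deduction Lemma (Lemma~\ref{lemma:deduction}), $\Gamma\vdash_R\varphi$ holds iff either $\QRBB\vdash\varphi$ or $\QRBB\vdash(\bigwedge\Gamma')\imp\varphi$ for some finite nonempty $\Gamma'\subseteq\Gamma$; and similarly with $\Gamma,\varphi$ replaced by $\Gamma[s/r],\varphi[s/r]$. Since substitution commutes with the Boolean connectives, $((\bigwedge\Gamma')\imp\varphi)[s/r]=(\bigwedge\Gamma'[s/r])\imp\varphi[s/r]$, and every finite subset of $\Gamma[s/r]$ has the form $\Gamma'[s/r]$ for some finite $\Gamma'\subseteq\Gamma$. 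As $s$ occurs in none of the formulas $(\bigwedge\Gamma')\imp\varphi$ (nor in $\varphi$) by hypothesis, Item~\ref{lemma:fresh-variable:a} applied to each of them yields Item~\ref{lemma:fresh-variable:b}.

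\textbf{Reducing Item~\ref{lemma:fresh-variable:a} to $(\star)$.} The direction $\vdash\varphi\Imp\vdash\varphi[s/r]$ is $(\star)$ with $a:=r$, $b:=s$. For the converse, first rename the bound occurrences of $r$ in $\varphi$ to fresh reasons, producing $\varphi^\flat$ with no bound $r$ and with $\vdash\varphi\iff\varphi^\flat$; this uses the Renaming Rule (Lemma~\ref{lemma:rules}) together with a routine congruence fact (replacing a subformula by a provably equivalent one preserves provable equivalence, since Boolean connectives are congruential classically, ``$\,r\col$'' by (RN) and (RK), $B$ by (E), and $(\forall r)$ by the Equivalence Rule). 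For the same reason $\vdash\varphi[s/r]\iff\varphi^\flat[s/r]$. Now $\varphi^\flat[s/r]$ contains no occurrence of $r$ at all, so $(\star)$ with $a:=s$, $b:=r$ gives $\vdash(\varphi^\flat[s/r])[r/s]$; since $s$ does not occur in $\varphi^\flat$ and no capture occurs, $(\varphi^\flat[s/r])[r/s]=\varphi^\flat$, whence $\vdash\varphi^\flat$ and therefore $\vdash\varphi$.

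\textbf{Proving $(\star)$, and the main obstacle.} Assume $\vdash\psi$, $b\notin\psi$, and (the only nontrivial case) $b\neq a$. Fix a $\QRBB$-derivation of $\psi$ and uniformly relabel $b$ in it by a reason occurring nowhere in the derivation (available since $R$ is countably infinite); this is still a derivation of $\psi$ and in it $b$ occurs in no line. Then show $\vdash\psi_i[b/a]$ by induction on the lines $\psi_i$. The cases (CL), (RK), (A), (RB), (D), (EP), (MP), (RN), (E) are routine, since $[b/a]$ preserves the schematic shape of these axioms and rules. The case (EN) is exactly where the freshness of $b$ bites: $[b/a]$ sends an instance $\lnot(c=d)$ with $c,d$ distinct to $\lnot(c[b/a]=d[b/a])$, and $c[b/a]\neq d[b/a]$ \emph{because} $b$ does not occur in $\lnot(c=d)$ --- the weaker hypothesis ``$b$ is free for $a$'' would not do, as the theorem $\lnot(a=c)$ together with the substitution $[c/a]$ shows. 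The bulk of the work is the quantifier cases (UI), (UD), and (Gen): using standard substitution-composition identities together with the ``free for''/``not free in'' side conditions and $b\notin\psi_i$, one checks that $[b/a]$ carries an instance of (UI) (resp. (UD)) to an instance of (UI) (resp. (UD)), and a (Gen)-step $\chi/(\forall c)\chi$ to the (Gen)-step $\chi[b/a]/((\forall c)\chi)[b/a]$. This last batch of verifications --- tracking the movement of free and bound occurrences under composed substitutions --- is the principal obstacle; the rest is bookkeeping.
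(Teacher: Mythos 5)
Your proposal is correct, and on the heart of the lemma --- Item~\ref{lemma:fresh-variable:a} --- it takes the same route the paper does, namely induction on the length of $\QRBB$-derivations; the paper's proof is only a two-sentence sketch, and your write-up supplies precisely the details that sketch suppresses: the preliminary uniform relabelling of $b$ throughout a chosen derivation so that every line is $b$-free (without which the (EN) case genuinely fails, as your $\lnot(a=c)$ example shows), and the $\alpha$-renaming detour through $\varphi^\flat$ for the converse direction, which is needed because $[s/r]$ replaces only free occurrences and so $r$ may survive bound in $\varphi[s/r]$. The one place you diverge is Item~\ref{lemma:fresh-variable:b}: the paper proves it by a direct induction on the length of $L(R)$-proofs, invoking Item~\ref{lemma:fresh-variable:a} at the theorem lines, whereas you reduce it to Item~\ref{lemma:fresh-variable:a} via compactness and the Deduction Lemma (Lemma~\ref{lemma:deduction}), using that $[s/r]$ commutes with the Boolean connectives and that every finite subset of $\Gamma[s/r]$ is of the form $\Gamma'[s/r]$. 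Both routes work; yours is slightly slicker but leans on Lemma~\ref{lemma:deduction}, while the paper's induction is self-contained. One small caution: your appeal to a reason ``occurring nowhere in the derivation'' requires $R$ to be infinite, which the lemma's statement does not announce but which the paper assumes globally for the completeness argument in which this lemma is used, so the appeal is consistent with the paper's intent.
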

\begin{proof}
  \ref{lemma:fresh-variable:a} follows by induction on the length of
  $\QRBB$ derivations.  \ref{lemma:fresh-variable:b} follows by
  induction on the length of $L(R)$-proofs and makes use of
  \ref{lemma:fresh-variable:a}.
\end{proof}

\begin{lemma}[Conservativity]
  \label{lemma:conservativity}
  Assume $R$ is at least countably infinite.  If $T$ is a theory in
  $L(R)$, then $T^*$ is a conservative extension of $T$.
\end{lemma}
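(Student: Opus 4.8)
The plan is to run the standard Henkin-style ``adding witnesses is conservative'' argument (cf.\ the treatment in \cite[\S3.1]{vDal94:Book}) inside the $\vdash_R$-calculus assembled in Lemmas~\ref{lemma:rules}--\ref{lemma:fresh-variable}. One inclusion is trivial: $T\subseteq T\cup H(R)\subseteq T^*$ and $T\subseteq L(R)$, so $T\subseteq T^*\cap L(R)$. Hence the whole content is the reverse inclusion. Fix $\varphi\in L(R)$ with $\varphi\in T^*$. By the Simple Lemma, $T\cup H(R)\vdash_{R^*}\varphi$, and by its compactness clause only finitely many Henkin axioms are needed, so $T\cup\{h_1,\dots,h_k\}\vdash_{R^*}\varphi$ for some $h_1,\dots,h_k\in H(R)$; discarding duplicates, we may assume the $h_i$ are distinct, so they involve distinct Henkin constants $c_1,\dots,c_k\in R^*-R$, with $h_i=\bigl(\lnot(\forall r)\psi_i\to\lnot\psi_i[c_i/r]\bigr)$ and $\lnot(\forall r)\psi_i\in L(R)$ closed. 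In particular $c_i$ occurs in $h_i$ only in the displayed spot, and no $c_i$ occurs in $T$, in $\varphi$, or in any $h_j$ with $j\neq i$.

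The core is a single-axiom peeling step, proved by induction on $k$: if $\Gamma\cup\{h\}\vdash_{R^*}\varphi$ where $h=\bigl(\lnot(\forall r)\psi\to\lnot\psi[c/r]\bigr)$ and the symbol $c\in R^*$ occurs in none of $\Gamma$, $\varphi$, $\psi$, then $\Gamma\vdash_{R^*}\varphi$. To see this, first apply the Deduction Lemma (Lemma~\ref{lemma:deduction}) to get $\Gamma\vdash_{R^*}h\to\varphi$. Since $c$ occurs in no formula of $\Gamma$, generalization on $c$ is legitimate for $\vdash_{R^*}$, i.e.\ $\Gamma\vdash_{R^*}(\forall c)(h\to\varphi)$; I would record this ``Generalization Lemma'' separately and derive it from Lemma~\ref{lemma:rules} by induction on the length of the $L(R^*)$-proof (the (MP) step uses Distributivity, and the axiom/assumption steps use (UD), (Gen) applied to tautologies, and (CL)). It then suffices to observe $\vdash(\forall c)(h\to\varphi)\to\varphi$: by item~\ref{lemma:quantifiers:a} of Lemma~\ref{lemma:quantifiers} (applicable since $c$ is not free in $\varphi$) we have $\vdash(\forall c)(h\to\varphi)\to\bigl((\exists c)h\to\varphi\bigr)$, while $\vdash(\exists c)h$ because, using the Renaming Rule of Lemma~\ref{lemma:rules} (recall $c$ does not occur in $\psi$) to identify $(\forall c)\psi[c/r]$ with $(\forall r)\psi$, together with the routine ``pull the closed conjunct $\lnot(\forall r)\psi$ out of the quantifier'' fact (also a consequence of Lemma~\ref{lemma:rules} and the quantifier axioms), the formula $(\exists c)h$ is provably equivalent to the tautology $(\forall r)\psi\lor\lnot(\forall r)\psi$. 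Two applications of (MP) give $\Gamma\vdash_{R^*}\varphi$.

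Iterating this step $k$ times --- at each stage the constant being peeled is fresh for everything remaining, by the arrangement above --- reduces $T\cup\{h_1,\dots,h_k\}\vdash_{R^*}\varphi$ to $T\vdash_{R^*}\varphi$. Finally, since $T\cup\{\varphi\}\subseteq L(R)$ and $R\subseteq R^*$ with $R$ (hence $R^*$) at least countably infinite, the Constants Lemma (Lemma~\ref{lemma:constants}) yields $T\vdash_R\varphi$, whence $\varphi\in T_R(T)=T$ because $T$ is a theory. This gives $T^*\cap L(R)\subseteq T$, so $T^*\cap L(R)=T$, i.e.\ $T^*$ is a conservative extension of $T$. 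I expect the main obstacle to be the Generalization Lemma for $\vdash_R$: as flagged in the remark after Lemma~\ref{lemma:deduction}, an $L(R)$-proof may not invoke (Gen) on an assumption-dependent line, so one cannot simply append a (Gen) step and must instead push $(\forall c)$ through the whole proof via Distributivity; everything else is fresh-constant bookkeeping and elementary quantifier manipulation already packaged in Lemmas~\ref{lemma:rules} and~\ref{lemma:quantifiers}.
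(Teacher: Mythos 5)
Your proof is correct and follows the same skeleton as the paper's: reduce to finitely many Henkin axioms by compactness, peel them off one at a time using freshness of the associated constants, discharge the (now quantified) Henkin axiom via the Renaming Rule, and finish with Lemma~\ref{lemma:constants}. The one genuine divergence is how generalization under hypotheses is carried out. You introduce a separate ``generalization on fresh constants'' lemma for $\vdash_{R^*}$ (from $\Gamma\vdash_{R^*}\chi$ with $c$ absent from $\Gamma$, infer $\Gamma\vdash_{R^*}(\forall c)\chi$), proved by induction on the length of the $L(R^*)$-proof; you are right that this is the delicate point, since $L(R)$-proofs cannot invoke (Gen) on assumption-dependent lines, and your Distributivity-based induction does go through. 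The paper sidesteps the need for any such lemma: it uses compactness and Lemma~\ref{lemma:deduction} to collapse the finitely many relevant hypotheses into a single conjunction, so that the formula to be generalized is an outright $\QRBB$-theorem to which (Gen) applies, and then (UD) pushes the quantifier back past that conjunction --- more economical given the lemmas already on hand, whereas your route is more modular and reusable. Two smaller differences are harmless: the paper first trades the Henkin constant $r_\psi\in R^*-R$ for a fresh $s\in R$ via Lemma~\ref{lemma:fresh-variable} before quantifying, while you quantify over the Henkin constant itself, which is fine because quantification in $L(R^*)$ ranges over all of $R^*$ and the final appeal to Lemma~\ref{lemma:constants} only needs $T\cup\{\varphi\}\subseteq L(R)$; and your endgame ($\vdash(\exists c)h$ followed by Item~\ref{lemma:quantifiers:a} of Lemma~\ref{lemma:quantifiers}) is dual to the paper's (massaging $(\forall s)(h'\to\varphi)$ until its antecedent is an instance of the Renaming Rule). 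For $\vdash(\exists c)h$ you in fact only need the easy direction of your ``pull the closed conjunct out'' equivalence, namely $(\forall c)\lnot h\to(\lnot(\forall r)\psi\land(\forall c)\psi[c/r])$, which follows from the Distribution Rule and (UI), so no hidden witness-existence issue arises there.
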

\begin{proof} 
  We prove that for each $\varphi\in L(R)$, we have
  $T^*\vdash_{R^*}\varphi$ iff $T\vdash_R\varphi$.  The right-to-left
  direction is immediate, so we only need prove the left-to-right
  direction.  Proceeding, if $\varphi\in L(R)$, then we have
  $T^*\vdash_{R^*}\varphi$ iff there exists a finite set $H\subseteq
  H(R)$ of Henkin axioms satisfying $T,H\vdash_{R^*}\varphi$.  So it
  suffices for us to prove by induction on the finite cardinality of
  $H\subseteq H(R)$ that $T,H\vdash_{R^*}\varphi$ implies
  $T\vdash_R\varphi$.  The base case (where $H=\emptyset$) follows by
  Lemma~\ref{lemma:constants}, so we proceed directly to the induction
  step.  That is, we assume that the result holds for $H\subseteq
  H(R)$ having $|H|=n$ and we prove the result holds for $H\subseteq
  H(R)$ having $|H|=n+1$.  Proceeding, take $H\subseteq H(R)$
  satisfying $|H|=n+1$, choose a Henkin axiom $h\in H$ with
  \[
  h=\lnot(\forall r)\psi\to\lnot\psi[r_\psi/r]\enspace,
  \]
  and define $H':=H-\{h\}$ so that $H=H'\cup\{h\}$ and $|H'|=n$.  Now
  assume $T,H\vdash_{R^*}\varphi$ with $\varphi\in L(R)$, and hence
  $T,H',h\vdash_{R^*}\varphi$.  It follows that there is a finite
  $T'\subseteq T$ such that $T',H',h\vdash_{R^*}\varphi$.  Let
  $s\in R$ be a variable not occuring in any formula in the finite set
  $T'\cup H'\cup\{h,\varphi\}$.  Such $s$ exists because $R$ is at
  least countably infinite. Define
  \[
  h':=\lnot(\forall r)\psi\to\lnot\psi[s/r]\enspace.
  \]
  Then, omitting mention of instances of classical reasoning and the
  use of the Simple Lemma in the last six lines of the derivation, we
  have:
  \begin{center}
  \renewcommand{\arraystretch}{1.3}\begin{tabular}{lll}
    &
    $T',H',h\vdash_{R^*}\varphi$
    & (derived above)
    \\ &
    $T',H',h'\vdash_{R^*}\varphi$
    & Lemma~\ref{lemma:fresh-variable}
    \\ &
    $\vdash_{R^*}\bigwedge(T'\cup H')\to(h'\to\varphi)$
    & Lemma~\ref{lemma:deduction}
    \\ &
    $\vdash\bigwedge(T'\cup H')\to(h'\to\varphi)$
    & Simple Lemma
    \\ &
    $\vdash(\forall s)(\bigwedge(T'\cup H')\to(h'\to\varphi))$
    & (Gen)
    \\ &
    $\vdash\bigwedge(T'\cup H')\to(\forall s)(h'\to\varphi)$
    & (UD), no $s$ in $T'\cup H'$
    \\ &
    $\vdash_{R^*}\bigwedge(T'\cup H')\to(\forall s)(h'\to\varphi)$
    & Simple Lemma
    \\ &
    $T',H'\vdash_{R^*}(\forall s)(h'\to\varphi)$
    & Lemma~\ref{lemma:deduction}
    \\ &
    $T',H'\vdash_{R^*}
    (\forall s)((\lnot(\forall r)\psi\to\lnot\psi[s/r])
    \to\varphi)$
    & write out $h'$
    \\ &
    $T',H'\vdash_{R^*}
    (\lnot(\forall r)\psi\to(\exists s)\lnot\psi[s/r])
    \to\varphi$
    & Lem.~\ref{lemma:quantifiers}, no $s$  in $\varphi$ or $\lnot(\forall r)\psi$
    \\ &
    $T',H'\vdash_{R^*}
    (\lnot(\forall r)\psi\to\lnot(\forall s)\lnot\lnot\psi[s/r])
    \to\varphi$
    & definition of $\exists$
    \\ &
    $T',H'\vdash_{R^*}
    (\lnot(\forall r)\psi\to\lnot(\forall s)\psi[s/r])
    \to\varphi$
    & Equivalence (Lemma~\ref{lemma:rules})
    \\ &
    $T',H'\vdash_{R^*}\varphi$
    & Renaming (Lemma~\ref{lemma:rules})
  \end{tabular}
  \end{center}
  Therefore $T,H'\vdash_{R^*}\varphi$. Applying the induction
  hypothesis, $T\vdash_R\varphi$.
\end{proof}

Now, given a theory $T$ in $L(R)$, define:
\[
\begin{array}{lcl}
  T_0 &=& T \\
  T_{i+1} &=& (T_i)^* \text{ for } i\in\omega \\
  T_\omega &=& \textstyle\bigcup_{i\in\omega} T_i
\end{array}
\quad
\begin{array}{lcl}
  R_0 &=& R \\
  R_{i+1} &=& (R_i)^* \text{ for } i\in\omega \\
  R_\omega &=& \textstyle\bigcup_{i\in\omega} R_i
\end{array}
\]

\begin{lemma}[Henkin]
  \label{lemma:henkin}
  Let $R$ be at least countably infinite and $T$ be a theory in
  $L(R)$.  Then $T_\omega$ is a Henkin theory that is a conservative
  extension of $T$.
\end{lemma}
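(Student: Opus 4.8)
The plan is to verify in turn the three things the statement bundles together: that $T_\omega$ is a theory in $L(R_\omega)$, that it is Henkin, and that $T_\omega\cap L(R)=T$. The single observation that makes everything go through is that each formula of $L(R_\omega)$ mentions only finitely many reason symbols, hence lies in $L(R_i)$ for some finite $i$; combined with the monotonicity of the two chains (since $T_{i+1}=(T_i)^*=T_{R_i^*}(T_i\cup H(R_i))$ is generated by a superset of $T_i$, we get $T_i\subseteq T_{i+1}$, and likewise $R_i\subseteq R_{i+1}=R_i^*$, so $L(R_i)\subseteq L(R_{i+1})$), this reduces every assertion about $T_\omega$ to an assertion about some stage $T_i$.

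First I would record the bookkeeping facts. Each $R_i$ is at least countably infinite: $R_0=R$ is, and passing from $R_i$ to $R_i^*$ only adjoins countably many Henkin constants to a countably infinite set. Each $T_i$ is a theory in $L(R_i)$: $T_0=T$ is by hypothesis, and $(T_i)^*$ is a "theory generated by," hence a theory, in $L(R_i^*)=L(R_{i+1})$. Consequently $T_\omega$ is a theory in $L(R_\omega)$: if $\varphi\in L(R_\omega)$ and $\vdash\varphi$, then $\varphi\in L(R_i)$ for some $i$, so $\varphi\in T_i\subseteq T_\omega$ by closure of $T_i$ under theorems; and if $\varphi\to\psi\in T_\omega$ and $\varphi\in T_\omega$, then by monotonicity both lie in a common $T_i$, whence $\psi\in T_i\subseteq T_\omega$.

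For the Henkin property, let $\lnot(\forall r)\varphi\in L(R_\omega)$ be closed; it lies in $L(R_i)$ for some $i$. By the construction of $(T_i)^*$ there is a witness $r_\varphi\in R_i^*\subseteq R_\omega$, and the corresponding Henkin axiom $\lnot(\forall r)\varphi\to\lnot\varphi[r_\varphi/r]$ belongs to $H(R_i)$, hence to $T_{i+1}=T_{R_i^*}(T_i\cup H(R_i))\subseteq T_\omega$. So $T_\omega$ supplies a witness for every closed negated universal.

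Finally, conservativity. One inclusion is immediate since $T=T_0\subseteq T_\omega$ and $T\subseteq L(R)$. For the other, take $\varphi\in T_\omega\cap L(R)$, so $\varphi\in T_i$ for some $i$. Because every $R_j$ is at least countably infinite, Lemma~\ref{lemma:conservativity} applies at each stage, giving $T_{j+1}\cap L(R_j)=T_j$. Since $\varphi\in L(R)=L(R_0)\subseteq L(R_j)$ for all $j$, a finite downward induction peels stages off one at a time: $\varphi\in T_i\cap L(R_{i-1})=T_{i-1}$, then $\varphi\in T_{i-1}\cap L(R_{i-2})=T_{i-2}$, and so on down to $\varphi\in T_0=T$. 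I expect the only delicate point to be the bookkeeping — confirming that countable infinitude is preserved along the chain so that Lemma~\ref{lemma:conservativity} is legitimately invoked at every step, and using the "finitely many reasons per formula" fact to pin every relevant formula inside some finite stage; the rest is routine.
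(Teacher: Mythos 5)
Your proof is correct and follows essentially the same route as the paper's: the Henkin property is obtained by locating each closed $\lnot(\forall r)\varphi$ in some finite stage $L(R_i)$ and finding its witness in $T_{i+1}$, and conservativity is obtained by iterating Lemma~\ref{lemma:conservativity} stage by stage (your downward peeling is just the paper's upward induction $T_i\cap L(R)=T$ read in reverse). The extra bookkeeping you include --- that each $R_i$ stays at least countably infinite and that $T_\omega$ is itself a theory --- is left tacit in the paper but is a harmless and correct addition.
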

\begin{proof}
  Take a closed $\lnot(\forall r)\varphi\in L(R_\omega)$.  Then there
  exists $i\in\omega$ such that $\lnot(\forall r)\varphi\in L(R_i)$.
  But then there is a witness $r_\varphi\in L(R_{i+1})\subseteq
  L(R_\omega)$ such that
  \[
  \lnot(\forall r)\varphi\to\lnot\varphi[r_\varphi/r]\in
  T_{i+1}\subseteq T_\omega\enspace.
  \]
  So $T_\omega$ is a Henkin theory.

  By induction on $i\in\omega$, we prove that $T_i$ is a conservative
  extension of $T$.  Base case: $T_0=T$ and the result is immediate.
  Induction step: $T_{i+1}$ is a conservative extension of $T_i$ by
  Lemma~\ref{lemma:conservativity}; that is, $T_{i+1}\cap L(R_i)=T_i$.
  By the induction hypothesis, $T_i\cap L(R)=T$.  But then, since
  $L(R)\subseteq L(R_j)\subseteq L(R_k)$ if $j<k$, we have
  \[
  T_{i+1}\cap L(R)=
  (T_{i+1}\cap L(R_i))\cap L(R)=
  T_i\cap L(R)=
  T\enspace.
  \]
  So each $T_i$ is a conservative extension of $T$. But then
  \[
  \textstyle
  T_\omega\cap L(R)=
  (\bigcup_{i\in\omega}T_i)\cap L(R)=
  \bigcup_{i\in\omega}(T_i\cap L(R))=
  T\enspace,
  \]
  which shows that $T_\omega$ is a conservative extension of $T$.
\end{proof}

By the usual Lindenbaum argument (using Zorn's Lemma)
\cite[\S3.1]{vDal94:Book}, for each $R$, any consistent set in $L(R)$
may be extended to a maximal $L(R)$-consistent set.  Hence for a
consistent theory $T$ in $L(R)$, the theory $T_\omega$ in
$L(R_\omega)$ is consistent and may be extended to a maximal
$L(R_\omega)$-consistent set $T_\omega'$.  This set is a theory in
$L(R_\omega)$. Further, this theory is Henkin because
$T_\omega\subseteq T_\omega'$, both theories are in the same language,
$T_\omega$ is Henkin by Lemma~\ref{lemma:henkin}, and any extension of
a Henkin theory within the same language is still Henkin (because all
Henkin axioms are already present).

To prove completeness of $\QRBB$, we take $\theta$ such that
$\QRBB\nvdash\theta$. We construct a structure $M_c=(W,[\cdot],N,V)$
as in the proof of Theorem~\ref{theorem:RBB-determinacy} except that
our set of worlds $W$ is defined differently.  First, let $M_0$ be the
set of all maximal $L(R)$-consistent sets; each such set is a theory
in $L(R)$.  For each theory $T\in M_0$, define $M_\omega(T)$ to be the
set of all maximal $L(R_\omega)$-consistent extensions of $T_\omega$.
As we have seen, each member of $M_\omega(T)$ is a maximal
$L(R_\omega)$-consistent Henkin theory that is conservative over $T$
(Lemma~\ref{lemma:henkin}).  Define the set
\[
\textstyle M := \bigcup_{T\in M_0}M_\omega(T)
\]
whose members are maximal $L(R_\omega)$-consistent extensions of
$T_\omega$ for each $T\in M_0$. It follows that $\{\lnot\theta\}$ can
be extended to a $T^\theta\in M_0$ and hence neither
$M_\omega(T^\theta)$ nor $M$ is empty. We define $W:=M\times\{1,2\}$
and write $(\Gamma,i)\in W$ in the abbreviated form $\Gamma_i$.  Since
$M$ is nonempty, $W$ is nonempty.  The remaining components of $M_c$
are defined as in the proof of Theorem~\ref{theorem:RBB-determinacy}
except that all language-specific aspects of definitions are extended
to the larger language $L(R_\omega)$.

We prove the \emph{Truth Lemma\/}: for each formula
$\varphi\in L(R_\omega)$ and world $\Gamma_i\in W$, we have
$\varphi\in\Gamma$ iff $M_c,\Gamma_i\models\varphi$.  The proof is by
induction on the construction of formulas.  The arguments for all but
two cases are as in the proof of
Theorem~\ref{theorem:RBB-determinacy}.  All that remains are the
equality and quantifier inductive step cases.
\begin{itemize}
\item Inductive step: $(s=r)\in\Gamma$ iff $M_c,\Gamma_i\models s=r$.
  
  By (EP) and (EN), we have $(s=r)\in\Gamma$ iff $s=r$.  But the
  latter holds iff $M_c,\Gamma_i\models s=r$.

\item Inductive step: $(\forall r)\varphi\in\Gamma$
  iff $M_c,\Gamma_i\models(\forall r)\varphi$.

  If $(\forall r)\varphi\in\Gamma$, then it follows by maximal
  $L(R_\omega)$-consistency and (UI) that $\varphi[s/r]\in\Gamma$ for
  each $s\in R_\omega$ that is free for $r$ in $\varphi$.  By the
  induction hypothesis, we have $M_c,\Gamma_i\models\varphi[s/r]$ for
  each such $s\in R$.  But this is what it means to have
  $M_c,\Gamma_i\models(\forall r)\varphi$.
  
  Conversely, if $M_c,\Gamma_i\models(\forall r)\varphi$, then it
  follows that $M_c,\Gamma_i\models\varphi[s/r]$ for all $s\in
  R_\omega$ free for $r$ in $\varphi$.  By the induction hypothesis,
  we have $\varphi[s/r]\in\Gamma$ for all such $s$.  Since $\Gamma$ is
  a Henkin theory, there is a Henkin constant $r_\varphi\in R_\omega$
  for $\lnot(\forall r)\varphi$.  Hence
  $\varphi[r_\varphi/r]\in\Gamma$.  But $\Gamma$ contains the Henkin
  axiom
  \[
  \lnot(\forall r)\varphi\to\lnot\varphi[r_\varphi/r]\enspace,
  \]
  and so we have by $L(R_\omega)$-consistency that $(\forall
  r)\varphi\in\Gamma$.
\end{itemize}
This completes the proof of the Truth Lemma.

The proof that $M_c$ satisfies (pr), (brk), (ba), (as), and (d) is as
in the proof of Theorem~\ref{theorem:RBB-determinacy}. So $M_c$ is
indeed a model (and not just a pre-model).

To complete the proof of completeness, we recall that we obtained
$T^\theta\in M_0$ as a maximal $L(R)$-consistent extension of
$\{\lnot\theta\}$.  Hence there exists
$\Gamma^\theta\in M_\omega(T^\theta)\subseteq M$. But $\Gamma^\theta$
is a maximal $L(R_\omega)$-consistent extension of
$(T^\theta)_\omega$, and $(T^\theta)_\omega$ is a conservative
extension of $T^\theta$ by Lemma~\ref{lemma:henkin}.  Therefore, since
$\theta\notin T^\theta$ by consistency, we have
$\theta\notin\Gamma^\theta$.  Applying the Truth Lemma,
$M_c,\Gamma^\theta_1\not\models\theta$.  Completeness follows.

\subsection{Conservativity of $\QRBB$ Over $\RBB$}

As a corollary of Theorems~\ref{theorem:RBB-determinacy} and
\ref{theorem:QRBB-determinacy}, we have the following.

\begin{corollary}
  \label{corollary:extension}
  $\QRBB$ is a conservative extension of $\RBB$: for each $\varphi\in
  F$,
  \[
  \QRBB\vdash\varphi
  \quad\text{iff}\quad
  \RBB\vdash\varphi\enspace.
  \]
\end{corollary}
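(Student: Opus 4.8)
The plan is to establish the two implications separately, leaning on the soundness and completeness theorems already proved.

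The direction ``$\RBB\vdash\varphi$ implies $\QRBB\vdash\varphi$'' is purely syntactic and I would dispatch it first: every axiom scheme of $\RBB$ — namely (CL), (RK), (A), (RB), (D) — is also an axiom scheme of $\QRBB$, and every rule of $\RBB$ — (MP), (RN), (E) — is also a rule of $\QRBB$. Hence any $\RBB$-derivation of $\varphi$ is already a $\QRBB$-derivation of $\varphi$. This works for every $\varphi\in F$ and needs no hypothesis on $R$.

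For the converse — ``$\QRBB\vdash\varphi$ implies $\RBB\vdash\varphi$'' for $\varphi\in F$ — I would go through the semantics. Assume $\QRBB\vdash\varphi$. By the soundness half of Theorem~\ref{theorem:QRBB-determinacy} (which holds regardless of the cardinality of $R$) we get $\models\varphi$ in the $\QRBB$ sense, i.e.\ $M\models\varphi$ for every model $M$. The key observation is that the models of $\QRBB$ are precisely the models of $\RBB$: the semantics of \S\ref{section:QRBB-semantics} merely adjoins satisfaction clauses for $r=s$ and for $(\forall r)\varphi$, leaving the clauses for propositional letters, $\lnot$, $\lor$, $r\col\varphi$, $r$, and $B\varphi$ exactly as in \S\ref{section:RBB-semantics}. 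Since $\varphi\in F$ contains no occurrence of $=$ and no quantifier, a routine induction on the structure of $\varphi$ shows that $M,w\models\varphi$ under the $\QRBB$-semantics iff $M,w\models\varphi$ under the $\RBB$-semantics, for every pointed model $(M,w)$. Consequently $\models\varphi$ in the $\RBB$ sense as well, and the completeness half of Theorem~\ref{theorem:RBB-determinacy} then yields $\RBB\vdash\varphi$.

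I do not anticipate a real obstacle; the only step meriting a moment's care is the verification that satisfaction of a quantifier-free formula does not depend on whether the ambient structure is read as an $\RBB$-model or as a $\QRBB$-model, and this is immediate because the two satisfaction definitions agree clause by clause on every connective that can occur in $F$. It is worth noting explicitly in the write-up that, although the \emph{completeness} half of Theorem~\ref{theorem:QRBB-determinacy} requires $R$ to be at least countably infinite, the argument above invokes only its \emph{soundness} half together with the \emph{completeness} half of Theorem~\ref{theorem:RBB-determinacy}, so the corollary itself carries no cardinality assumption on $R$.
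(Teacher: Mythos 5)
Your proposal is correct and follows essentially the same route as the paper: the right-to-left direction by inclusion of axioms and rules, and the left-to-right direction by combining the soundness half of Theorem~\ref{theorem:QRBB-determinacy} with the completeness half of Theorem~\ref{theorem:RBB-determinacy}. Your explicit check that the two satisfaction relations agree on quantifier-free formulas, and your observation that no cardinality assumption on $R$ is needed, are points the paper leaves implicit but are worth making.
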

\begin{proof}
  The right-to-left direction is obvious ($\QRBB$ contains all the
  axioms and rules of $\RBB$).  The left-to-right direction follows by
  $\QRBB$ soundness (Theorem~\ref{theorem:QRBB-determinacy}) and
  $\RBB$ completeness (Theorem~\ref{theorem:RBB-determinacy}).
\end{proof}

\subsection{$\QRBB_\sigma$ and $\QRBB_\sigma^+$ Soundness and
  Completeness}

Recalling the semantics for $\QRBB_\sigma$ and for $\QRBB_\sigma^+$
from \S\ref{section:QRBB-semantics}, we prove the following theorem.

\begin{theorem}[$\QRBB_\sigma$ and $\QRBB_\sigma^+$ Soundness and
  Completeness]
  \label{theorem:QRBBsigma-determinacy}
  Assume $R$ contains the symbol $\sigma$.
  \begin{itemize}
  \item $\QRBB_\sigma$ is sound: $\QRBB_\sigma\vdash\varphi$ implies
    $\models_\sigma\varphi$ for each $\varphi\in F^\forall$.

  \item if $R$ is at least countably infinite, then $\QRBB_\sigma$ is
    sound and complete: for each $\varphi\in F^\forall$,
    \[
    \QRBB_\sigma\vdash\varphi
    \quad\text{iff}\quad\models_\sigma\varphi\enspace.
    \]

  \item analogous soundness and completeness results hold for
    $\QRBB_\sigma^+$ with respect to the satisfaction relation
    $\models_\sigma^+$.
  \end{itemize}
\end{theorem}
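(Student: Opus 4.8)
The plan is to fuse the two completeness proofs already in hand — the Henkin-style argument for $\QRBB$ from Theorem~\ref{theorem:QRBB-determinacy} and the canonical-model verification of the master-reason conditions from Theorem~\ref{theorem:RBBsigma-determinacy} — since the present theorem is simply $\QRBB$ plus the $\sigma$-machinery. For soundness, induct on the length of a derivation in $\QRBB_\sigma$ (or $\QRBB_\sigma^+$): the cases of (CL), (MP), (RK), (A), (RB), (D), (RN), (E) are as in Theorem~\ref{theorem:RBB-determinacy}; (UD), (UI), (EP), (EN), (Gen) as in Theorem~\ref{theorem:QRBB-determinacy}; and (MA), (MB), (MR), together with (MT) in the $+$ case, as in the soundness portion of Theorem~\ref{theorem:RBBsigma-determinacy}, now read off $\models_\sigma$ (resp.\ $\models_\sigma^+$). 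One point requires care: the (Gen)-soundness argument rebuilds a model by reinterpreting an accessibility relation $[r]$, and for this to preserve (ma), (mb), (mr) (and (mt)) one must keep $\sigma$ outside the range of the quantifier rules — i.e.\ treat $\sigma$ as a distinguished constant, not a bindable variable. This is in fact needed for soundness itself: were $\sigma$ bindable, applying (Gen) to (MB) and then (UI) would yield $Bs$ for every reason $s$, which $\models_\sigma$ does not validate. Under the constant convention every case goes through verbatim.

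For completeness, assume $\QRBB_\sigma\nvdash\theta$ and replay the construction of Theorem~\ref{theorem:QRBB-determinacy} with all derivations taken relative to $\QRBB_\sigma$. Build the Henkin towers $R=R_0\subseteq R_1\subseteq\cdots$ and $T=T_0\subseteq T_1\subseteq\cdots$; note that $\sigma\in R$ is carried along untouched, since the new symbols introduced at each stage are Henkin witnesses for universal formulas, never $\sigma$. The supporting lemmas — Fresh Variable (Lemma~\ref{lemma:fresh-variable}) and Conservativity (Lemma~\ref{lemma:conservativity}) — still go through: their proofs run by induction on derivation length, and the extra $\sigma$-axiom instances are quantifier-free, so they pass through the renaming and generalization steps trivially (and again $\sigma$ is never itself renamed). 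Form the maximal $L(R_\omega)$-consistent Henkin extensions, set $W:=M\times\{1,2\}$, and define $[r]$, $N^+$, $N$, $V$ exactly as before. The Truth Lemma is then proved by induction on formula structure precisely as in Theorem~\ref{theorem:QRBB-determinacy}, the Boolean, reason, $B$, equality and quantifier steps being unchanged.

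What is genuinely new is verifying that this canonical $M_c$ satisfies the extra frame conditions, and here one imports the canonical-model arguments from Theorem~\ref{theorem:RBBsigma-determinacy}: (ma) — from $\Gamma_i\in\sigma(\Gamma_i)$ and $r^\circ\in N(\Gamma_i)$ deduce $\sigma\in\Gamma$ and $Br\in\Gamma$, then $r\in\Gamma$ by (MA), then $\Gamma^r\subseteq\Gamma$ by (A), hence $\Gamma_i\in r(\Gamma_i)$; (mb) — $B\sigma\in\Gamma$ by (MB) gives $\sigma^\circ=W(\sigma)\in N(\Gamma_i)$; (mr) — using (MR) show $\Gamma^\sigma\subseteq\Delta$ implies $\Gamma^r\subseteq\Delta$, whence $\sigma(\Gamma_i)\subseteq r(\Gamma_i)$; and, for $\QRBB_\sigma^+$, (mt) — split on whether the neighbourhood is formula-definable, using (MT) in the definable case and the (mr) computation otherwise. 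These arguments only concern membership of quantifier-free formulas in maximal consistent sets, so the quantifiers and Henkin constants in $L(R_\omega)$ are immaterial. Finally, since $\Gamma^\theta$ is conservative over the maximal $L(R)$-consistent extension of $\{\lnot\theta\}$, we have $\theta\notin\Gamma^\theta$, so by the Truth Lemma $M_c,\Gamma^\theta_1\not\models_\sigma\theta$; the $\QRBB_\sigma^+$ case is identical with $\models_\sigma^+$ throughout.

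The main obstacle is not a single hard step but the bookkeeping of interleaving two independent constructions: the countably-infinite reservoir of reasons must simultaneously feed the Henkin tower and leave the distinguished symbol $\sigma$ fixed and unbound, so that the $\sigma$-axioms retain their intended force at every stage and the canonical-model verifications from Theorem~\ref{theorem:RBBsigma-determinacy} still apply in the enlarged language. Once that discipline is in place, each ingredient is a transcription of an argument already carried out for $\QRBB$ or for $\RBB_\sigma$/$\RBB_\sigma^+$.
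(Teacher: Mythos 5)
Your proposal follows exactly the paper's route: soundness of the $\sigma$-axioms is checked as for $\RBB_\sigma$ and $\RBB_\sigma^+$ in Theorem~\ref{theorem:RBBsigma-determinacy}, and completeness reruns the Henkin canonical-model construction of Theorem~\ref{theorem:QRBB-determinacy} with provability taken relative to $\QRBB_\sigma$ (resp.\ $\QRBB_\sigma^+$) and with the verification of (ma), (mb), (mr) (and (mt)) imported from Theorem~\ref{theorem:RBBsigma-determinacy}. Your additional observation --- that $\sigma$ must be treated as a distinguished, non-bindable constant, since otherwise (Gen) applied to (MB) followed by (UI) would yield $\vdash Bs$ for arbitrary $s$, which $\models_\sigma$ does not validate, and the model-surgery argument for (Gen) would destroy (mb) --- is a genuine point of care that the paper leaves implicit and is worth recording.
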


Soundness is proved as in Theorem~\ref{theorem:RBBsigma-determinacy}.
Completeness is proved as in Theorem~\ref{theorem:QRBB-determinacy},
except that provability is taken with respect to either $\QRBB_\sigma$
or $\QRBB_\sigma^+$ and one must show (using an argument as in the
completeness portion of Theorem~\ref{theorem:RBBsigma-determinacy})
that $M_c$ satisfies the relevant properties.


\bibliographystyle{plain}
\bibliography{emr-gettier}

\end{document}